\newif\ifapp
\newcommand{\lsp}{\hspace{0.1em}}
\let\newfloat\newfloat@ltx
\newtheorem{theorem}{Theorem}
\newtheorem{lemma}{Lemma}
\newtheorem{proposition}{Proposition}
\newtheorem{corollary}{Corollary}
\theoremstyle{plain}
\newcommand{\Null}{\operatorname{Null}}
\newcommand{\rank}{\operatorname{rank}}
\newcommand{\tr}{\operatorname{tr}}
\newcommand{\supp}{\operatorname{supp}}
\newcommand{\range}{\operatorname{range}}
\newcommand{\imply}{\mathrel{\Rightarrow}}
\newcommand{\rms}{\mathrm{s}}
\newcommand{\caH}{\mathcal{H}}
\newcommand{\caM}{\mathcal{M}}
\newcommand{\caP}{\mathcal{P}}
\newcommand{\caV}{\mathcal{V}}
\newcommand{\scrA}{\mathscr{A}}
\newcommand{\scrB}{\mathscr{B}}
\newcommand{\scrC}{\mathscr{C}}
\newcommand{\scrE}{\mathscr{E}}
\newcommand{\scrM}{\mathscr{M}}
\newcommand{\bbS}{\mathbb{S}}
\newcommand{\bPi}{\bar{\Pi}}
\newcommand{\bmp}{\bm{p}}
\newcommand{\bmq}{\bm{q}}
\newcommand{\bbone}{\mathbbm{1}}
\def\eqref#1{\textup{(\ref{#1})}}
\newcommand{\eref}[1]{Eq.~\textup{(\ref{#1})}}
\newcommand{\eqsref}[2]{Eqs.~(\ref{#1}) and (\ref{#2})}
\newcommand{\Eref}[1]{Equation~\textup{(\ref{#1})}}
\newcommand{\Eqsref}[2]{Equations~(\ref{#1}) and (\ref{#2})}
\newcommand{\lref}[1]{Lemma~\ref{#1}}
\newcommand{\lsref}[1]{Lemmas~\ref{#1}}
\newcommand{\thref}[1]{Theorem~\ref{#1}}
\newcommand{\Thref}[1]{Theorem~\ref{#1}}
\newcommand{\pref}[1]{Proposition~\ref{#1}}
\newcommand{\psref}[1]{Propositions~\ref{#1}}
\newcommand{\Pref}[1]{Proposition~\ref{#1}}
\newcommand{\Psref}[1]{Propositions~\ref{#1}}
\newcommand{\coref}[1]{Corollary~\ref{#1}}
\newcommand{\fref}[1]{Fig.~\ref{#1}}
\newcommand{\aref}[1]{Appendix~\ref{#1}}
\newcommand{\asref}[1]{Appendices~\ref{#1}}
\def\<{\langle}  
\def\>{\rangle}  
\newcommand{\rcite}[1]{Ref.~\cite{#1}}
\begin{document}
	\title{Optimal Quantum Measurements with respect to the Fidelity}
	\author{Datong Chen}
	\author{Huangjun Zhu}
	\email{zhuhuangjun@fudan.edu.cn}
	\affiliation{State Key Laboratory of Surface Physics, Department of Physics, and Center for Field Theory and Particle Physics, Fudan University, Shanghai 200433, China}
	
	\affiliation{Institute for Nanoelectronic Devices and Quantum Computing, Fudan University, Shanghai 200433, China}
	
	\affiliation{Shanghai Research Center for Quantum Sciences, Shanghai 201315, China}

	\date{\today}
	
\begin{abstract}
Fidelity is the standard measure for quantifying the similarity between two quantum states. It is equal to the square of the minimum Bhattacharyya coefficient between the probability distributions induced by quantum measurements on the two states. Though established  for over thirty years, the structure of fidelity-optimal quantum measurements  remains unclear  when the two density operators are singular (not invertible). Here we address this gap, with a focus on minimal optimal measurements, which admit no nontrivial coarse graining that is still optimal. We show that there exists either a unique minimal optimal measurement or infinitely many inequivalent choices. Moreover, the first case holds if and only if the two density operators satisfy a weak commutativity condition. In addition, we provide a complete characterization of all minimal optimal measurements when one state is pure, leveraging geometric insights from the Bloch-sphere representation. The connections with quantum incompatibility, operator pencils, and geometric means are highlighted.  
\end{abstract}

	\maketitle

\emph{Introduction}---Nonorthogonal quantum states cannot be perfectly distinguished; this fundamental fact is tied to the no-cloning theorem and has profound implications for quantum information processing. Constructing optimal quantum measurements for distinguishing such states is instrumental to many practical applications, including quantum communication, quantum cryptography, quantum metrology, and quantum computation. Not surprisingly, the optimal measurements depend on the specific measure employed to quantify the distinguishability or similarity between quantum states. Fidelity is one of the most popular similarity measures because of its simplicity and appealing properties \cite{Fuchs1996Measures,Fuchs1999Measures,Nielsen2010Quantum,Wilde2017Information, Liang2019Fidelity}. Notably, the fidelity between two pure states is equal to the transition probability, and the fidelity between two mixed states is equal to the maximum fidelity between their purifications thanks to Uhlmann and Jozsa \cite{Uhlmann1976Fidelity,Richard1994Fidelity}. Moreover,  the fidelity is equal to the square of the minimum Bhattacharyya coefficient \cite{Bhattacharyya1943Fidelity,Wootters1981Fidelity} between probability distributions induced by quantum measurements on the two states thanks to Fuchs and  Caves \cite{Fuchs1995distinguishability, Fuchs1996Measures}. Although these facts have been known for more than three decades,  the structure of optimal measurements is still quite elusive  when the two density operators are singular.

In this work, we clarify the structure of optimal quantum measurements for distinguishing two general quantum states with respect to the fidelity.  To this end, we introduce the concept of minimal optimal measurements as those optimal measurements that admit no nontrivial coarse graining preserving optimality. By virtue of connections with operator pencils \cite{Ikramov1993Pencils,Golub2013matrix} and geometric means \cite{Lawson2001GM, Bhatia2007GeoMean},
we demonstrate a sharp dichotomy: there exists either a unique minimal optimal measurement or infinitely many inequivalent choices. Notably, the first case holds if and only if (iff) the two density operators satisfy a weak commutativity condition (assuming that their sum is nonsingular): the projectors onto their supports commute, but the density operators  do not have to commute. In addition, we show that there exists a unique minimal optimal measurement that commutes with the projector onto the support of one density operator. Furthermore, we completely characterize all minimal optimal measurements when one quantum state is pure and provide geometric intuition via Bloch-sphere representation in the case of two pure states. In the course of study, we derive a number of results on operator pencils and geometric means, which are of independent interest.

\emph{Preliminaries}---Let $\caH$ be a $d$-dimensional Hilbert space, and denote by $\bbone$ the identity operator on $\caH$. Given a linear operator $A$ on $\caH$, denote by $\Null(A)$ the null space of $A$ and by  $\supp(A)$ the support of $A$, that is,  the orthogonal complement of $\Null(A)$. Quantum states on $\caH$ are usually described by density operators. Recall that a density operator  $\rho$ on $\caH$ is a positive  (semidefinite) operator of unit trace. Quantum measurements on $\caH$ can be characterized by positive operator-valued measures (POVMs) \cite{Nielsen2010Quantum,Neumann2018POVM} since we are not interested in post-measurement quantum states in this work. Recall that a POVM  $\scrE=\{E_m\}_m$  is composed of a collection of positive operators that sum up to the identity operator~$\bbone$,
that is, $E_m\geq 0$ for all $m$ and $\sum_m E_m=\bbone$.  A POVM is a projector-valued measure (PVM) if all its POVM elements are mutually orthogonal projectors.
If we perform  the POVM $\scrE=\{E_m\}_m$ on $\rho$, then the probability of obtaining outcome $m$ reads $p_m=\tr\left(E_m\rho\right)$. In this way, the POVM $\scrE$ induces a map from quantum states on $\caH$ to probability distributions. The probability distribution associated with $\rho$ and $\scrE$ is denoted by
\begin{equation}\label{Eq:MeasurementChannel}
\Lambda_\scrE(\rho):=\{\tr\left(E_m\rho\right)\}_m. 
\end{equation}

Next, we introduce an order relation on POVMs from the perspective of data processing \cite{Martens1990Nonideal,Kuramochi2015Minimal, Zhu2022POVM}. Given two POVMs $\scrA=\{A_j\}_j$ and $\scrB=\{B_k\}_k$  on $\caH$, the POVM $\scrA$ is a \emph{coarse graining} of $\scrB$,  if each POVM element of $\scrA$ can be expressed as follows:
\begin{equation}\label{eq:DefCoarse}
    A_j = \sum_k S_{jk} B_k,
\end{equation}
where $S$ is a stochastic matrix. Accordingly, $\scrB$ is a \emph{refinement} of $\scrA$. 
Two POVMs are \emph{equivalent} if they are coarse grainings of each other. 
A coarse graining or refinement of  $\scrA$ is trivial if it is equivalent to $\scrA$ and nontrivial otherwise. 
A POVM is \emph{simple} if no POVM element is proportional to another POVM element.
Every POVM is equivalent to a simple POVM; in addition, two simple POVMs are equivalent iff they are identical up to relabeling (or permutation of POVM elements). 
In this paper, we are mainly interested in simple POVMs.

Next, $\scrA$ commutes with $\scrB$ 
if all POVM elements in $\scrA$ commute with all POVM elements in $\scrB$. Notably, any coarse graining or refinement of a PVM commutes with the PVM. By contrast, $\scrA$ and $\scrB$  are \emph{compatible} or \emph{jointly measurable} if they admit a common refinement and \emph{incompatible}  otherwise \cite{Busch1986Compatible,Guhne2023Compatible,Zhu2022POVM}.  Commuting POVMs are compatible, but the converse is not guaranteed in general. Nevertheless, two PVMs are compatible iff they  commute~\cite{Lahti2003coexistence,Guhne2023Compatible}.

\emph{Optimal measurements with respect to the fidelity}---Let  $\bmp=\{p_m\}_m$ and $\bmq=\{q_m\}_m$ be two probability distributions indexed by a same set. The classical fidelity (also known as the square of the Bhattacharyya coefficient) between  $\bmp$ and $\bmq$ reads  \cite{Bhattacharyya1943Fidelity,Wootters1981Fidelity}
 \begin{equation}\label{Eq:Bhattacharyya1943Fidelity}
F(\bmp,\bmq):=\left(\sum_m \sqrt{p_mq_m}\right)^2,
\end{equation}
which is symmetric in $\bmp$ and $\bmq$ by definition. 
As a generalization, the fidelity between two density operators $\rho$ and $\sigma$ on $\caH$ reads \cite{Uhlmann1976Fidelity,Richard1994Fidelity}
\begin{equation}\label{Eq:QuantumFidelity}
F(\rho, \sigma) := \left(\tr\sqrt{\sqrt{\rho}\lsp\sigma\sqrt{\rho}}\lsp\right)^2=\left(\tr\sqrt{\sqrt{\sigma}\rho\sqrt{\sigma}}\lsp\right)^2,
\end{equation}
where the second equality holds because $\sqrt{\rho}\sqrt{\sigma}$ and  $\sqrt{\sigma}\sqrt{\rho}$ share the same singular values, including  multiplicities. So $F(\rho, \sigma)$ is symmetric in $\rho$ and $\sigma$, which is analogous to the classical counterpart.

To understand the connection between the quantum  and classical fidelities, given a POVM $\scrE$ on $\caH$, denote by $F_\scrE(\rho,\sigma)$ 
 the classical fidelity between 
$\Lambda_\scrE(\rho)$ and $\Lambda_\scrE(\sigma)$ as shown in \fref{Fig:Foptimal}. Then $F_\scrE(\rho,\sigma)\geq F(\rho, \sigma)$ \cite{Fuchs1995distinguishability, Fuchs1996Measures,Nielsen2010Quantum},
and this lower bound is tight. 
The POVM $\scrE$ is \emph{F-optimal} for distinguishing $\rho$ and $\sigma$ if this bound is saturated, that is, $F_\scrE(\rho,\sigma)=F(\rho, \sigma)$.  In addition, the induced classical fidelity is monotonic under data processing as shown in the following proposition, which is  a corollary of Theorem~9 in~\rcite{Van2014FCoarse}. See  \asref{app:FoptimalProofs} and \ref{app:PureStateProof} for proofs of results presented in the main text. 
\begin{proposition}\label{pro:CoarseIneq}
Suppose $\rho, \sigma$ are two quantum states on $\caH$ and $\scrA,\scrB$ are two POVMs on $\caH$, where $\scrA$ is a coarse graining of $\scrB$. Then
\begin{equation}\label{eq:CoarseIneq}
F_\scrA(\rho,\sigma)\ge F_\scrB(\rho,\sigma).
\end{equation}
\end{proposition}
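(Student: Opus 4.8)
The plan is to reduce the operator statement to a purely classical one and then invoke the data-processing property of the Bhattacharyya coefficient. First I would observe that a coarse graining of POVMs induces one and the \emph{same} stochastic map on the two distributions being compared. Writing $\scrA=\{A_j\}_j$ and $\scrB=\{B_k\}_k$ with $A_j=\sum_k S_{jk}B_k$, and setting $p_k:=\tr(B_k\rho)$ and $q_k:=\tr(B_k\sigma)$, linearity of the trace gives $\tr(A_j\rho)=\sum_k S_{jk}p_k$ and $\tr(A_j\sigma)=\sum_k S_{jk}q_k$. Thus $\Lambda_\scrA(\rho)$ and $\Lambda_\scrA(\sigma)$ are obtained from $\Lambda_\scrB(\rho)$ and $\Lambda_\scrB(\sigma)$ by applying the identical stochastic matrix $S$. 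The normalization $\sum_j A_j=\bbone=\sum_k B_k$ forces $\sum_j S_{jk}=1$ for every $k$, so $S$ is column-stochastic; this is the only property of $S$ I will use.

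Since $F(\bmp,\bmq)=\big(\sum_m\sqrt{p_mq_m}\big)^2$ and the inner sum is nonnegative, it suffices to prove the corresponding inequality between Bhattacharyya coefficients, namely $\sum_j\sqrt{\tr(A_j\rho)\,\tr(A_j\sigma)}\ge\sum_k\sqrt{p_kq_k}$; squaring then yields \eqref{eq:CoarseIneq}. The key step is a coordinatewise Cauchy--Schwarz bound applied, for each output index $j$, to the vectors with entries $\sqrt{S_{jk}p_k}$ and $\sqrt{S_{jk}q_k}$:
\begin{align}
\sqrt{\tr(A_j\rho)\,\tr(A_j\sigma)}
&=\sqrt{\Big(\sum_k S_{jk}p_k\Big)\Big(\sum_k S_{jk}q_k\Big)}\notag\\
&\ge\sum_k S_{jk}\sqrt{p_kq_k},
\end{align}
where taking the square root is legitimate because both sides are nonnegative. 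Summing over $j$ and exchanging the order of summation then gives
\begin{align}
\sum_j\sqrt{\tr(A_j\rho)\,\tr(A_j\sigma)}
&\ge\sum_k\Big(\sum_j S_{jk}\Big)\sqrt{p_kq_k}\notag\\
&=\sum_k\sqrt{p_kq_k},
\end{align}
using column-stochasticity in the last equality. Squaring establishes the proposition.

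There is no serious obstacle here: the entire content is the Cauchy--Schwarz estimate above, which is precisely the classical analog of the monotonicity of fidelity under quantum channels, and it reflects the intuition that discarding information can only make the induced distributions harder to distinguish, hence more overlapping. If one prefers, the same conclusion follows by specializing Theorem~9 of \rcite{Van2014FCoarse}, but the self-contained argument is shorter. The only points that require care are bookkeeping ones: verifying that coarse graining acts through the \emph{identical} stochastic map on both $\rho$ and $\sigma$ (so that it is genuinely the overlap structure that is being processed), and respecting the column-stochastic convention dictated by POVM normalization rather than the row-stochastic one.
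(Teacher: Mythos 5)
Your proof is correct and follows essentially the same route as the paper's: reduce to the induced classical distributions, apply Cauchy--Schwarz termwise to the vectors with entries $\sqrt{S_{jk}p_k}$ and $\sqrt{S_{jk}q_k}$, then sum over $j$ and use $\sum_j S_{jk}=1$. One small quibble: the column-stochasticity of $S$ is not actually \emph{forced} by the normalization $\sum_j A_j=\bbone=\sum_k B_k$ when the $B_k$ are linearly dependent; it is simply part of the paper's definition of coarse graining in \eqref{eq:DefCoarse}, so you should invoke it by definition rather than derive it.
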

As a simple corollary of \pref{pro:CoarseIneq}, equivalent POVMs induce the same classical fidelity as expected. In addition, any refinement of an  F-optimal POVM is still  F-optimal. In view of this fact, an F-optimal POVM $\scrE$ is   \emph{minimal} if it is simple and any nontrivial coarse graining is not F-optimal; in other words, any coarse graining of $\scrE$ that preserves optimality is equivalent to~$\scrE$. Such an optimal POVM extracts the least information and causes the least disturbance and is thus of special interest. In addition, the probability distributions induced by $\scrE$ only depend on its restriction on $\supp(\rho+\sigma)$. In the rest of this paper, unless otherwise stated,  we assume that $\rho$ and $\sigma$ are two distinct quantum states on $\caH$,
$\rho+\sigma$ is nonsingular, and $\scrE=\{E_m\}_m$ is a simple POVM on~$\caH$.

\begin{figure}
	\centering
	\includegraphics[width=0.42\textwidth]{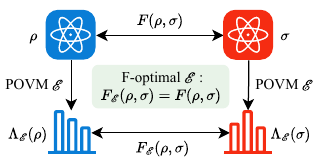}
	\caption{Schematic diagram of the fidelity $F(\rho, \sigma)$ and the classical fidelity $F_\scrE(\rho,\sigma)$ induced by the POVM $\scrE$. The POVM $\scrE$ is F-optimal if $F_\scrE(\rho,\sigma)=F(\rho, \sigma)$. }
	\label{Fig:Foptimal}
\end{figure}

\emph{Insights from operator pencils}---To clarify the conditions for constructing F-optimal POVMs, we need to introduce an additional concept. Two operators $K$ and $L$ on $\caH$ are \emph{parallel} if 
\begin{equation}\label{eq:ParallelDe}
L=0 \quad\text{or}\quad K=\lambda L,
\end{equation} 
where $\lambda$ is a nonnegative number. This condition is important to us because it arises naturally in saturating operator Cauchy-Schwarz inequalities tied to F-optimal POVMs. The following proposition offers a necessary and sufficient condition for a POVM to be F-optimal, which refines a result in \rcite{Fuchs1996Measures}. 
\begin{proposition}\label{prop:FoptConParallel}
Suppose  $U$ is any unitary operator that satisfies $\sqrt{\rho}\sqrt{\sigma}\lsp U = \sqrt{\sqrt{\rho}\lsp\sigma\sqrt{\rho}}$. 
Then $\scrE$ is F-optimal for distinguishing $\rho$ and $\sigma$ iff $\sqrt{\sigma}\sqrt{E_m}$ and $U\sqrt{\rho}\sqrt{E_m}$ are parallel for  each POVM element $E_m$ in $\scrE$, that is,
\begin{equation}\label{eq:ConditionFidelity1}
\sqrt{\rho}\sqrt{E_m} = 0\quad\text{or}\quad  \sqrt{\sigma}\sqrt{E_m}=\kappa_mU\sqrt{\rho}\sqrt{E_m},
\end{equation}
where $\kappa_m$ is a nonnegative number.
\end{proposition}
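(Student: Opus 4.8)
The plan is to recast F-optimality as the saturation of a Cauchy--Schwarz inequality with respect to the Hilbert--Schmidt inner product $\langle A,B\rangle:=\tr(A^\dagger B)$ with norm $\|A\|:=\sqrt{\langle A,A\rangle}$. For each outcome I would set $X_m:=U\sqrt{\rho}\sqrt{E_m}$ and $Y_m:=\sqrt{\sigma}\sqrt{E_m}$. First, rewriting the two outcome probabilities as squared norms gives $\tr(E_m\rho)=\|\sqrt{\rho}\sqrt{E_m}\|^2=\|X_m\|^2$, where the last equality uses the unitarity of $U$, and similarly $\tr(E_m\sigma)=\|Y_m\|^2$. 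Hence $\sqrt{F_\scrE(\rho,\sigma)}=\sum_m\sqrt{\tr(E_m\rho)\lsp\tr(E_m\sigma)}=\sum_m\|X_m\|\,\|Y_m\|$.

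Second, I would establish the identity
\begin{equation}\label{eq:plan-identity}
\sqrt{F(\rho,\sigma)}=\tr\sqrt{\sqrt{\rho}\lsp\sigma\sqrt{\rho}}=\sum_m\langle X_m,Y_m\rangle .
\end{equation}
Taking the adjoint of the defining relation $\sqrt{\rho}\sqrt{\sigma}\lsp U=\sqrt{\sqrt{\rho}\lsp\sigma\sqrt{\rho}}$ and using that its right-hand side is self-adjoint yields $U^\dagger\sqrt{\sigma}\sqrt{\rho}=\sqrt{\sqrt{\rho}\lsp\sigma\sqrt{\rho}}$. Then $\langle X_m,Y_m\rangle=\tr(\sqrt{E_m}\sqrt{\rho}\lsp U^\dagger\sqrt{\sigma}\sqrt{E_m})=\tr(E_m\sqrt{\rho}\lsp U^\dagger\sqrt{\sigma})$ by cyclicity, and summing over $m$ with $\sum_m E_m=\bbone$ collapses the sum to $\tr(U^\dagger\sqrt{\sigma}\sqrt{\rho})=\tr\sqrt{\sqrt{\rho}\lsp\sigma\sqrt{\rho}}$, which is \eqref{eq:plan-identity}.

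Third, since the left-hand side of \eqref{eq:plan-identity} is a nonnegative real number, I would chain the termwise bounds $\operatorname{Re}\langle X_m,Y_m\rangle\le|\langle X_m,Y_m\rangle|\le\|X_m\|\,\|Y_m\|$ to obtain $\sqrt{F(\rho,\sigma)}=\sum_m\operatorname{Re}\langle X_m,Y_m\rangle\le\sum_m\|X_m\|\,\|Y_m\|=\sqrt{F_\scrE(\rho,\sigma)}$, recovering $F_\scrE(\rho,\sigma)\ge F(\rho,\sigma)$. Thus $\scrE$ is F-optimal iff all these termwise bounds are saturated. Because every summand of $\sum_m\bigl(\|X_m\|\,\|Y_m\|-\operatorname{Re}\langle X_m,Y_m\rangle\bigr)$ is nonnegative, the total vanishes iff each term does, i.e. iff $\operatorname{Re}\langle X_m,Y_m\rangle=\|X_m\|\,\|Y_m\|$ for every $m$.

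Finally, I would translate this scalar equality into the parallel condition. It forces simultaneously $\langle X_m,Y_m\rangle\ge 0$ (saturating $\operatorname{Re}\le|\cdot|$) and proportionality of $X_m$ and $Y_m$ (saturating Cauchy--Schwarz), which together are exactly the statement that $Y_m$ and $X_m$ are parallel: either $X_m=0$, equivalently $\sqrt{\rho}\sqrt{E_m}=0$ since $U$ is unitary, or $Y_m=\kappa_m X_m$ with $\kappa_m\ge 0$; this is precisely \eqref{eq:ConditionFidelity1}. The converse is immediate, as the parallel condition makes each $\langle X_m,Y_m\rangle$ real and equal to $\|X_m\|\,\|Y_m\|$. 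I expect the equality analysis to be the main obstacle: one must argue carefully that global saturation implies termwise saturation and that the two equality conditions combine into the definition of parallel, including the nonnegativity of $\kappa_m$ and the degenerate case $X_m=0$ (where proportionality is vacuous yet the parallel condition still holds)---this degenerate case is exactly where the statement refines the earlier result for invertible states. A minor additional point is that the conclusion does not depend on which unitary $U$ solving the defining relation is chosen, which is automatic because F-optimality is itself independent of $U$.
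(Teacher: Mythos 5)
Your proposal is correct and follows essentially the same route as the paper's proof: both recast $\sqrt{F_\scrE(\rho,\sigma)}$ as $\sum_m\|U\sqrt{\rho}\sqrt{E_m}\|\,\|\sqrt{\sigma}\sqrt{E_m}\|$, lower-bound it by $\sum_m\Re\tr\bigl[\bigl(U\sqrt{\rho}\sqrt{E_m}\bigr)^\dagger\sqrt{\sigma}\sqrt{E_m}\bigr]=\tr\bigl(\sqrt{\rho}\sqrt{\sigma}\lsp U\bigr)=\sqrt{F(\rho,\sigma)}$ via the Hilbert--Schmidt Cauchy--Schwarz inequality, and read off F-optimality as termwise saturation, which is exactly the parallel condition \eqref{eq:ConditionFidelity1}. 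Your write-up is merely more explicit about the equality analysis (nonnegativity of $\kappa_m$ and the degenerate case $\sqrt{\rho}\sqrt{E_m}=0$), which the paper states without elaboration.
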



\Eref{eq:ConditionFidelity1} above may be regarded as a generalized eigenequation. To get a deeper understanding about this condition, it is instructive to introduce the concept of linear operator pencils or pencils for short, which are useful in a number of research areas, including entanglement classification~\cite{Chitambar2010PencilEntangle,Slowik2020PencilEntangle} and quantum learning of many-body systems~\cite{Steffens2014PencilManybody,Kaneko2025PencilManybody}.
Here a \emph{pencil} refers to a pair $(K,L)$ of linear operators on $\caH$ \cite{Golub2013matrix} (see \aref{sec:Pencil} for more details).  A complex number $\lambda$ is an eigenvalue of $(K,L)$ if $K-\lambda L$ is singular. In that case, $\Null(K-\lambda L)$ is called the eigenspace of $(K,L)$ with eigenvalue $\lambda$, and any nonzero vector in this space is called an eigenvector. If $L=\bbone$, then the eigenvalues (eigenspaces) of $(K,L)$ coincide with the counterparts of $K$. If instead $L$ is singular, then $\infty$ is also regarded as an eigenvalue of $(K,L)$, and $\Null(L)$ is regarded as the corresponding eigenspace.  Using the language of operator pencils, we can provide precise characterizations of F-optimal and minimal F-optimal POVMs.


\begin{proposition}\label{prop:FoptConPencilEig}
	Suppose $U$ is any unitary operator on $\caH$ that satisfies $\sqrt{\rho}\sqrt{\sigma}\lsp U = \sqrt{\sqrt{\rho}\lsp\sigma\sqrt{\rho}}$. Then $\scrE$ is F-optimal for distinguishing $\rho$ and $\sigma$ iff each   POVM element $E_m$ is supported in an eigenspace of $(\sqrt{\sigma},U\sqrt{\rho}\lsp)$ with a nonnegative eigenvalue (including $\infty$). If $\scrE$ is  F-optimal, then $\scrE$ is minimal iff no two POVM elements are supported in the same eigenspace of $(\sqrt{\sigma},U\sqrt{\rho}\lsp)$.
\end{proposition}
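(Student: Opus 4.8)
The plan is to derive the first equivalence directly from \pref{prop:FoptConParallel} by reading the parallelism condition geometrically, and then to build the minimality criterion on top of it using a structural fact about the pencil $(\sqrt{\sigma},U\sqrt{\rho}\lsp)$.

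For the F-optimality criterion I would reinterpret \eqref{eq:ConditionFidelity1} element by element. For any $E_m$ we have $\supp(E_m)=\range(\sqrt{E_m})$, so the first alternative $\sqrt{\rho}\sqrt{E_m}=0$ is equivalent to $\supp(E_m)\subseteq\Null(\sqrt{\rho})=\Null(U\sqrt{\rho}\lsp)$, which is precisely the eigenspace of $(\sqrt{\sigma},U\sqrt{\rho}\lsp)$ at the eigenvalue $\infty$. The second alternative $\sqrt{\sigma}\sqrt{E_m}=\kappa_m U\sqrt{\rho}\sqrt{E_m}$ rewrites as $(\sqrt{\sigma}-\kappa_m U\sqrt{\rho}\lsp)\sqrt{E_m}=0$, i.e. $\supp(E_m)\subseteq\Null(\sqrt{\sigma}-\kappa_m U\sqrt{\rho}\lsp)$; since $E_m\neq0$ this null space is nontrivial, so the nonnegative number $\kappa_m$ is a genuine finite eigenvalue and $E_m$ is supported in the associated eigenspace. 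Both readings are reversible, so \pref{prop:FoptConParallel} yields at once that $\scrE$ is F-optimal iff every $E_m$ is supported in an eigenspace with a nonnegative eigenvalue (including $\infty$).

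For the minimality criterion I would first record that eigenspaces for distinct nonnegative eigenvalues are linearly independent. Pairwise trivial intersection is easy: a common vector $v$ of the eigenspaces for two distinct nonnegative eigenvalues satisfies $\sqrt{\rho}v=\sqrt{\sigma}v=0$, hence $v\in\Null(\rho)\cap\Null(\sigma)=\Null(\rho+\sigma)=\{0\}$ because $\rho+\sigma$ is nonsingular; in particular each nonzero POVM element lies in a unique eigenspace, and $\range(E+F)=\range(E)+\range(F)$ for positive $E,F$ keeps sums of such elements inside one eigenspace. Granting independence, the "only if" direction follows by merging: if $E_m,E_{m'}$ share an eigenspace $V_\lambda$, then $E_m+E_{m'}$ is still supported in $V_\lambda$, so merging these two outcomes produces another F-optimal POVM, and because $\scrE$ is simple the two elements are not proportional, so the merged POVM has strictly fewer simple elements and this coarse graining is nontrivial; hence $\scrE$ is not minimal. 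For the "if" direction, let $\scrA=\{A_j\}_j$ be any F-optimal coarse graining with $A_j=\sum_m S_{jm}E_m$ for a stochastic matrix $S$. By the first criterion each nonzero $A_j$ is supported in a single eigenspace $V_{\lambda_j}$; since $\range(A_j)=\sum_{m:\,S_{jm}>0}\range(E_m)$ and distinct eigenspaces meet trivially, $S_{jm}=0$ unless $E_m$ lies in $V_{\lambda_j}$, of which there is at most one by assumption. Each row of $S$ then has a single nonzero entry, equal to $1$ by stochasticity, so $A_j=E_{m(j)}$; comparing $\sum_jA_j=\bbone=\sum_mE_m$ and invoking linear independence of $\{E_m\}$ forces $j\mapsto m(j)$ to be a bijection, so $\scrA$ is a relabeling of $\scrE$ and thus equivalent to it, proving minimality.

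I expect the structural lemma---upgrading pairwise independence of the eigenspaces to a genuine direct sum---to be the main obstacle, since pairwise trivial intersection does not by itself imply linear independence. My intended route is to exploit a hidden Hermiticity: left-multiplying the eigenequation by $U^{\dagger}$ converts $(\sqrt{\sigma},U\sqrt{\rho}\lsp)$ into the pencil $(U^{\dagger}\sqrt{\sigma},\sqrt{\rho}\lsp)$, whose product $U^{\dagger}\sqrt{\sigma}\sqrt{\rho}=\sqrt{\sqrt{\rho}\lsp\sigma\sqrt{\rho}}$ is positive semidefinite, placing it in the Hermitian-definite class that reduces to a self-adjoint eigenvalue problem with real spectrum and independent eigenspaces. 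The delicate point will be handling the infinite eigenvalue produced by a singular $\rho$ and splicing the $\Null(\sqrt{\rho}\lsp)$ block back together with the finite-eigenvalue blocks; this is also exactly where the assumption that $\rho+\sigma$ is nonsingular does its work.
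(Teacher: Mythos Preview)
Your reduction of the F-optimality criterion to \pref{prop:FoptConParallel} is correct and matches the paper's one-line argument. For the minimality criterion, your strategy is also the paper's (the paper packages the pairwise-trivial-intersection fact as \lref{lem:PencilEigSpace} and applies it in contrapositive form), but your execution has two slips that make you believe there is a harder obstacle than there really is.

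First, ``equal to $1$ by stochasticity'' is wrong: in the convention of \eqref{eq:DefCoarse} the stochastic matrix has \emph{column} sums equal to $1$, so knowing that row $j$ carries a single nonzero entry only yields $A_j=S_{j,m(j)}E_{m(j)}$ with $S_{j,m(j)}\in(0,1]$, not $A_j=E_{m(j)}$. Second---and this is the point that dissolves your ``main obstacle''---you never need linear independence of the eigenspaces or of the $E_m$. Once every row has at most one nonzero entry, column stochasticity $\sum_j S_{jm}=1$ immediately gives $E_m=\sum_{j:\,m(j)=m}A_j$, so $\scrE$ is itself a coarse graining of $\scrA$; the two POVMs are mutual coarse grainings and hence equivalent. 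No bijection argument, and no direct-sum structure, is required. The paper likewise uses only pairwise triviality: the second half of \lref{lem:PencilEigSpace} (if $P,Q$ sit in distinct eigenspaces then $P+Q$ sits in none) follows at once because $\supp(P+Q)=\supp(P)+\supp(Q)$ would force the putative eigenvalue to coincide with both of the distinct eigenvalues attached to $P$ and $Q$. Your proposed route through the Hermitian-definite pencil $(U^{\dagger}\sqrt{\sigma},\sqrt{\rho}\lsp)$ is interesting but unnecessary here.
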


Suppose $\scrE$ is F-optimal for distinguishing $\rho$ and $\sigma$. Then, as a simple corollary of \pref{prop:FoptConPencilEig}, there exists a unique coarse graining of $\scrE$ up to relabeling that is F-optimal and minimal: this POVM is constructed by merging the POVM elements of $\scrE$ that are supported in each eigenspace, where merging is a special coarse graining that replaces two or more POVM elements with their sum.

\begin{proposition}\label{prop:FoptimalJM}
	Suppose $\scrE_1$ and $\scrE_2$ are two inequivalent minimal F-optimal POVMs for distinguishing $\rho$ and $\sigma$. Then $\scrE_1$ and $\scrE_2$ are not compatible. Given $0\leq p\leq 1$, let $\scrE(p)$ be the coarse graining of $p\scrE_1\sqcup (1-p)\scrE_2$ that is  F-optimal and minimal. Then $\scrE(p)$ for any two different values of $p$ are not equivalent and thus not compatible.
\end{proposition}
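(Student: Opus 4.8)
The plan is to reduce everything to the eigenspace decomposition of the pencil $(\sqrt{\sigma},U\sqrt{\rho}\lsp)$ furnished by \pref{prop:FoptConPencilEig}. The linchpin I would establish first is a separation property: distinct eigenspaces of $(\sqrt{\sigma},U\sqrt{\rho}\lsp)$---including the one for eigenvalue $\infty$---intersect only in the zero vector. Indeed, if a nonzero $v$ lay in the eigenspaces for distinct finite eigenvalues $\mu\neq\nu$, then $\sqrt{\sigma}v=\mu U\sqrt{\rho}\lsp v=\nu U\sqrt{\rho}\lsp v$ would give $(\mu-\nu)U\sqrt{\rho}\lsp v=0$; since $U$ is unitary this forces $\sqrt{\rho}\lsp v=0$ and hence $\sqrt{\sigma}v=0$, and the cases $\mu=0$ or $\nu=\infty$ are identical. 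Thus $v\in\Null(\rho)\cap\Null(\sigma)=\Null(\rho+\sigma)=\{0\}$, using that $\rho+\sigma$ is nonsingular.

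For the incompatibility of two inequivalent minimal F-optimal POVMs $\scrE_1=\{A_j\}_j$ and $\scrE_2=\{B_k\}_k$, I would argue by contradiction. Suppose they are compatible and let $\scrC=\{G_\lambda\}_\lambda$ be a common refinement, which we may take to have nonzero elements, with $A_j=\sum_\lambda S_{j\lambda}G_\lambda$ and $B_k=\sum_\lambda T_{k\lambda}G_\lambda$ for stochastic matrices $S,T$. Since $\scrC$ refines the F-optimal POVM $\scrE_1$, it is F-optimal, so by \pref{prop:FoptConPencilEig} each $G_\lambda$ is supported in a single eigenspace, as is each $A_j$ and $B_k$ by minimality. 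Whenever $S_{j\lambda}>0$ we have $\supp(G_\lambda)\subseteq\supp(A_j)$; together with the separation property this forces the eigenspace of $G_\lambda$ to equal that of $A_j$. As $\scrE_1$ carries at most one element per eigenspace, for each $\lambda$ only one $j$ can have $S_{j\lambda}>0$, and stochasticity then gives $S_{j\lambda}\in\{0,1\}$; the same holds for $T$. Hence $\scrC$ refines $\scrE_1$ and $\scrE_2$ purely by merging within eigenspaces. Summing the $G_\lambda$ supported in a fixed eigenspace $V$ therefore reproduces both $A_{j_V}$ and $B_{k_V}$ (the unique elements of $\scrE_1,\scrE_2$ supported in $V$), so $A_{j_V}=B_{k_V}$; moreover an eigenspace carrying an element of only one POVM is impossible, since the other marginal would fail to be stochastic there. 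Consequently $\scrE_1$ and $\scrE_2$ agree up to relabeling, contradicting their inequivalence.

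For the family $\scrE(p)$, I would first note that $p\scrE_1\sqcup(1-p)\scrE_2$, whose elements are $\{pA_j\}_j\cup\{(1-p)B_k\}_k$, is F-optimal: each element is a nonnegative multiple of an element of $\scrE_1$ or $\scrE_2$ and is thus supported in an eigenspace with nonnegative eigenvalue, so \pref{prop:FoptConPencilEig} applies. By the merging construction accompanying \pref{prop:FoptConPencilEig}, $\scrE(p)$ has one element per eigenspace, namely $E_V(p)=pA_{j_V}+(1-p)B_{k_V}$ with any absent term omitted. Because distinct eigenspaces intersect trivially, nonzero elements supported in different eigenspaces cannot coincide, so $\scrE(p)$ and $\scrE(p')$ are equivalent iff $E_V(p)=E_V(p')$ for every $V$. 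If some eigenspace carries an element of only one of $\scrE_1,\scrE_2$, then $E_V$ equals $pA_{j_V}$ or $(1-p)B_{k_V}$; otherwise every eigenspace carries both, and if all $E_V$ were constant in $p$ we would get $A_{j_V}=B_{k_V}$ for all $V$, i.e.\ $\scrE_1$ equivalent to $\scrE_2$, which is excluded. In every case some $E_V(p)$ is a nonconstant affine function of $p$, so $\scrE(p)$ and $\scrE(p')$ are inequivalent for $p\neq p'$. As each $\scrE(p)$ is minimal F-optimal by construction, the first part of the proposition then gives their incompatibility.

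The main obstacle is the middle step of the first claim: upgrading the mere existence of a common refinement to the rigid conclusion that the refining stochastic matrices are $0/1$-valued and act only by merging within a single eigenspace. This is precisely where the separation property is essential, since a priori a refinement could redistribute weight among different eigenspaces; ruling this out, and excluding eigenspaces covered by only one of the two POVMs, is the delicate part. Once this rigidity is secured, both the collapse to equivalence in the first claim and the affine-injectivity argument in the second follow directly.
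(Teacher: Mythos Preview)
Your proof is correct and takes essentially the same route as the paper: both parts rest on \pref{prop:FoptConPencilEig} together with the separation property for distinct eigenspaces of the pencil (stated in the paper as \lref{lem:PencilEigSpace}, which you reprove in your first paragraph). Where the paper disposes of the first claim in one line by invoking the uniqueness of the minimal F-optimal coarse graining (the remark following \pref{prop:FoptConPencilEig}), you unpack this into the explicit $0/1$-rigidity of the refining stochastic matrices; your affine-injectivity argument for the family $\scrE(p)$ then matches the paper's second step almost verbatim.
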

Notably, if the minimal F-optimal POVM for distinguishing $\rho$ and $\sigma$ is not unique, then there exist infinitely many (uncountable) minimal F-optimal POVMs, as illustrated in \fref{Fig:illustration}.


\emph{Insights from  geometric means}---\Psref{prop:FoptConParallel} and \ref{prop:FoptConPencilEig} are instructive for understanding (minimal) F-optimal POVMs, but eigenspaces of  an operator pencil are less tangible than the eigenspaces of a single operator. To  simplify these criteria  and to construct concrete minimal F-optimal POVMs, we need to introduce the concept of operator geometric means  (see \aref{App:GeomMeanProperty}), which are also helpful for studying quantum communication \cite{Muller2013GMEntropy,Matsumoto2015GMEntropy,Fang2021GMEntropy}. When $\rho$ is nonsingular, the \emph{geometric mean} of $\rho^{-1}$ and $\sigma$ is defined as follows \cite{Lawson2001GM,Bhatia2007GeoMean}:
\begin{equation}\label{eq:GeometricMean}
\caM\left(\rho^{-1},\sigma\right):=\sqrt{\rho^{-1}}\sqrt{\sqrt{\rho}\lsp\sigma\sqrt{\rho}}\sqrt{\rho^{-1}}.
\end{equation}
Here we do not assume that $\sigma$ is nonsingular, although this is a common assumption in some literature. If $\rho$ and $\sigma$ commute, then $\caM\left(\rho^{-1},\sigma\right)=\sqrt{\rho^{-1}}\sqrt{\sigma}$ may be regarded as the quotient of $\sqrt{\sigma}$ over $\sqrt{\rho}$, and the unitary operator $U$ in \pref{prop:FoptConPencilEig} can be chosen to be the identity. In this case, there is a simple connection between the eigenvalues (eigenspaces) of $\caM\left(\rho^{-1},\sigma\right)$ and the counterparts of $(\sqrt{\sigma},U\sqrt{\rho}\lsp)$, which intuitively explains why $\caM\left(\rho^{-1},\sigma\right)$ is relevant to studying F-optimal POVMs. 

\begin{figure}
	\centering
	\includegraphics[width=0.48\textwidth]{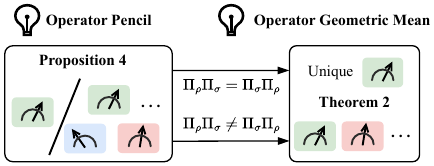}
	\caption{Weak-commutativity condition and dichotomy between a unique minimal F-optimal POVM and infinitely many inequivalent choices as represented by different colors.}
	\label{Fig:illustration}
\end{figure}

When $\rho$ is singular, the  inverse in \eref{eq:GeometricMean} can be replaced by the Moore-Penrose generalized inverse~\cite{Barata2012moore,Ben2002GInverse} of $\rho$, denoted by $\rho^+$ henceforth:
\begin{equation}\label{Eq:GeometricMeanGeneral}
\caM\left(\rho^+,\sigma\right):=\sqrt{\rho^{+}}\sqrt{\sqrt{\rho}\lsp\sigma\sqrt{\rho}}\sqrt{\rho^{+}}.
\end{equation}
Nevertheless, little is known about this generalization in the literature. Our technical contribution on this topic as presented in \aref{App:GeomMeanProperty} is crucial to understanding F-optimal POVMs. 
By definition we can deduce that $\caM\left(\rho^+,\sigma\right)$ is a positive operator with 
\begin{equation}
\supp\left(\caM\left(\rho^+,\sigma\right)\right)=\supp(\Pi_\rho\Pi_\sigma\Pi_\rho)\leq \supp(\rho),
\end{equation} 
where $\Pi_\rho$ and $\Pi_\sigma$ are the projectors onto $\supp(\rho)$ and $\supp(\sigma)$, respectively. 

Let $\Pi_{\rho,\sigma}$ be the projector onto $\supp(\Pi_\rho\Pi_\sigma\Pi_\rho)$. Define $\scrM(\rho,\sigma)$ as the PVM constructed from  $\bbone-\Pi_\rho$, $\Pi_\rho-\Pi_{\rho,\sigma}$, and the eigenprojectors of $\caM\left(\rho^+,\sigma\right)$ with positive eigenvalues; here the element $\bbone-\Pi_\rho$ ($\Pi_\rho-\Pi_{\rho,\sigma}$) will be omitted if its rank is 0, so that $\scrM(\rho,\sigma)$ is a simple PVM. Note that a POVM is a refinement of $\scrM(\rho,\sigma)$ iff each POVM element is supported in $\Null(\rho)$ or an eigenspace of $\caM(\rho^+,\sigma)$ within $\supp(\rho)$.
The geometric mean $\caM\left(\sigma^+,\rho\right)$ and the simple PVM $\scrM(\sigma,\rho)$ can be defined in a similar way. As we will see shortly, both $\scrM(\rho,\sigma)$ and $\scrM(\sigma,\rho)$ are minimal F-optimal POVMs for distinguishing $\rho$ and $\sigma$, which can also be deduced from  \rcite{Fuchs1995distinguishability}.
To clarify the significance of the geometric mean $\caM\left(\rho^+,\sigma\right)$ and the PVM $\scrM(\rho,\sigma)$, we need to introduce an auxiliary lemma. 

 
 \begin{lemma}\label{lem:Parallel}
 	Suppose $P$ is a  positive operator on $\caH$ that commutes with $\Pi_\rho$, and $U$ is any unitary operator that satisfies $\sqrt{\rho}\sqrt{\sigma}\lsp U = \sqrt{\sqrt{\rho}\lsp\sigma\sqrt{\rho}}$. Then  $P$ is supported in an eigenspace of $(\sqrt{\sigma},U\sqrt{\rho}\lsp)$
 iff $P$ is supported in $\Null(\rho)$ or an eigenspace of $\caM(\rho^+,\sigma)$ within $\supp(\rho)$. 
 \end{lemma}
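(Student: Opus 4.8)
The plan is to reduce the generalized (pencil) eigenvalue problem for $(\sqrt{\sigma},U\sqrt{\rho}\lsp)$ to the ordinary eigenvalue problem for $\caM(\rho^+,\sigma)$ restricted to $\supp(\rho)$, while treating $\Null(\rho)$ separately as the eigenspace for the eigenvalue $\infty$. First I would record the identities $\sqrt{\rho}\sqrt{\rho^+}=\sqrt{\rho^+}\sqrt{\rho}=\Pi_\rho$ and, taking the adjoint of $\sqrt{\rho}\sqrt{\sigma}\lsp U=\sqrt{\sqrt{\rho}\lsp\sigma\sqrt{\rho}}$ and using that the right-hand side is positive, $U^\dagger\sqrt{\sigma}\sqrt{\rho}=\sqrt{\sqrt{\rho}\lsp\sigma\sqrt{\rho}}=:S$. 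Since $S\ge0$ has $\supp(S)=\supp(\sqrt{\rho}\lsp\sigma\sqrt{\rho})\subseteq\supp(\rho)$, one has $\Pi_\rho S=S\Pi_\rho=S$ and $\caM(\rho^+,\sigma)=\sqrt{\rho^+}\lsp S\lsp\sqrt{\rho^+}$.

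The core step is an eigenvector correspondence on $\supp(\rho)$. For $|w\rangle\in\supp(\rho)$ I would set $|v\rangle=\sqrt{\rho^+}|w\rangle$, so that $\sqrt{\rho}|v\rangle=\Pi_\rho|w\rangle=|w\rangle$ and $\caM(\rho^+,\sigma)|w\rangle=\sqrt{\rho^+}\lsp S|v\rangle$. Applying $\sqrt{\rho}$ and invoking $\Pi_\rho S=S$ turns the eigenequation $\caM(\rho^+,\sigma)|w\rangle=\mu|w\rangle$ into $S|v\rangle=\mu\rho|v\rangle$; substituting $S=U^\dagger\sqrt{\sigma}\sqrt{\rho}$ together with $\rho|v\rangle=\sqrt{\rho}|w\rangle$ and multiplying by $U$ then yields $\sqrt{\sigma}|w\rangle=\mu\lsp U\sqrt{\rho}|w\rangle$. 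Every step is reversible, so a vector $|w\rangle\in\supp(\rho)$ is an eigenvector of $\caM(\rho^+,\sigma)$ with eigenvalue $\mu$ iff it is an eigenvector of the pencil with the same eigenvalue; since $\caM(\rho^+,\sigma)\ge0$, this also shows that any finite pencil eigenvalue possessing an eigenvector in $\supp(\rho)$ is nonnegative. Consequently, for $\lambda\ge0$ the intersection $\Null(\sqrt{\sigma}-\lambda U\sqrt{\rho}\lsp)\cap\supp(\rho)$ equals the $\lambda$-eigenspace of $\caM(\rho^+,\sigma)$ within $\supp(\rho)$.

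Next I would treat $\Null(\rho)$: it is precisely the eigenspace of $(\sqrt{\sigma},U\sqrt{\rho}\lsp)$ for the eigenvalue $\infty$, namely $\Null(U\sqrt{\rho}\lsp)=\Null(\rho)$, and the standing assumption that $\rho+\sigma$ is nonsingular, i.e.\ $\Null(\rho)\cap\Null(\sigma)=\{0\}$, guarantees that no nonzero vector of $\Null(\rho)$ sits in any finite-eigenvalue eigenspace (such a vector would lie in $\Null(\sigma)\cap\Null(\rho)$). To finish, I would use that $[P,\Pi_\rho]=0$ makes $\supp(P)$ invariant under $\Pi_\rho$, hence $\supp(P)=(\supp(P)\cap\Null(\rho))\oplus(\supp(P)\cap\supp(\rho))$. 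If $\supp(P)$ lies in a single pencil eigenspace with finite eigenvalue, its $\Null(\rho)$ part vanishes by the preceding remark, so $\supp(P)\subseteq\supp(\rho)$ and, by the correspondence, $\supp(P)$ lies in one eigenspace of $\caM(\rho^+,\sigma)$; if the eigenvalue is $\infty$ then $\supp(P)\subseteq\Null(\rho)$. The converse is immediate, since $\Null(\rho)$ is the $\infty$-eigenspace and each eigenspace of $\caM(\rho^+,\sigma)$ within $\supp(\rho)$ is contained in the corresponding pencil eigenspace.

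\textbf{Main obstacle.} I expect the subtle point to be the forward direction for finite eigenvalues: from $\supp(P)\cap\Null(\rho)=\{0\}$ one cannot in general conclude $\supp(P)\subseteq\supp(\rho)$, and this inference is legitimate only because the commutation $[P,\Pi_\rho]=0$ splits $\supp(P)$ along $\supp(\rho)\oplus\Null(\rho)$. Additional care is needed with the zero eigenvalue of $\caM(\rho^+,\sigma)$, whose eigenspace within $\supp(\rho)$ is $\supp(\rho)\cap\Null(\sigma)=\range(\Pi_\rho-\Pi_{\rho,\sigma})$, and with keeping the $\infty$-eigenspace $\Null(\rho)$ cleanly separated from the finite ones throughout.
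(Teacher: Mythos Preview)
Your proposal is correct and follows essentially the same approach as the paper. The paper factors the argument through two auxiliary lemmas: one showing that a positive operator commuting with $\Pi_\rho$ and supported in a pencil eigenspace must live entirely in $\supp(\rho)$ or in $\Null(\rho)$ (your splitting $\supp(P)=(\supp(P)\cap\Null(\rho))\oplus(\supp(P)\cap\supp(\rho))$ together with $\Null(\rho)\cap\Null(\sigma)=\{0\}$), and another establishing the eigenvector correspondence on $\supp(\rho)$ between the pencil $(\sqrt{\sigma},U\sqrt{\rho}\lsp)$ and $\caM(\rho^+,\sigma)$ via the identity $U^\dagger\sqrt{\sigma}\sqrt{\rho}=\sqrt{\sqrt{\rho}\lsp\sigma\sqrt{\rho}}$ --- exactly your core step.
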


Note that any coarse graining or refinement of $\scrM(\rho,\sigma)$ commutes with $\Pi_\rho$. In conjunction with \pref{prop:FoptConPencilEig} and \lref{lem:Parallel}, we can immediately derive the following theorem, which implies that $\scrM(\rho,\sigma)$ [$\scrM(\sigma,\rho)$] is the unique minimal F-optimal POVM up to relabeling that commutes with $\Pi_\rho$ ($\Pi_\sigma$).

\begin{theorem}\label{thm:FoptimalReg}
The PVM $\scrM(\rho,\sigma)$ is F-optimal and minimal  for distinguishing $\rho$ and $\sigma$.
Suppose  $\scrE$ commutes with $\Pi_\rho$; then $\scrE$ is F-optimal iff  it is a refinement of $\scrM(\rho,\sigma)$. The same conclusions hold if $\rho$ and $\sigma$ are exchanged.
\end{theorem}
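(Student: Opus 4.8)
The plan is to prove all three assertions by combining \pref{prop:FoptConPencilEig} with \lref{lem:Parallel}: the former characterizes (minimal) F-optimality through eigenspaces of the pencil $(\sqrt{\sigma},U\sqrt{\rho}\lsp)$, while the latter translates this abstract condition, for positive operators commuting with $\Pi_\rho$, into the tangible language of $\caM(\rho^+,\sigma)$. Throughout I fix a unitary $U$ with $\sqrt{\rho}\sqrt{\sigma}\lsp U=\sqrt{\sqrt{\rho}\lsp\sigma\sqrt{\rho}}$, so that both quoted results apply with the same $U$. To establish F-optimality of $\scrM(\rho,\sigma)$, I note that each of its elements---namely $\bbone-\Pi_\rho$, $\Pi_\rho-\Pi_{\rho,\sigma}$, and the positive-eigenvalue eigenprojectors of $\caM(\rho^+,\sigma)$---is a positive operator that commutes with $\Pi_\rho$ and is supported either in $\Null(\rho)$ or in an eigenspace of $\caM(\rho^+,\sigma)$ within $\supp(\rho)$, the element $\Pi_\rho-\Pi_{\rho,\sigma}$ being precisely the zero-eigenvalue part inside $\supp(\rho)$. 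By \lref{lem:Parallel} every such element is then supported in an eigenspace of the pencil. The step requiring care is that the associated pencil eigenvalues are nonnegative (including $\infty$): $\Null(\rho)=\Null(U\sqrt{\rho}\lsp)$ is the eigenspace for $\infty$, whereas a $\caM(\rho^+,\sigma)$-eigenspace within $\supp(\rho)$ with eigenvalue $t\ge0$ corresponds to a pencil eigenvalue $\sqrt{t}\ge0$ (immediate in the commuting case of \eref{eq:GeometricMean} and part of what the proof of \lref{lem:Parallel} delivers). Since all these eigenvalues lie in $[0,\infty]$, \pref{prop:FoptConPencilEig} gives F-optimality.

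For minimality I would invoke the second half of \pref{prop:FoptConPencilEig}: as $\scrM(\rho,\sigma)$ is by construction a simple PVM and is now known to be F-optimal, it suffices to verify that no two of its elements are supported in a common pencil eigenspace. This holds because the correspondence of \lref{lem:Parallel} is injective on eigenspaces, sending distinct eigenspaces of $\caM(\rho^+,\sigma)$ together with $\Null(\rho)$ to distinct pencil eigenspaces; equivalently, the eigenvalues $\infty$, $0$, and the distinct positive $\sqrt{t}$ are pairwise distinct. Hence $\scrM(\rho,\sigma)$ is minimal.

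The iff for POVMs commuting with $\Pi_\rho$ then follows in two short steps. The backward direction is unconditional: any refinement of the F-optimal PVM $\scrM(\rho,\sigma)$ is itself F-optimal by the corollary of \pref{pro:CoarseIneq} that refinements preserve F-optimality, and such a refinement automatically commutes with $\Pi_\rho$, so it meets the standing hypothesis. For the forward direction, suppose $\scrE$ commutes with $\Pi_\rho$ and is F-optimal. By \pref{prop:FoptConPencilEig} each $E_m$ is supported in an eigenspace of $(\sqrt{\sigma},U\sqrt{\rho}\lsp)$; since each $E_m$ is a positive operator commuting with $\Pi_\rho$, the ``only if'' direction of \lref{lem:Parallel} places each $E_m$ in $\Null(\rho)$ or in an eigenspace of $\caM(\rho^+,\sigma)$ within $\supp(\rho)$. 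By the refinement characterization recorded just before \lref{lem:Parallel}, this is exactly the condition for $\scrE$ to be a refinement of $\scrM(\rho,\sigma)$.

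Finally, the statement with $\rho$ and $\sigma$ exchanged is obtained by symmetry: both $F(\rho,\sigma)$ and the induced $F_\scrE(\rho,\sigma)$ are symmetric in their arguments, so F-optimality is a symmetric notion, and $\scrM(\sigma,\rho)$, $\caM(\sigma^+,\rho)$, and $\Pi_\sigma$ arise by interchanging the roles of the two states throughout the argument above. The main obstacle is the eigenvalue bookkeeping in the first paragraph---confirming that the eigenspace correspondence of \lref{lem:Parallel} simultaneously matches nonnegative pencil eigenvalues with the eigenvalues of $\caM(\rho^+,\sigma)$ and is injective on eigenspaces---since once that is in hand every remaining step is a direct application of \pref{prop:FoptConPencilEig} and \lref{lem:Parallel}.
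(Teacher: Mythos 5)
Your proof is correct and follows essentially the same route as the paper: the paper derives \thref{thm:FoptimalReg} "immediately" from \pref{prop:FoptConPencilEig} and \lref{lem:Parallel} together with the observation that any coarse graining or refinement of $\scrM(\rho,\sigma)$ commutes with $\Pi_\rho$, which is exactly the skeleton you flesh out. One slip in your eigenvalue bookkeeping: by \lref{lem:PencilEigGMsuppA} (the lemma underlying \lref{lem:Parallel}), an eigenvector of $\caM(\rho^+,\sigma)$ within $\supp(\rho)$ with eigenvalue $t$ is a pencil eigenvector with the \emph{same} eigenvalue $t$, not $\sqrt{t}$ --- in the commuting case $\caM(\rho^{-1},\sigma)=\sqrt{\rho^{-1}}\sqrt{\sigma}$ already carries the square root; this is harmless here, since nonnegativity and injectivity of the correspondence, the only properties you use, hold either way.
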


Next, we present our main result on the connection between the uniqueness of the minimal F-optimal POVM and the weak commutativity condition. It turns out the same condition is also necessary and sufficient to guarantee that $\scrM(\rho,\sigma)$ and $\scrM(\sigma,\rho)$ are commuting, compatible, and equivalent.

\begin{theorem}\label{thm:FoptimalEqui}The five statements below are equivalent:
	\begin{enumerate}
		\item $\Pi_\rho$ and $\Pi_\sigma$ commute;
		
		\item $\scrM(\rho,\sigma)$ and $\scrM(\sigma,\rho)$ commute;

		\item $\scrM(\rho,\sigma)$ and  $\scrM(\sigma,\rho)$ are compatible;
		
		\item $\scrM(\rho,\sigma)$ and  $\scrM(\sigma,\rho)$ are equivalent;

        \item $\scrM(\rho,\sigma)$ is the unique minimal F-optimal POVM for distinguishing $\rho$ and $\sigma$    up to relabeling.
	\end{enumerate} 
\end{theorem}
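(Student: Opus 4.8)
The plan is to prove the five statements equivalent through the cycle $(1)\Rightarrow(5)\Rightarrow(4)\Rightarrow(2)\Rightarrow(1)$, supplemented by the side implications $(2)\Rightarrow(3)\Rightarrow(4)$ that fold in the third statement. Four of these links are immediate from results already established. For $(5)\Rightarrow(4)$, note that $\scrM(\sigma,\rho)$ is also minimal F-optimal by \thref{thm:FoptimalReg}, so uniqueness forces it to be equivalent to $\scrM(\rho,\sigma)$. For $(4)\Rightarrow(2)$, equivalent simple PVMs are identical up to relabeling, and the mutually orthogonal projectors of a PVM commute. For $(2)\Rightarrow(3)$, commuting POVMs are compatible. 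For $(3)\Rightarrow(4)$, since both $\scrM(\rho,\sigma)$ and $\scrM(\sigma,\rho)$ are minimal F-optimal, the contrapositive of \pref{prop:FoptimalJM} (inequivalent minimal F-optimal POVMs are incompatible) turns compatibility into equivalence. This leaves the two substantive implications $(2)\Rightarrow(1)$ and $(1)\Rightarrow(5)$.

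For $(2)\Rightarrow(1)$, I would exploit that $\Pi_\rho$ is recoverable from $\scrM(\rho,\sigma)$ and $\Pi_\sigma$ from $\scrM(\sigma,\rho)$. By construction, $\Pi_\rho$ equals the sum of all elements of $\scrM(\rho,\sigma)$ other than $\bbone-\Pi_\rho$, and likewise $\Pi_\sigma$ is a partial sum of the elements of $\scrM(\sigma,\rho)$. Assuming $(2)$, every element of $\scrM(\rho,\sigma)$ commutes with every element of $\scrM(\sigma,\rho)$; hence $\Pi_\rho$, being a sum of elements of the former, commutes with each element of the latter, and in particular with the partial sum $\Pi_\sigma$. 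This yields $[\Pi_\rho,\Pi_\sigma]=0$, which is $(1)$.

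The core of the theorem is $(1)\Rightarrow(5)$. Assuming $\Pi_\rho$ and $\Pi_\sigma$ commute and using $\supp(\rho+\sigma)=\caH$, I would decompose $\caH$ into the three mutually orthogonal, jointly invariant subspaces $\supp(\rho)\cap\Null(\sigma)$, $\Null(\rho)\cap\supp(\sigma)$, and $\supp(\rho)\cap\supp(\sigma)$ (the fourth intersection vanishes since $\rho+\sigma$ is nonsingular). Choosing the unitary $U$ of \pref{prop:FoptConPencilEig} to be block diagonal with respect to this decomposition, which is legitimate because $\sqrt{\rho}\sqrt{\sigma}$ is supported on the third block, makes the pencil $(\sqrt{\sigma},U\sqrt{\rho}\lsp)$ block diagonal as well. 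On the first block $\sqrt{\sigma}=0$, so the pencil carries the single eigenvalue $0$; on the second block $\sqrt{\rho}=0$, so its only eigenvalue is $\infty$; on the third block both operators are invertible and, by the correspondence in \lref{lem:Parallel}, the eigenspaces with finite positive eigenvalue coincide with the eigenspaces of the positive operator $\caM(\rho^+,\sigma)$, which are mutually orthogonal by the spectral theorem. Consequently the eigenspaces of the pencil carrying a nonnegative eigenvalue (including $\infty$) are exactly the ranges of the projectors constituting $\scrM(\rho,\sigma)$; they are mutually orthogonal and sum to $\bbone$.

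Given this orthogonal, exhaustive eigenspace decomposition, uniqueness follows quickly. Let $\scrE$ be any minimal F-optimal POVM. By \pref{prop:FoptConPencilEig} each element $E_m$ is supported in one of these eigenspaces, with no two elements sharing an eigenspace. Since the eigenspaces are orthogonal and sum to $\caH$, the constraint $\sum_m E_m=\bbone$ forces the unique element supported in each eigenspace to equal the full projector onto it: no element can straddle two eigenspaces without violating F-optimality, and none can be omitted without failing to reconstruct $\bbone$. Hence $\scrE=\scrM(\rho,\sigma)$ up to relabeling, establishing $(5)$. The main obstacle is the argument of the third paragraph, namely verifying that weak commutativity makes \emph{all} nonnegative-eigenvalue eigenspaces of the pencil simultaneously orthogonal and complete; this hinges on the block-diagonal choice of $U$ and on careful bookkeeping of the boundary eigenvalues $0$ and $\infty$ alongside the interior spectrum governed by \lref{lem:Parallel}. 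I would also record that the geometric-mean identity $\caM(\sigma^+,\rho)=\caM(\rho^+,\sigma)^{-1}$ on the third block shows $\scrM(\rho,\sigma)$ and $\scrM(\sigma,\rho)$ in fact coincide under $(1)$, a sharper form of $(4)$.
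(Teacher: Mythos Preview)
Your chain of easy implications is fine, and your argument for $(2)\Rightarrow(1)$ matches the paper's. The gap is in $(1)\Rightarrow(5)$: the pencil $(\sqrt{\sigma},U\sqrt{\rho}\lsp)$ is \emph{not} block diagonal with respect to the three-way decomposition, no matter how you choose $U$. The issue is that weak commutativity gives you $[\Pi_\rho,\Pi_\sigma]=0$ but says nothing about $[\sqrt{\sigma},\Pi_\rho]$; in general $\sqrt{\sigma}$ mixes the second block $\Null(\rho)\cap\supp(\sigma)$ with the third block $\supp(\rho)\cap\supp(\sigma)$, since these together span $\supp(\sigma)$ and $\sigma$ need not respect the finer splitting. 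The paper's example in \aref{app:FoptExample} illustrates this: there $\sigma$ is a rank-two state mixing $|1\rangle$ and $|2\rangle$, so $\sqrt{\sigma}$ has off-diagonal entries between blocks 2 and 3, and the displayed $U\sqrt{\rho}$ is also visibly non-block-diagonal. Nonetheless the pencil eigenvectors do turn out to be $|0\rangle,|1\rangle,|2\rangle$---but this is a nontrivial conclusion, not an immediate consequence of block structure.

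What the paper actually does is invoke \lref{lem:PencilEigGM} (built on \lref{lem:PencilEigPiC}), which constructs a particular $U$ via the singular-value decomposition of $\sqrt{\rho}\sqrt{\sigma}$ and then argues, by carefully tracking where $\bPi_\rho|\psi\rangle$ lands under $U^\dag\sqrt{\sigma}$, that every eigenvector of the pencil lies in $\Null(\rho)$ or $\supp(\rho)$. Only after this does \lref{lem:PencilEigGMsuppA} identify the eigenspaces inside $\supp(\rho)$ with those of $\caM(\rho^+,\sigma)$. Your invocation of \lref{lem:Parallel} does not cover this step either, since that lemma presupposes the positive operator already commutes with $\Pi_\rho$, which is exactly what you are trying to prove for the POVM elements. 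In short, the ``block-diagonal pencil'' shortcut fails, and you need the SVD-based construction (or an equivalent argument) to force pencil eigenvectors into the blocks.
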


\Thref{thm:FoptimalEqui} unifies five statements on commutativity, compatibility, equivalence, and uniqueness, which link algebraic and operational properties (see also \lref{lem:GIGMequi} in \aref{App:GeomMeanProperty}). When $\Pi_\rho$ and $\Pi_\sigma$ commute, which means each  principal angle between $\supp(\rho)$ and $\supp(\sigma)$ equals either 0 or $\pi/2$, all F-optimal POVMs for distinguishing $\rho$ and $\sigma$ are refinements of $\scrM(\rho,\sigma)$. Otherwise, by contrast, $\scrM(\rho,\sigma)$ and  $\scrM(\sigma,\rho)$ are two inequivalent minimal F-optimal POVMs, and we can construct infinitely many inequivalent alternative choices by \pref{prop:FoptimalJM}.

When $\rho$ is nonsingular, $\scrM(\rho,\sigma)$ is composed of the eigenprojectors of $\caM\left(\rho^{-1},\sigma\right)$. By virtue of \thref{thm:FoptimalReg} (and also \thref{thm:FoptimalEqui}) we can immediately deduce the following corollary, which is also implicit in \rcite{Fuchs1995distinguishability}. 
\begin{corollary}\label{cor:FoptimalInv}
	Suppose  $\rho$ is nonsingular. Then $\scrE$ is F-optimal for distinguishing $\rho$ and $\sigma$ iff each POVM element is supported in an eigenspace of $\caM(\rho^{-1},\sigma)$. It is F-optimal and minimal iff it is composed of the eigenprojectors of $\caM\left(\rho^{-1},\sigma\right)$.
\end{corollary}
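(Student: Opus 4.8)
The plan is to obtain the corollary as a specialization of \thref{thm:FoptimalReg} and \thref{thm:FoptimalEqui} to the case $\rho$ nonsingular (note that $\rho$ nonsingular already guarantees $\rho+\sigma$ nonsingular, so the standing assumptions remain in force). The key elementary observation is that $\rho$ nonsingular forces $\Pi_\rho=\bbone$, so that every simple POVM $\scrE$ trivially commutes with $\Pi_\rho$, and $\Pi_\rho$ trivially commutes with $\Pi_\sigma$. Consequently the commutativity hypotheses appearing in both theorems are automatically satisfied, and the theorems apply unconditionally.

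First I would pin down $\scrM(\rho,\sigma)$ explicitly in this regime. Since $\rho^+=\rho^{-1}$, the generalized geometric mean in \eref{Eq:GeometricMeanGeneral} reduces to $\caM(\rho^{-1},\sigma)$ in \eref{eq:GeometricMean}. Because $\Pi_\rho=\bbone$, the element $\bbone-\Pi_\rho$ has rank $0$ and is dropped, while $\supp(\Pi_\rho\Pi_\sigma\Pi_\rho)=\supp(\sigma)$ gives $\Pi_{\rho,\sigma}=\Pi_\sigma$; hence the remaining element $\Pi_\rho-\Pi_{\rho,\sigma}=\bbone-\Pi_\sigma$ is precisely the eigenprojector of $\caM(\rho^{-1},\sigma)$ with eigenvalue $0$ (its null space being $\Null(\sigma)$). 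Therefore $\scrM(\rho,\sigma)$ consists of exactly the full set of eigenprojectors of $\caM(\rho^{-1},\sigma)$, now including the $0$-eigenspace block.

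Next I would read off the two claims. With $\Pi_\rho=\bbone$ the hypothesis of \thref{thm:FoptimalReg} holds for every $\scrE$, so $\scrE$ is F-optimal iff it is a refinement of $\scrM(\rho,\sigma)$. Applying the refinement criterion recorded just after the definition of $\scrM(\rho,\sigma)$, and noting that $\Null(\rho)=\{0\}$ and $\supp(\rho)=\caH$ make the $\Null(\rho)$ alternative vacuous, this is equivalent to each POVM element being supported in an eigenspace of $\caM(\rho^{-1},\sigma)$, which is the first claim. For the minimality statement, the commutativity $\Pi_\rho\Pi_\sigma=\Pi_\sigma\Pi_\rho$ lets me invoke \thref{thm:FoptimalEqui} (statement~1 implies statement~5), so $\scrM(\rho,\sigma)$ is the unique minimal F-optimal POVM up to relabeling. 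Combined with the identification above, a simple POVM is F-optimal and minimal iff it coincides with $\scrM(\rho,\sigma)$, that is, iff it is composed of the eigenprojectors of $\caM(\rho^{-1},\sigma)$.

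I expect no serious obstacle, since the result is a corollary; the only point demanding care is the bookkeeping in the second step, namely confirming that $\bbone-\Pi_\sigma$ is genuinely the zero-eigenprojector of $\caM(\rho^{-1},\sigma)$, so that "the eigenprojectors of $\caM(\rho^{-1},\sigma)$" is understood to include the $\Null(\sigma)$ block rather than only the positive-eigenvalue projectors. Getting this inclusion right is what makes the clean single-operator statement of the corollary agree with the block decomposition used to define $\scrM(\rho,\sigma)$.
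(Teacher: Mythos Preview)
Your proposal is correct and follows essentially the same route as the paper: the paper notes that when $\rho$ is nonsingular, $\scrM(\rho,\sigma)$ is composed of the eigenprojectors of $\caM(\rho^{-1},\sigma)$, and then invokes \thref{thm:FoptimalReg} and \thref{thm:FoptimalEqui} exactly as you do. Your explicit bookkeeping (that $\Pi_\rho=\bbone$ trivializes the commutativity hypotheses, and that $\bbone-\Pi_\sigma$ is the zero-eigenprojector of $\caM(\rho^{-1},\sigma)$) simply spells out what the paper leaves implicit.
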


So far, by virtue of operator pencils and geometric means, we have clarified the basic structure of minimal F-optimal POVMs, as illustrated in \fref{Fig:illustration}. In the rest of this paper, we offer additional insights on a special case in which all minimal F-optimal POVMs can be characterized completely.

\emph{F-optimal measurements for distinguishing a pure state and a mixed state}---Here we focus on the case in which one of the two states, say $\sigma$, is pure, assuming that $\rho+\sigma$ is nonsingular as before. Now, the minimal F-optimal PVM $\scrM(\sigma,\rho)=\{\bbone-\sigma,\sigma\}$ for distinguishing $\rho$ and $\sigma$ is  binary, while the minimal F-optimal PVM $\scrM(\rho,\sigma)$ is either binary or ternary. If  $\sigma$ commutes with $\Pi_\rho$, then $\scrM(\rho,\sigma)$ is equivalent to $\scrM(\sigma,\rho)$, and every F-optimal POVM is a refinement of $\scrM(\sigma,\rho)$ according to \thref{thm:FoptimalEqui}.

Next, we delve into the more interesting situation in which the pure state $\sigma$ does not commute with $\Pi_\rho$. Then $\rho$ is necessarily singular and $\dim(\Null(\rho))=1$. In addition, $\scrM(\rho,\sigma)$ and $\scrM(\sigma,\rho)$ are not equivalent by  \thref{thm:FoptimalEqui}, and there exist infinitely many inequivalent minimal F-optimal POVMs according to \pref{prop:FoptimalJM}. Is there a simple characterization of all such POVMs? 
To resolve this problem, here we introduce a simpler necessary and sufficient criterion on F-optimal POVMs (cf. \pref{prop:FoptConPencilEig}).

\begin{proposition}\label{pro:FoptimalPure1}
	Suppose  $\sigma$ is a pure state and $\rho\sigma\neq 0$.  Then  $\scrE$ is F-optimal for distinguishing $\rho$ and $\sigma$ iff each POVM element is supported in an eigenspace of the pencil $(\Pi_\rho\sigma, \Pi_\rho)$ with a nonnegative eigenvalue. 
\end{proposition}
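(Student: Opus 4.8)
The plan is to derive this sharpened criterion from the general pencil characterization in \pref{prop:FoptConPencilEig} by exploiting the rank-one structure of $\sigma$. Writing $\sigma=|\psi\rangle\langle\psi|$ for a unit vector $|\psi\rangle$, the defining feature of a pure state is that $\sqrt{\sigma}=\sigma$, so the pencil appearing in \pref{prop:FoptConPencilEig} is simply $(\sigma,U\sqrt{\rho}\lsp)$. The first step is to pin down the action of $U$ on $|\psi\rangle$. Since $\sqrt{\rho}\sqrt{\sigma}=\sqrt{\rho}\lsp|\psi\rangle\langle\psi|=|\phi\rangle\langle\psi|$ with $|\phi\rangle:=\sqrt{\rho}\lsp|\psi\rangle$, and $\sqrt{\sqrt{\rho}\sigma\sqrt{\rho}}=|\phi\rangle\langle\phi|/c$ with $c:=\bigl\||\phi\rangle\bigr\|=\sqrt{\langle\psi|\rho|\psi\rangle}$, the defining equation $\sqrt{\rho}\sqrt{\sigma}\lsp U=\sqrt{\sqrt{\rho}\sigma\sqrt{\rho}}$ forces $U^\dagger|\psi\rangle=|\phi\rangle/c$. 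Here $c>0$ precisely because the hypothesis $\rho\sigma\neq0$ guarantees $|\phi\rangle=\sqrt{\rho}\lsp|\psi\rangle\neq0$. Crucially, only $U^\dagger|\psi\rangle$ will enter the argument, so the residual freedom in $U$ away from $|\psi\rangle$ is irrelevant, which also serves as a consistency check against the fact that $U$ is not unique.

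By \pref{prop:FoptConPencilEig}, $\scrE$ is F-optimal iff every $E_m$ is supported in an eigenspace of $(\sigma,U\sqrt{\rho}\lsp)$ with a nonnegative eigenvalue (including $\infty$); since $\supp(E_m)=\range(E_m)$ for a positive operator, it suffices to match, as subspaces, the eigenspaces of $(\sigma,U\sqrt{\rho}\lsp)$ with those of $(\Pi_\rho\sigma,\Pi_\rho)$ while tracking the sign of the eigenvalue. For a finite eigenvalue $\kappa\ge0$, I would rewrite the eigenequation $(\sigma-\kappa U\sqrt{\rho}\lsp)|w\rangle=0$ as $|\psi\rangle\langle\psi|w\rangle=\kappa U\sqrt{\rho}\lsp|w\rangle$, apply $U^\dagger$, substitute $U^\dagger|\psi\rangle=\sqrt{\rho}\lsp|\psi\rangle/c$, and clear the factor $c$ to obtain the equivalent condition
\begin{equation*}
\sqrt{\rho}\bigl(\langle\psi|w\rangle\lsp|\psi\rangle-\kappa c\lsp|w\rangle\bigr)=0,
\end{equation*}
i.e.\ $\langle\psi|w\rangle\lsp|\psi\rangle-\kappa c\lsp|w\rangle\in\Null(\sqrt{\rho})=\Null(\rho)$. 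Projecting with $\Pi_\rho$ shows this is exactly $(\Pi_\rho\sigma-\kappa c\,\Pi_\rho)|w\rangle=0$. Every step is reversible ($U^\dagger$ is invertible and $c\neq0$), so the eigenspace of $(\sigma,U\sqrt{\rho}\lsp)$ with eigenvalue $\kappa$ coincides with the eigenspace of $(\Pi_\rho\sigma,\Pi_\rho)$ with eigenvalue $\lambda=\kappa c$. Because $c>0$, this is a sign-preserving bijection between finite eigenvalues, so the nonnegative ones correspond.

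It then remains only to treat $\lambda=\infty$: both pencils have singular second argument, and $\Null(U\sqrt{\rho}\lsp)=\Null(\rho)=\Null(\Pi_\rho)$, so their $\infty$-eigenspaces agree and are counted among the admissible (nonnegative) ones. Feeding this complete eigenspace dictionary back into \pref{prop:FoptConPencilEig} yields the claimed criterion. I expect the only delicate point to be the computation of $U^\dagger|\psi\rangle$ together with the realization that nothing else about $U$ is needed; once that is in hand the two eigenequations are manifestly proportional through the positive scalar $c$, and the remaining work is purely the bookkeeping of verifying that the proportionality $\lambda=\kappa c$ preserves nonnegativity and that the $\infty$-eigenspaces match.
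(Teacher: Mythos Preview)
Your proposal is correct and follows essentially the same route as the paper: identify $U^\dagger$ on the range of $\sigma$ via the polar-decomposition condition (the paper writes this at the operator level as $U^\dagger\sigma=\sqrt{\rho}\lsp\sigma/\sqrt{F}$, while you work with $U^\dagger|\psi\rangle=\sqrt{\rho}\lsp|\psi\rangle/c$, which is the same thing since $c=\sqrt{F}$), and then show that the null spaces of $\sigma-\kappa U\sqrt{\rho}$ and $\Pi_\rho\sigma-\kappa c\,\Pi_\rho$ coincide for every finite $\kappa$, with the $\infty$-eigenspaces agreeing trivially. The sign-preserving bijection $\lambda=\kappa c$ and the appeal to \pref{prop:FoptConPencilEig} then finish the argument exactly as in the paper.
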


Surprisingly, F-optimal POVMs for  $\rho$ and $\sigma$ are independent of $\rho$ once $\Pi_\rho$ and $\sigma$ are  fixed. Incidentally, the eigenspaces of $(\Pi_\rho\sigma, \Pi_\rho)$ with eigenvalues 0 and 1 coincide with $\Null(\Pi_\rho\sigma)=\Null(\sigma)$ and  $\supp(\sigma)$, respectively. 
The two eigenspaces are tied to the minimal F-optimal PVM $\scrM(\sigma,\rho)=\{\bbone-\sigma,\sigma\}$.

\begin{figure}
	\centering
	\includegraphics[width=0.4\textwidth]{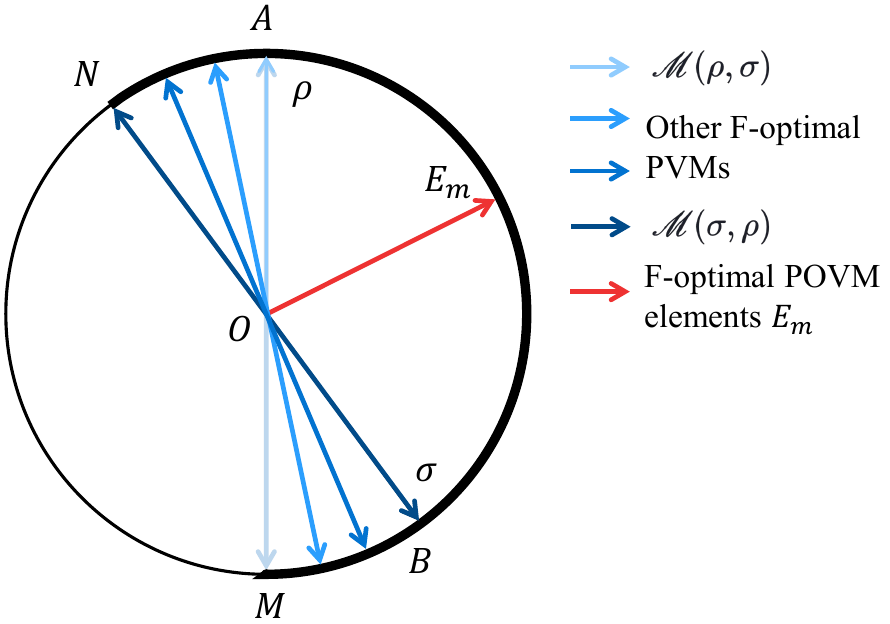}
	\caption{F-optimal POVMs for distinguishing two pure states $\rho$ and $\sigma$ illustrated  on the Bloch sphere.
Here  $A$ and $B$ denote the two pure states $\rho$ and $\sigma$,  while $M$ and $N$ denote their respective antipodal points. Each normalized POVM element of any F-optimal POVM is represented by a point on the major arc $\wideparen{NABM}$. }
	\label{Fig:purebloch}
\end{figure}

As an application of \pref{pro:FoptimalPure1}, first we analyze a simple but crucial case. Suppose $\rho$ is also a pure state, then
$\dim(\caH)=2$, and  both $\scrM(\rho,\sigma)$ and $\scrM(\sigma,\rho)$ are rank-1 PVMs.  In addition, each quantum state or normalized POVM element can be represented by a point on the Bloch sphere. Based on this picture, we can derive a more explicit criterion, which has a simple graphical representation on the Bloch sphere, as illustrated in \fref{Fig:purebloch}.

\begin{proposition}\label{pro:FoptimalPure2}
	Suppose $\dim \caH=2$ and $\rho,\sigma$ are two  pure states on $\caH$ with noncommuting density operators	
	represented by points $A$ and $B$ on the Bloch sphere;  let $M$ and $N$ be their respective antipodal points. Then $\scrE$ is F-optimal for distinguishing $\rho$ and $\sigma$ iff each POVM element is proportional to a pure state whose representative point lies on the major arc  $\wideparen{NABM}$. An F-optimal POVM is minimal iff it is simple. 
\end{proposition}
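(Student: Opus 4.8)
The plan is to specialize the pencil criterion of \pref{pro:FoptimalPure1} to a qubit and then read it off the Bloch sphere. Since $\rho$ and $\sigma$ are noncommuting pure states, they are neither equal nor orthogonal, so $\rho\sigma\neq 0$ and $\Pi_\rho=\rho$; hence \pref{pro:FoptimalPure1} applies and asserts that $\scrE$ is F-optimal iff every POVM element is supported in an eigenspace of the pencil $(\Pi_\rho\sigma,\Pi_\rho)$ with a nonnegative eigenvalue (with $\infty$ counted as nonnegative). Because F-optimality is invariant under conjugating $\rho,\sigma,\scrE$ by a common unitary, and because the two antipodal pairs $\{A,M\}$ and $\{B,N\}$ always span a plane through the center and hence lie on one great circle, I would first rotate into a real orthonormal basis in which this circle carries the real states, writing $\rho=|\psi\rangle\langle\psi|$ with $|\psi\rangle=(1,0)^{\rmT}$ and $\sigma=|\phi\rangle\langle\phi|$ with $|\phi\rangle=(\cos\alpha,\sin\alpha)^{\rmT}$. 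The Bloch angle between $A$ and $B$ is then $2\alpha$, and noncommutativity forces $\alpha\in(0,\pi/2)$.

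Next I would compute the eigenspaces explicitly. A direct calculation shows that $\Pi_\rho\sigma-\lambda\Pi_\rho=|\psi\rangle\langle v_\lambda|$ is rank one for every finite $\lambda$, with kernel spanned by the single direction $|\mu(\lambda)\rangle\propto(\cos\alpha\sin\alpha,\;\lambda-\cos^2\alpha)^{\rmT}$, while $\lambda=\infty$ gives $\Null(\Pi_\rho)=\mathrm{span}\{(0,1)^{\rmT}\}$. The crucial structural feature is that the second component is linear in $\lambda$ with a fixed nonzero first component $\cos\alpha\sin\alpha$, so $\lambda\mapsto|\mu(\lambda)\rangle$ is injective and every eigenspace is one-dimensional. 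Evaluating at the distinguished values pins down the markers: $\lambda=0$ gives $\sigma^\perp=N$, $\lambda=\cos^2\alpha$ gives $\rho=A$, $\lambda=1$ gives $\sigma=B$, and $\lambda=\infty$ gives $\rho^\perp=M$.

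The geometric heart of the argument, and the step I expect to require the most care, is to verify that this algebraic parametrization matches the claimed arc rather than its complement or a folded-over range. Writing $|\mu(\lambda)\rangle\propto(\cos\beta,\sin\beta)^{\rmT}$, the representative point sits at Bloch angle $2\beta$, and $\tan\beta=(\lambda-\cos^2\alpha)/(\cos\alpha\sin\alpha)$ is strictly increasing in $\lambda$. Hence as $\lambda$ runs over $[0,\infty]$ the angle $\beta$ increases monotonically from $\alpha-\tfrac{\pi}{2}$ to $\tfrac{\pi}{2}$, so the Bloch point sweeps without folding from $N$ through $A$ and $B$ to $M$; this is exactly the major arc $\wideparen{NABM}$, of angular length $2\pi-2\alpha>\pi$, while the negative eigenvalues cover precisely the complementary minor arc. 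Combining this with the pencil criterion yields that $\scrE$ is F-optimal iff each element is proportional to a pure state whose point lies on $\wideparen{NABM}$; in particular no F-optimal element can be rank two, since a two-dimensional eigenspace would force $\Pi_\rho\sigma\propto\Pi_\rho$, hence $[\rho,\sigma]=0$, contradicting the hypothesis.

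Finally, for the minimality statement I would invoke \pref{prop:FoptConPencilEig}, by which an F-optimal POVM is minimal iff no two of its elements lie in a common eigenspace. Since every eigenspace here is one-dimensional and distinct eigenvalues give distinct directions, two rank-one elements share an eigenspace iff they are proportional. Thus, for an F-optimal POVM, having no two elements in a common eigenspace coincides with being simple, which gives ``minimal iff simple'' and completes the plan.
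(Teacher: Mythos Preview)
Your proposal is correct and follows essentially the same route as the paper: reduce to the pencil criterion of \pref{pro:FoptimalPure1}, choose a real basis with $\rho=|0\rangle\langle 0|$ and $\sigma$ at angle $\alpha$, and trace the one-dimensional eigenspaces of $(\Pi_\rho\sigma,\Pi_\rho)$ over $\lambda\in[0,\infty]$ along the Bloch circle. The paper packages the geometric step as a separate lemma and traces the orthogonal vector $\sigma|\psi\rangle-\lambda|\psi\rangle$ along the complementary arc $\wideparen{BMNA}$, whereas you compute the eigenstate directly and argue monotonicity of $\beta$; these are equivalent calculations. Your treatment of the minimality clause via \pref{prop:FoptConPencilEig} and the one-dimensionality of the eigenspaces is more explicit than the paper's, which simply declares the whole proposition a corollary; the only implicit step you rely on is that the eigenspaces of $(\Pi_\rho\sigma,\Pi_\rho)$ coincide (up to a rescaling of $\lambda$) with those of $(\sqrt{\sigma},U\sqrt{\rho}\,)$, which is exactly what the proof of \pref{pro:FoptimalPure1} establishes.
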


In the case of a qubit, if $\rho$ and $\sigma$  commute or if $\rho$ or $\sigma$ is nonsingular, then all simple F-optimal POVMs for $\rho$ and $\sigma$ are identical to $\scrM(\sigma,\rho)$ up to relabeling by \thref{thm:FoptimalEqui}. In conjunction with \pref{pro:FoptimalPure2} we can completely characterize F-optimal PVMs and POVMs for distinguishing two general single-qubit states.

Next, suppose $\rank(\rho)\geq 2$ and the pure state $\sigma$ does not commute with $\Pi_\rho$, which means $\dim(\caH)\geq 3$. As shown in \aref{app:FoptimalPureMix}, we can establish a one-to-one correspondence between F-optimal POVMs for (distinguishing) $\rho$ and $\sigma$ and F-optimal POVMs for $\varrho$ and $\sigma$, where $\varrho=\Pi_2\Pi_\rho\Pi_2$ is a pure state and $\Pi_2$ is the projector onto $\Null(\rho)+\supp(\sigma)$.  So this case  is no more difficult than the case of two pure states.


\emph{Summary}---We clarified the structure of optimal and minimal optimal quantum measurements for distinguishing two general quantum states with respect to the fidelity. When the projectors onto the two density operators commute, the minimal optimal measurement is uniquely determined by a geometric mean tied to the two density operators, and all optimal measurements are refinements of this special projective measurement. Otherwise, there exist infinitely many inequivalent minimal optimal measurements determined by the eigenspaces of a special operator pencil. 
Surprisingly, this dichotomy is tied to such a weak commutativity condition rather than the commutativity condition between the two density operators themselves as one might expect a priori. In addition, we showed that there exists  a unique  minimal optimal measurement that commutes with the projector onto the support of one of the two density operators. We also provided a complete characterization of all minimal optimal measurements in the case where one quantum state is pure. Our work sheds valuable insights on one of the most important similarity measures in quantum information science and a number of related topics, including quantum measurements and quantum incompatibility. Our results are amenable to experimental demonstration with current technologies in various platforms, including photonic, trapped ion, superconducting platforms; notably, it would be interesting to demonstrate inequivalent  minimal optimal measurements  in experiments. 
During our investigation, we derived a series of results on operator pencils and geometric means, which are of independent  interest. 



\let\oldaddcontentsline\addcontentsline
\renewcommand{\addcontentsline}[3]{}

\bibliography{reference}

\let\addcontentsline\oldaddcontentsline
\appendix

\tableofcontents

\bigskip

In this appendix, first we offer an illustrative example to support \thref{thm:FoptimalEqui} and provide more details on F-optimal measurements for distinguishing a pure state $\sigma$ and a mixed state $\rho$. Then we clarify F-optimal and minimal F-optimal measurements when $\rho+\sigma$ is singular. Next, we clarify minimal optimal measurements with respect to the trace distance for comparison. Subsequently, in preparation for proving our main results, we derive a number of auxiliary results on operator geometric means and operator pencils, which are of independent interest. Finally, we prove the results on F-optimal measurements as presented in the main text.

As in the main text,  $\caH$ denotes a $d$-dimensional Hilbert space, and $\bbone$ denotes the identity operator on $\caH$.  For any linear operator $A$ on $\caH$, $\Null(A)$ and $\supp(A)$ denote the null space and support of $A$, respectively. When $A$ is a positive operator, $\Pi_A$ denotes the projector onto the support of $A$. In addition, we denote by $\caP(\caH)$  the cone of positive (semidefinite) operators on~$\caH$.

\section{An example to support \thref{thm:FoptimalEqui}}\label{app:FoptExample}
When $\sigma$ commutes with $\Pi_\rho$, it is not difficult to find a unitary operator $U$ on $\caH$ that commutes with $\Pi_\rho$ and satisfies the condition $\sqrt{\rho}\sqrt{\sigma}\lsp U = \sqrt{\sqrt{\rho}\lsp\sigma\sqrt{\rho}}$ (see \lref{lem:AuxLemPolarUcom} in \aref{app:AuxLemmas}). Then it is straightforward to clarify the eigenvalues and eigenspaces of the pencil $(\sqrt{\sigma},U\sqrt{\rho})$, from which \thref{thm:FoptimalEqui} can be proved. When instead $\Pi_\sigma$ commutes with $\Pi_\rho$, but $\sigma$ does not commute with $\Pi_\rho$,
some equivalence relations  in \thref{thm:FoptimalEqui} are highly nontrivial. Here we offer a concrete example to help build intuition.

Consider the following two qutrit states:
\begin{equation}
	\begin{aligned}
		\rho&=\frac{1}{4}(2|0\>\<0|+|0\>\<1|+|1\>\<0|+2|1\>\<1|),\\\
		\sigma&=\frac{1}{4}(2|1\>\<1|+|1\>\<2|+|2\>\<1|+2|2\>\<2|),
	\end{aligned}
\end{equation}
with $\Pi_\rho=|0\>\<0|+|1\>\<1|$ and $\Pi_\sigma = |1\>\<1|+|2\>\<2|$. Note that $\Pi_\rho$ and $\Pi_\sigma$ commute with each other, and Statement 1 in  \thref{thm:FoptimalEqui} holds as desired. In addition, both $\caM(\rho^+,\sigma)$ and $\caM(\sigma^+,\rho)$ have an extremely simple form: $\caM(\rho^+,\sigma)=\caM(\sigma^+,\rho)=|1\>\<1|$, which means $\scrM(\rho,\sigma)=\scrM(\sigma,\rho)=\{|0\>\<0|,|1\>\<1|,|2\>\<2|\}$ up to relabeling. So the first four statements in \thref{thm:FoptimalEqui} hold.

Next, we  choose a specific unitary operator $U$  that satisfies  $\sqrt{\rho}\sqrt{\sigma}\lsp U = \sqrt{\sqrt{\rho}\lsp\sigma\sqrt{\rho}}$. Let $a=\sqrt{\sqrt{3}+2}$ and $b=3\sqrt{3}+5$, then a desired unitary can be expressed with respect to the computational basis as follows:
\begin{equation}
	\renewcommand{\arraystretch}{1.1}
	U = \frac{1}{4a^2}\begin{pmatrix}
		-2a^3 & 2a & 0 \\
		a^2 & a^4  & -2a \\
		1 & a^2 & 2a^3 \\
	\end{pmatrix}.
\end{equation}
In addition, we have
\begin{equation}
	\renewcommand{\arraystretch}{1.1}
	U\sqrt{\rho} = \frac{1}{8a^2}\begin{pmatrix}
		-2\sqrt{6}\lsp a^2 & 0 & 0 \\
		b &  2b  & 0 \\
		\sqrt{2}\lsp a  &  2\sqrt{2}\lsp a & 0 \\
	\end{pmatrix}.
\end{equation}
Now, it is straightforward to determine all eigenvalues  of $(\sqrt{\sigma},U\sqrt{\rho})$ and the corresponding normalized eigenvectors (up to overall phase factors), with the result
\begin{equation}
	\begin{aligned}
		\lambda_0 &= 0,&\quad &|\lambda_0\> = |0\>,\\
		\lambda_1 &= 1,&\quad& |\lambda_1\> = |1\>,\\
		\lambda_2 &= \infty,&\quad& |\lambda_2\>=|2\>.
	\end{aligned}
\end{equation}
According to \pref{prop:FoptConPencilEig}, 
every F-optimal POVM for distinguishing $\rho$ and $\sigma$ is equivalent to and also a refinement of $\scrM(\rho,\sigma)$. So $\scrM(\rho,\sigma)$ is the unique minimal F-optimal POVM for distinguishing $\rho$ and $\sigma$ up to relabeling, which confirms \thref{thm:FoptimalEqui} for this specific case.

\section{\label{app:FoptimalPureMix}F-optimal measurements for distinguishing a pure state and a mixed state}
Here we provide more details on F-optimal measurements for distinguishing a pure state $\sigma$ and a mixed state $\rho$ with $\rank(\rho)\geq 2$, where $\sigma$ does not commute with $\Pi_\rho$, which means $\dim(\caH)\geq 3$. We will reduce this case  to the case of two pure states clarified in the main text.

Let $\caV=\Null(\rho)+\supp(\sigma)$ and $\Pi_2$ be the projector onto $\caV$; then $\dim(\caV)=2$ and $\Pi_2$ commutes with $\Pi_\rho$. 
Let $\varrho=\Pi_2\Pi_\rho\Pi_2$; then $\varrho$ is a pure state supported in $\caV$
and satisfies $\varrho\sigma=\Pi_\rho\sigma$. Based on this observation, we can construct  F-optimal POVMs for distinguishing $\rho$ and $\sigma$ from the counterpart for $\varrho$ and $\sigma$, and vice versa.

Specifically, if $\scrE$ is an F-optimal POVM for  $\rho$ and $\sigma$, then $\Pi_2 \scrE \Pi_2$ is  an F-optimal POVM for $\varrho$ and $\sigma$. If in addition $\scrE$ is minimal, then  $\Pi_2 \scrE \Pi_2$ is minimal after deleting a zero POVM element if necessary. Conversely, if $\scrE'$ is a minimal F-optimal POVM for  $\varrho$ and $\sigma$, then $\scrE'\cup \{\bbone-\Pi_2\}$ is a minimal F-optimal POVM for distinguishing $\rho$ and $\sigma$ after merging POVM elements supported in $\Null(\sigma)$ if necessary. In this way, we establish a one-to-one correspondence between minimal F-optimal POVMs for $\rho$ and $\sigma$ and the counterpart for $\varrho$ and $\sigma$ (and similarly for PVMs). Since F-optimal POVMs for distinguishing $\varrho$ and $\sigma$ are completely characterized by \pref{pro:FoptimalPure2}, the structure of F-optimal POVMs for $\rho$ and $\sigma$ is also clear. In contrast with the qubit case, now $\scrM(\sigma,\rho)$ is the unique binary F-optimal POVM up to relabeling; any other F-optimal POVM, including $\scrM(\rho,\sigma)$, contains at least three POVM elements.

Next, we clarify the reason underlying the above one-to-one correspondence. If we regard $(\varrho\sigma, \varrho)$ as a pencil on $\caV$, then $(\varrho\sigma, \varrho)$ and $(\Pi_\rho\sigma, \Pi_\rho)$ have the same eigenvalues. Given a complex number $\lambda$ (including $\infty$), let $\Pi_\lambda$ be the eigenprojector of $(\Pi_\rho\sigma, \Pi_\rho)$ with eigenvalue $\lambda$ and let $\Pi_\lambda'$ be the counterpart of $(\varrho\sigma, \varrho)$. Then we have the following simple correspondence:
\begin{align}\label{eq:EigMap}
	\Pi_\lambda=\Pi_\lambda'+(\bbone-\Pi_2)\delta_{\lambda,0}=\begin{cases}
		\bbone-\sigma &\lambda = 0,\\
		\Pi'_\lambda&\lambda\neq0.
	\end{cases}
\end{align}
Notably, if $\lambda\neq 0$, then the eigenspace of $(\Pi_\rho\sigma, \Pi_\rho)$ with eigenvalue $\lambda$ is a one-dimensional subspace of $\caV$ and coincides with the eigenspace of  $(\varrho\sigma, \varrho)$ with eigenvalue $\lambda$.

\section{F-optimal measurements when $\rho+\sigma$ is singular}

Now,  $\rho$ and $\sigma$ can also be regarded as density operators on $\supp(\rho+\sigma)$. Not surprisingly, a POVM on $\caH$ is F-optimal for distinguishing $\rho$ and $\sigma$ iff its restriction on $\supp(\rho+\sigma)$ is F-optimal, as shown in the following proposition. 

\begin{proposition}\label{Prop:Projection}
	Suppose $\rho, \sigma$ are two quantum states on $\caH$ and $\scrE=\{E_m\}_m$ is a POVM on $\caH$.  Let $\Pi$ be the projector onto $\supp(\rho+\sigma)$ and 	 
	\begin{equation}\label{Eq:Projection}
		\scrE'=\Pi\scrE\Pi:=\{\Pi E_m \Pi\}_m, 
	\end{equation}
	regarded as a POVM on the support of $\rho+\sigma$.
	Then $\scrE$ is F-optimal for distinguishing $\rho$ and $\sigma$ iff $\scrE'$ is F-optimal for distinguishing $\rho$ and $\sigma$. In addition, if $\scrE$ is F-optimal and minimal, then $\scrE'$ is F-optimal and minimal.
\end{proposition}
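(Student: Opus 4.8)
\emph{Proof idea.} The plan is to exploit that $\rho$ and $\sigma$ are supported entirely on $\supp(\rho+\sigma)$, so that projecting by $\Pi$ alters neither the induced distributions nor the quantum fidelity. First I would record that for positive operators $\Null(\rho+\sigma)=\Null(\rho)\cap\Null(\sigma)$, whence $\supp(\rho),\supp(\sigma)\subseteq\supp(\rho+\sigma)$ and $\Pi\rho\Pi=\rho$, $\Pi\sigma\Pi=\sigma$. Cyclicity of the trace then gives $\tr(\Pi E_m\Pi\,\rho)=\tr(E_m\,\Pi\rho\Pi)=\tr(E_m\rho)$ and likewise for $\sigma$, so $\scrE$ and $\scrE'$ induce identical distributions, $\Lambda_{\scrE'}(\rho)=\Lambda_\scrE(\rho)$ and $\Lambda_{\scrE'}(\sigma)=\Lambda_\scrE(\sigma)$; since $\sum_m\Pi E_m\Pi=\Pi$, the family $\scrE'$ is a genuine POVM on $\supp(\rho+\sigma)$. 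Because $F_\scrE(\rho,\sigma)$ depends only on these two distributions, while $F(\rho,\sigma)$ is unchanged when $\rho,\sigma$ are regarded as operators on $\supp(\rho+\sigma)$, we obtain $F_{\scrE'}(\rho,\sigma)=F_\scrE(\rho,\sigma)$ and hence the first equivalence.

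For minimality I would establish two facts about $\scrE'$: that it is simple, and that it admits no nontrivial F-optimal coarse graining. The latter is a lifting argument. Any coarse graining $\scrC'$ of $\scrE'$ with stochastic matrix $S$ lifts to the coarse graining $\scrC=\bigl\{\sum_k S_{jk}E_k\bigr\}_j$ of $\scrE$ on $\caH$, and one checks $\Pi\scrC\Pi=\scrC'$. By the first equivalence $\scrC$ is F-optimal iff $\scrC'$ is, and coarse-graining relations (hence equivalence) are preserved under projection onto $\supp(\rho+\sigma)$. Thus a nontrivial F-optimal coarse graining of $\scrE'$ would lift to an F-optimal coarse graining of $\scrE$ that is again nontrivial (were it equivalent to $\scrE$, its projection would be equivalent to $\scrE'$), contradicting minimality of $\scrE$.

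The main obstacle is simplicity of $\scrE'$: projection can in principle render two non-proportional elements proportional, so simplicity of $\scrE$ is not enough and I would invoke minimality instead. Suppose $\Pi E_j\Pi$ and $\Pi E_k\Pi$ were proportional for some $j\neq k$ (this also subsumes the case $\Pi E_j\Pi=0$, where $\supp(E_j)\subseteq\Null(\rho+\sigma)$). Then the two outcomes contribute proportional pairs $(p_j,q_j)=c\,(p_k,q_k)$ to both distributions, and proportionality yields $\sqrt{(p_j+p_k)(q_j+q_k)}=\sqrt{p_jq_j}+\sqrt{p_kq_k}$, so merging $E_j$ and $E_k$ in $\scrE$ leaves $F_\scrE(\rho,\sigma)$ unchanged. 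Since the elements of the simple POVM $\scrE$ are not proportional, this merging is a nontrivial coarse graining that preserves F-optimality, contradicting minimality of $\scrE$. Hence no such pair exists, $\scrE'$ is simple and free of zero elements, and together with the lifting argument $\scrE'$ is minimal. I expect the recognition that minimality (not mere simplicity) must be used here, via the fact that merged outcomes with proportional statistics preserve the classical fidelity, to be the only nonobvious point.
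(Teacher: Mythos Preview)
Your proposal is correct and follows essentially the same route as the paper: both use $\Pi\rho\Pi=\rho$, $\Pi\sigma\Pi=\sigma$ to get $F_{\scrE'}(\rho,\sigma)=F_\scrE(\rho,\sigma)$ for the first part, then establish simplicity of $\scrE'$ by observing that two proportional projected elements would allow a nontrivial F-optimal merging in $\scrE$, and finally lift an arbitrary F-optimal coarse graining of $\scrE'$ back to $\scrE$ via the same stochastic matrix to invoke minimality of $\scrE$. Your write-up is somewhat more explicit than the paper's about \emph{why} the merged POVM remains F-optimal (via the proportional-probability identity $\sqrt{(p_j+p_k)(q_j+q_k)}=\sqrt{p_jq_j}+\sqrt{p_kq_k}$), but the underlying argument is the same.
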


 When $\rho+\sigma$ is singular, \pref{prop:FoptConParallel} in the main text still holds. In addition, the two PVMs  $\scrM(\rho,\sigma)$ and $\scrM(\sigma,\rho)$  are still F-optimal for distinguishing $\rho$ and $\sigma$, but they cannot be equivalent. If $\supp(\sigma)\leq \supp(\rho)$, then  $\scrM(\rho,\sigma)$ is not minimal; if $\supp(\rho)\leq \supp(\sigma)$, then $\scrM(\sigma,\rho)$ is not minimal. If neither $\supp(\sigma)\leq \supp(\rho)$ nor $\supp(\rho)\leq \supp(\sigma)$, then both  $\scrM(\rho,\sigma)$ and $\scrM(\sigma,\rho)$ are  F-optimal and minimal, but they are not equivalent.

 Whenever $\rho+\sigma$ is singular, there exist infinitely many inequivalent minimal F-optimal POVMs for distinguishing $\rho$ and $\sigma$. To be concrete, let  $\scrC=\{C_j\}_j$ be a minimal F-optimal POVM on $\supp(\rho+\sigma)$ and let $\Gamma_0$ be the projector onto $\Null(\rho+\sigma)$. Then, for any probability distribution $\{a_j\}_j$ on $\scrC$, the POVM  $\{C_j+a_j\Gamma_0\}_j$ on $\caH$ is F-optimal and minimal;  moreover, any two distinct POVMs in this family are inequivalent. On the other hand, all these POVMs lead to the same classical probability distribution when performed on $\rho$ or $\sigma$, so they are essentially the same for distinguishing $\rho$ and~$\sigma$.

In the rest of this appendix, we prove \pref{Prop:Projection}.
\begin{proof}[Proof of \pref{Prop:Projection}]
	By assumption, it is straightforward to verify that $F_{\scrE'}(\rho,\sigma)
	=F_\scrE(\rho,\sigma)$. Therefore, $\scrE$ is F-optimal for distinguishing $\rho$ and $\sigma$ iff $\scrE'$ is F-optimal for distinguishing $\rho$ and $\sigma$.

	Next, suppose $\scrE=\{E_m\}_m$ is F-optimal and minimal and is thus automatically simple. Then $\scrE'=\{E'_m\}_m$ with $E_m'=\Pi E_m\Pi$ is F-optimal as shown above. In addition, $\scrE'$ is simple. To prove this point, suppose, by way of contradiction, that $\scrE'$ is not simple. Then there exist two POVM elements in $\scrE'$, say $E'_1$ and $E'_2$, that are linearly dependent. Let $\scrC$ be the POVM constructed from $\scrE$ by replacing the two POVM elements $E_1$ and $E_2$ with their sum $E_1+E_2$. Then $\scrC$ is a nontrivial coarse graining of $\scrE$ that is also F-optimal, which contradicts the assumption that $\scrE$ is F-optimal and minimal. This contradiction shows that $\scrE'$ is simple.

	Now, let $\scrB'=\{B'_j\}_j$
	be a coarse graining of $\scrE'$ that is F-optimal for distinguishing $\rho$ and $\sigma$. Then $B'_j$ can be expressed as follows:
	\begin{align}
		\!\!	B_j' = \sum_k S_{jk}E'_k =\sum_k S_{jk}\Pi E_k\Pi =\Pi \sum_k S_{jk} E_k\Pi,
	\end{align}
	where $S$ is a stochastic matrix. Construct the POVM  $\scrB = \{B_j\}_j$ with  $B_j = \sum_k S_{jk} E_k$; then  $\scrB$ is  a coarse graining of $\scrE$.
	By definition, we have
	\begin{align}
		F_{\scrB}(\rho,\sigma)=F_{\scrB'}(\rho,\sigma)=F(\rho,\sigma),
	\end{align}
	so the POVM $\scrB$ is also F-optimal for distinguishing $\rho$ and $\sigma$. Since $\scrE$ is F-optimal and  minimal by assumption, $\scrB$ is necessarily equivalent to $\scrE$, which means $\scrB'$ is equivalent to $\scrE'$.  Therefore, any nontrivial coarse graining of $\scrE'$ is not  F-optimal. It follows that $\scrE'$ is F-optimal and minimal given that $\scrE'$ is  F-optimal and simple as shown above. This observation completes the proof of \pref{Prop:Projection}.
\end{proof}

\section{Optimal measurements with respect to the trace distance}\label{app:Toptimal}

The trace distance is a popular measure for quantifying the distinction between two quantum states \cite{Fuchs1996Measures,Fuchs1999Measures,Nielsen2010Quantum}. To complement the results presented in the main text, here we briefly discuss optimal and minimal optimal measurements with respect to the trace distance.

\subsection{Basic properties}

Let $\bm{p}\equiv\{p_m\}_m$ and $\bm{q}\equiv\{q_m\}_m$ be two probability distributions indexed by a same set.  
The (classical) trace distance  between $\bm{p}$ and $\bm{q}$ reads 
 \begin{equation}\label{Eq:ClassicalTraceDistance}
 D(\bm{p},\bm{q})= \frac{1}{2}\sum_m|p_m-q_m|
=\max_{\bbS}\sum_{m \in \bbS} (p_m-q_m), 
\end{equation}
where the maximization is taken over all subsets of the index set. Suppose the two probability distributions are chosen with probability $1/2$ each, then the maximum probability of successfully distinguishing between the two distributions is given by
\begin{equation}
P_\rms=\frac{1+D(\bm{p},\bm{q})}{2}.
\end{equation}
A larger trace distance corresponds to a higher success probability, highlighting the operational significance of the trace distance.

Next, we turn to the quantum analog. 	Suppose $\rho, \sigma$ are two quantum states on $\caH$ and $\scrA, \scrB, \scrE=\{E_m\}_m$ are three POVMs on $\caH$. The trace distance between $\rho$ and $\sigma$  reads~\cite{Nielsen2010Quantum}
\begin{equation}
D(\rho,\sigma) = \frac{1}{2}\tr\left|\rho-\sigma\right|.
\end{equation}
The trace distance and fidelity are closely tied to each other and satisfy the well-known Fuchs-van de Graaf inequalities~\cite{Fuchs1999Measures,Nielsen2010Quantum}:
\begin{equation}
    1-\sqrt{F(\rho,\sigma)}\le D(\rho,\sigma)\le\sqrt{1-F(\rho,\sigma)}.
\end{equation}
To understand the connection with the classical counterpart, denote by $D_\scrE(\rho,\sigma)$ the classical trace distance between the probability distributions $\Lambda_\scrE(\rho)$ and $\Lambda_\scrE(\sigma)$ induced by the POVM $\scrE$. Then  $D_\scrE\left(\rho,\sigma\right)\leq D(\rho,\sigma)$ according to Helstrom-Holevo theorem, and the inequality is tight \cite{Helstrom1969Estimation,Nielsen2010Quantum, Fuchs1999Measures}. So the trace distance $D(\rho,\sigma)$ determines how well the two states $\rho$ and $\sigma$ can be distinguished. The POVM $\scrE$ is \emph{T-optimal} for distinguishing $\rho$ and $\sigma$  if the inequality is saturated, that is, $D_\scrE\left(\rho,\sigma\right)=D(\rho,\sigma)$.  
Similar to the fidelity, the induced classical trace distance is monotonic under data processing.
\begin{proposition}\label{prop:DCoarse}
Suppose $\scrA$ is a coarse graining of $\scrB$. Then
\begin{equation}\label{eq:DCoarse}
D_\scrA(\rho,\sigma)\le D_\scrB(\rho,\sigma).
\end{equation}
\end{proposition}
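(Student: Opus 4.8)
The plan is to recognize that \pref{prop:DCoarse} is a purely classical statement about the total variation distance under a stochastic map, and then to prove it via the triangle inequality. Write $\scrA=\{A_j\}_j$ and $\scrB=\{B_k\}_k$, so that $A_j=\sum_k S_{jk}B_k$ for a stochastic matrix $S$ by the definition of coarse graining in \eqref{eq:DefCoarse}. Set $p_k=\tr(B_k\rho)$ and $q_k=\tr(B_k\sigma)$, the components of $\Lambda_\scrB(\rho)$ and $\Lambda_\scrB(\sigma)$. By linearity of the trace, the components of $\Lambda_\scrA(\rho)$ and $\Lambda_\scrA(\sigma)$ are $\tr(A_j\rho)=\sum_k S_{jk}p_k$ and $\tr(A_j\sigma)=\sum_k S_{jk}q_k$, respectively, so the coarse graining acts on the induced distributions exactly as the stochastic matrix $S$ acts classically.

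First I would pin down the relevant normalization of $S$: since $\sum_j A_j=\bbone=\sum_k B_k$, the matrix $S$ satisfies $\sum_j S_{jk}=1$ for every $k$. Next, using the definition $D(\bmp,\bmq)=\tfrac{1}{2}\sum_m|p_m-q_m|$ together with the triangle inequality, I would write
\begin{equation}
D_\scrA(\rho,\sigma)=\frac{1}{2}\sum_j\left|\sum_k S_{jk}(p_k-q_k)\right|\le\frac{1}{2}\sum_j\sum_k S_{jk}\lsp|p_k-q_k|.
\end{equation}
Then I would exchange the order of summation and invoke $\sum_j S_{jk}=1$ to obtain
\begin{equation}
D_\scrA(\rho,\sigma)\le\frac{1}{2}\sum_k|p_k-q_k|\sum_j S_{jk}=\frac{1}{2}\sum_k|p_k-q_k|=D_\scrB(\rho,\sigma),
\end{equation}
which is precisely the desired inequality \eqref{eq:DCoarse}.

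There is no genuine obstacle here; the only point requiring care is the direction of stochasticity of $S$, namely that the column sums (not row sums) equal one, a fact forced by the POVM normalization $\sum_j A_j=\bbone$. This is exactly what makes the interchange of summations produce the factor $1$ rather than an uncontrolled weight. The argument parallels \pref{pro:CoarseIneq} for the fidelity but is more elementary, since the monotonicity of the total variation distance under stochastic maps follows directly from the triangle inequality rather than from an external data-processing theorem.
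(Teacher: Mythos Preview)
Your proposal is correct and follows essentially the same route as the paper's own proof: both reduce to the classical monotonicity of total variation distance under a stochastic map, apply the triangle inequality to $\sum_k S_{jk}(p_k-q_k)$, and then use the column normalization $\sum_j S_{jk}=1$ to collapse the double sum. The only minor remark is that $\sum_j S_{jk}=1$ holds by the definition of a stochastic matrix in \eqref{eq:DefCoarse} rather than being forced by $\sum_j A_j=\bbone$ (the latter alone would not suffice if the $B_k$ were linearly dependent), but this does not affect the argument.
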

A T-optimal POVM  is \emph{minimal} if it is simple and any nontrivial coarse graining is not T-optimal.

To characterize T-optimal POVMs,  note that $\rho-\sigma$ can be expressed as
 \begin{align}
 \rho-\sigma=Q_+-Q_-,
\end{align}
where $Q_+$ and $Q_-$ are two positive  operators that are mutually orthogonal, that is, $Q_+Q_-=0$, and are called the positive part and negative part of $\rho-\sigma$, respectively. Let $\Pi_+$ and $\Pi_-$
be the projectors onto the supports of $Q_+$ and $Q_-$, respectively, and let $\Pi_0$ be the projector onto the null space of $\rho-\sigma$. 
The following proposition clarifies  T-optimal POVMs for distinguishing $\rho$ and $\sigma$ \cite{Nielsen2010Quantum}. 
\begin{proposition}\label{prop:Toptimal} 
The POVM $\scrE$ is T-optimal for distinguishing $\rho$ and $\sigma$ iff each POVM element $E_m$ is supported in $\Null(Q_+)$ or $\Null(Q_-)$, that is,
 \begin{equation}\label{eq:Toptimal}
 Q_+ E_m = 0\quad \mathrm{or}\quad Q_- E_m =0 \quad\forall m,
\end{equation}
where $Q_+$ and $Q_-$ denote the positive part and negative part of $\rho-\sigma$, respectively.
\end{proposition}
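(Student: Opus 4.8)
The plan is to bound the induced classical trace distance $D_\scrE(\rho,\sigma)$ from above by $D(\rho,\sigma)$ through a short chain of two elementary inequalities, and then read off the support conditions from the requirement that both inequalities be saturated. Concretely, I would first record the two ingredients that drive everything: since $\tr(\rho-\sigma)=0$ we have $\tr Q_+=\tr Q_-$, so $D(\rho,\sigma)=\frac12\tr|\rho-\sigma|=\tr Q_+$; and for each outcome $m$ the probability difference factorizes as $p_m-q_m=\tr(E_m Q_+)-\tr(E_m Q_-)$, with both traces nonnegative because $Q_\pm$ and $E_m$ are positive.

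For the necessity direction I would use the subset formula $D_\scrE(\rho,\sigma)=\max_\bbS\sum_{m\in\bbS}(p_m-q_m)=\sum_{m:\,p_m>q_m}(p_m-q_m)$ and estimate
\begin{equation}
\sum_{m:\,p_m>q_m}(p_m-q_m)\le\sum_{m:\,p_m>q_m}\tr(E_m Q_+)\le\sum_m\tr(E_m Q_+)=\tr Q_+,
\end{equation}
where the first step discards the nonpositive terms $-\tr(E_m Q_-)$, the second adds in the remaining nonnegative terms $\tr(E_m Q_+)$, and the last uses $\sum_m E_m=\bbone$. T-optimality forces the whole chain to collapse to equality: the first inequality becomes an equality only if $\tr(E_m Q_-)=0$ whenever $p_m>q_m$, and the second only if $\tr(E_m Q_+)=0$ whenever $p_m\le q_m$.

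The key technical point, which I would isolate as a small observation, is that for positive operators $\tr(E_m Q_\pm)=\tr\bigl(Q_\pm^{1/2}E_m Q_\pm^{1/2}\bigr)=0$ is equivalent to $Q_\pm^{1/2}E_m^{1/2}=0$, i.e.\ to the operator identity $Q_\pm E_m=0$, and hence to $E_m$ being supported in $\Null(Q_\pm)$. Applying this to the two equality conditions above yields, for every $m$, either $Q_- E_m=0$ (when $p_m>q_m$) or $Q_+ E_m=0$ (when $p_m\le q_m$), which is exactly the claimed condition. For the sufficiency direction I would argue directly: if each $E_m$ annihilates $Q_+$ or $Q_-$, the indices split into a group with $p_m-q_m=\tr(E_m Q_+)\ge0$ and a group with $p_m-q_m=-\tr(E_m Q_-)\le0$; choosing $\bbS$ to be the first group and using that the remaining elements annihilate $Q_+$ gives $D_\scrE(\rho,\sigma)\ge\sum_m\tr(E_m Q_+)=\tr Q_+=D(\rho,\sigma)$, which combined with the universal bound $D_\scrE\le D$ forces equality and hence T-optimality.

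I do not expect a genuine obstacle here, since the result is classical; the only places requiring care are the bookkeeping of which POVM elements fall into which null space and the positivity argument that upgrades the scalar equalities $\tr(E_m Q_\pm)=0$ to the operator equations $Q_\pm E_m=0$. A minor subtlety is the treatment of outcomes with $p_m=q_m$, where both conditions may hold simultaneously; these contribute zero to $D_\scrE$ and can be assigned to either group without affecting the argument.
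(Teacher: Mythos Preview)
Your proposal is correct and follows essentially the same approach as the paper: both bound $D_\scrE(\rho,\sigma)$ above by $\tr Q_+$ and identify the saturation condition as $\tr(E_m Q_+)=0$ or $\tr(E_m Q_-)=0$, then upgrade the scalar equalities to $Q_\pm E_m=0$ via positivity. The only cosmetic difference is that the paper works directly with the absolute-value form $\tfrac12\sum_m|\tr(E_m(Q_+-Q_-))|\le\tfrac12\sum_m\tr(E_m(Q_++Q_-))$ via the one-line inequality $|a-b|\le a+b$ for $a,b\ge0$, which packages your two inequalities into a single step and handles necessity and sufficiency at once.
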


When $\rho$ and $\sigma$ are distinct single-qubit states, both $\Pi_+$ and $\Pi_-$ are rank-1 eigenprojectors of $\rho-\sigma$, and $Q_+$ and $Q_-$ are proportional to  $\Pi_+$ and $\Pi_-$, respectively. Therefore, all simple T-optimal POVMs for distinguishing $\rho$ and $\sigma$ are equivalent to the PVM $\{\Pi_+,\Pi_-\}$.

\begin{proposition}\label{pro:ToptimalMinimal}
The POVM $\scrE$ is T-optimal and minimal for distinguishing $\rho$ and $\sigma$ iff it has two POVM elements 
and has the form $\{\Pi_++Q_0,\bbone-\Pi_+-Q_0\}$ up to relabeling, where $\Pi_+$ and $\Pi_-$
are the projectors onto the supports of the positive part and negative part of $\rho-\sigma$, respectively, and $0\leq Q_0\leq \bbone$ is supported in $\Null(\rho-\sigma)$. 
\end{proposition}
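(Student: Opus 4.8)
The plan is to prove both directions using the characterization of T-optimal POVMs in \pref{prop:Toptimal}, together with the orthogonal decomposition $\caH=\supp(Q_+)\oplus\supp(Q_-)\oplus\Null(\rho-\sigma)$, so that $\bbone=\Pi_++\Pi_-+\Pi_0$. Observe that \pref{prop:Toptimal} says precisely that each element of a T-optimal $\scrE$ is supported in $\Null(Q_+)=\range(\Pi_-+\Pi_0)$ or in $\Null(Q_-)=\range(\Pi_++\Pi_0)$. Since $\rho\neq\sigma$ we have $Q_+\neq0$, and as $\tr Q_+=\tr(\rho-\sigma)+\tr Q_-=\tr Q_-$ we also get $Q_-\neq0$; hence $\Pi_+,\Pi_-\neq0$. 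For the ``if'' direction I would simply verify that $\scrE=\{\Pi_++Q_0,\bbone-\Pi_+-Q_0\}$, with $0\le Q_0\le\bbone$ supported in $\Null(\rho-\sigma)$, meets the requirements: the first element is supported in $\range(\Pi_++\Pi_0)=\Null(Q_-)$ and the second in $\range(\Pi_-+\Pi_0)=\Null(Q_+)$, so $\scrE$ is T-optimal by \pref{prop:Toptimal}; it is simple because the two elements have disjoint nonzero parts in $\supp(Q_+)$ and $\supp(Q_-)$ and so cannot be proportional; and it is minimal because the only coarse graining that reduces the element count is $\{\bbone\}$, which has $D_{\{\bbone\}}(\rho,\sigma)=0<D(\rho,\sigma)$ and thus fails T-optimality, while any genuine two-element mixing of the elements produces an operator with nonzero parts in both $\supp(Q_+)$ and $\supp(Q_-)$, violating the support condition of \pref{prop:Toptimal}.

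For the ``only if'' direction, suppose $\scrE$ is T-optimal and minimal. The key tool is a merging argument. If two distinct elements $E_i,E_j$ are both supported in $\range(\Pi_++\Pi_0)$, then so is their sum, so the coarse graining obtained by merging $E_i$ and $E_j$ is still T-optimal by \pref{prop:Toptimal}. This coarse graining, after simplification, has strictly fewer distinct elements than $\scrE$ (here $E_i\not\propto E_j$ by simplicity), so it cannot be equivalent to $\scrE$ because equivalent simple POVMs coincide up to relabeling; this contradicts minimality. Hence at most one element of $\scrE$ is supported in $\range(\Pi_++\Pi_0)$, and by the symmetric argument at most one is supported in $\range(\Pi_-+\Pi_0)$. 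An element supported purely in $\range(\Pi_0)$ would lie in both subspaces and thus occupy both ``slots,'' forcing every element of $\scrE$ to be supported in $\range(\Pi_0)$, whence $\sum_m E_m$ could not equal $\bbone$ since $\Pi_0\neq\bbone$. Therefore no element is supported purely in $\range(\Pi_0)$, and $\scrE$ contains exactly one element $E_1$ with nonzero component in $\supp(Q_+)$ and exactly one element $E_2$ with nonzero component in $\supp(Q_-)$ (existence forced by completeness of the $\Pi_+$ and $\Pi_-$ blocks), so $\scrE=\{E_1,E_2\}$ is binary.

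Finally I would extract the stated form by writing $E_1,E_2$ in block form with respect to $\supp(Q_+)\oplus\supp(Q_-)\oplus\Null(\rho-\sigma)$. Since $E_1$ is supported in $\range(\Pi_++\Pi_0)$, its $\supp(Q_-)$ rows and columns vanish, and symmetrically for $E_2$. Matching $E_1+E_2=\bbone$ block by block gives $\Pi_+E_1\Pi_+=\Pi_+$, $\Pi_-E_2\Pi_-=\Pi_-$, vanishing of the $\supp(Q_+)$--$\Null(\rho-\sigma)$ off-diagonal block of $E_1$ (since the corresponding block of $E_2$ is zero and the identity is block-diagonal), and $\Pi_0E_1\Pi_0+\Pi_0E_2\Pi_0=\Pi_0$. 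Setting $Q_0:=\Pi_0E_1\Pi_0$ then yields $E_1=\Pi_++Q_0$ with $Q_0\ge0$ supported in $\Null(\rho-\sigma)$ and $Q_0\le\Pi_0\le\bbone$, together with $E_2=\bbone-\Pi_+-Q_0$, as claimed. I expect the main obstacle to be the bookkeeping in the reverse direction, namely making the merging/minimality step watertight---in particular confirming that reducing the number of distinct simple elements genuinely breaks equivalence, and correctly excluding elements supported purely in $\Null(\rho-\sigma)$; once $\scrE$ is shown to be binary with the correct support structure, the block computation forcing the off-diagonal terms to vanish is routine.
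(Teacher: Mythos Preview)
Your proposal is correct and follows the same approach as the paper: use the support criterion of \pref{prop:Toptimal} together with a merging argument to reduce to a binary POVM, then read off the stated form. You are in fact more thorough than the paper, which only treats the ``only if'' direction and asserts the final block form without the explicit block-by-block computation you outline.
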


According to \pref{pro:ToptimalMinimal}, if $\rho-\sigma$ is nonsingular, then  $\{\Pi_+,\Pi_-\}$ is the unique minimal T-optimal POVM up to relabeling for distinguishing $\rho$ and $\sigma$; otherwise,  there exist infinitely many inequivalent minimal T-optimal POVMs. For instance, the POVM 
\begin{equation}
\scrA=\{\Pi_++\lambda\Pi_{0},\Pi_-+(1-\lambda)\Pi_{0}\}
\end{equation}
is T-optimal and minimal for each $\lambda$ in the interval $[0,1]$. Moreover, any two distinct POVMs in this family are inequivalent. However, these POVMs commute and are compatible, in sharp contrast with the results on minimal F-optimal POVMs  (cf. \pref{prop:FoptimalJM} and  \thref{thm:FoptimalEqui}).

\subsection{Proofs of \psref{prop:DCoarse}-\ref{pro:ToptimalMinimal}}
\begin{proof}[Proof of \pref{prop:DCoarse}]
Suppose $\scrA=\{A_j\}_j$ and $\scrB=\{B_k\}_k$. 
Denote the probability vectors $\Lambda_\scrA(\rho)$, $\Lambda_\scrA(\sigma)$, $\Lambda_\scrB(\rho)$, $\Lambda_\scrB(\sigma)$ by $\bmp, \bmq, \bmp', \bmq'$, respectively.
By assumption, $\scrA$ is a coarse graining of $\scrB$, so each POVM element $A_j$ of $\scrA$ can be expressed as  $A_j = \sum_k S_{jk} B_k$,
where $S$ is a stochastic matrix. Accordingly, we have
\begin{equation}
	p_j = \sum_k S_{jk}p_k',\quad
	q_j = \sum_k S_{jk} q_k'.
\end{equation}
Therefore,
\begin{align}
	D_\scrA(\rho,\sigma)&=\frac{1}{2}\sum_{j}|p_j-q_j|= \frac{1}{2}\sum_{j}\left|\sum_{k} S_{jk} (p'_k-q'_k)\right|\nonumber\\
	&\le \frac{1}{2}\sum_{j,k} S_{jk}\left|p'_k-q'_k\right|=D_\scrB(\rho,\sigma),
\end{align}
where the inequality follows from the triangle inequality and the last equality holds because $S$ is a stochastic matrix, which means $\sum_{j}S_{jk}=1$ for all $k$. The above equation confirms \eref{eq:DCoarse} and completes the proof of \pref{prop:DCoarse}.
\end{proof}

\begin{proof}[Proof of \pref{prop:Toptimal}]
By definition ,we can deduce that
\begin{align}
&D_\scrE\left(\rho,\sigma\right)=\frac{1}{2}\sum_m\left|\tr\left[E_m(\rho-\sigma)\right]\right|\nonumber\\
&=\frac{1}{2}\sum_m\left|\tr\left[E_m(Q_+-Q_-)\right]\right|\le\frac{1}{2}\sum_m \tr\left[E_m(Q_++Q_-)\right]\nonumber\\
&=\frac{1}{2}\sum_m\tr(E_m|\rho-\sigma|)=\frac{1}{2}\tr|\rho-\sigma|=D(\rho,\sigma),
\end{align}
where the first equality holds because   $\rho-\sigma=Q_+-Q_-$. The inequality follows from the triangle inequality and is saturated iff $\tr(Q_+ E_m)=0$ or $\tr(Q_+ E_m)=0$ for each $m$, which is the case iff \eref{eq:Toptimal} holds, that is, each  POVM element $E_m$ is supported in $\Null(Q_+)$ or $\Null(Q_-)$. This observation completes the proof of \pref{prop:Toptimal}.
\end{proof}


\begin{proof}[Proof of \pref{pro:ToptimalMinimal}]
First, by way of contradiction, we will prove that the number of  POVM elements in any minimal T-optimal POVM cannot surpass 2. Suppose $\scrE$ is a minimal T-optimal POVM for distinguishing $\rho$ and $\sigma$ and has at least three POVM elements. According to  \pref{prop:Toptimal}, each POVM element in $\scrE$ is supported in either $\Null(Q_+)$ or $\Null(Q_-)$, and there exist at least two elements that are supported in  $\Null(Q_+)$ simultaneously or in $\Null(Q_-)$ simultaneously. 
Therefore, we can construct a nontrivial coarse graining of $\scrE$  by merging these POVM elements. The resulting POVM is still  T-optimal, which contradicts the assumption that $\scrE$ is T-optimal and minimal. This contradiction shows that any minimal T-optimal POVM has only two POVM elements.

Next, suppose $\scrE$ is a binary POVM that is T-optimal and minimal. Then, according to \pref{prop:Toptimal},  one POVM  element is supported  $\Null(Q_-)$, and the other is supported in $\Null(Q_+)$. So $\scrE$  necessarily has the form  $\scrE=\{\Pi_++Q_0,\bbone-\Pi_+-Q_0\}$ up to relabeling, where $Q_+$ and $Q_-$ denote the positive part and negative part of $\rho-\sigma$, respectively, and $Q_0$ is a positive operator supported in $\Null(\rho-\sigma)$  and satisfies $0\leq Q_0\leq \bbone$. This observation completes the proof of \pref{pro:ToptimalMinimal}.
\end{proof}

\section{\label{app:AuxLemmas}Auxiliary lemmas}

\begin{lemma}\label{lem:AuxLemPQ}
Suppose $P$ and $Q$ are two projectors on $\caH$, and $\Pi_{PQP}$ and $\Pi_{QPQ}$ are projectors onto $PQP$ and $QPQ$, respectively. Then the following four statements are equivalent.
	\begin{enumerate}
	\item $P$ and $Q$ commute;
	
	\item $\supp(PQ P)=\supp(P)\cap\supp(Q)$;
	
	\item $\Null(PQ P)=\Null(P)+\Null(Q)$;
	
	\item $PQ P$ and $QP Q$ have the same support. 
	
	\item 	$PQ P$ and $QP Q$ commute;
	
	\item 	$\Pi_{PQP}$ and $\Pi_{QPQ}$ commute.
\end{enumerate}
\end{lemma}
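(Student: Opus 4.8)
The plan is to prove that statement (1) implies all the others by direct manipulation, and then to establish the converses uniformly by reducing to the canonical simultaneous form of two projectors (equivalently, the CS/principal-angle decomposition).

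First I would record the implications out of (1). If $P$ and $Q$ commute, then $PQP = QP^2 = QP = PQ$ is itself the orthogonal projector onto $\supp(P)\cap\supp(Q)$, giving (2); symmetrically $QPQ = PQ$ as well, so $PQP = QPQ$ and (4),(5),(6) are immediate, while taking orthogonal complements in (2) yields $\Null(PQP) = (\supp(P)\cap\supp(Q))^\perp = \Null(P)+\Null(Q)$, which is (3). For the reverse directions I would first collect two elementary facts valid for arbitrary projectors: the chain $\supp(P)\cap\supp(Q)\subseteq\supp(PQP)\subseteq\supp(P)$ (every vector fixed by both $P$ and $Q$ is fixed by $PQP$, while $\range(PQP)\subseteq\range(P)$), and the operator identities $PQP=(PQ)(PQ)^\dagger$, $QPQ=(PQ)^\dagger(PQ)$. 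The latter show that $PQP$ and $QPQ$ share the same nonzero spectrum, so in particular $\rank(PQP)=\rank(QPQ)$; combined with the inclusions this already gives (2)$\Leftrightarrow$(4), since equality of dimensions forces the always-present subspace $\supp(P)\cap\supp(Q)$ to exhaust both supports.

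Next I would invoke the canonical form: $\caH$ decomposes orthogonally into subspaces reducing both $P$ and $Q$, on each of which either $P,Q$ are jointly scalar ($0$ or $1$)---the commuting corners---or the subspace is two-dimensional with $P$ and $Q$ rank-one projectors separated by a principal angle $\theta\in(0,\pi/2)$. Because $P$ and $Q$, hence $PQP$, $QPQ$, their supports, null spaces, support projectors and mutual commutators are all block-diagonal, every one of the listed statements (2)--(6) is a conjunction of the same statement restricted to the blocks, and so holds globally iff it holds on each block. On a two-dimensional generic block a direct computation gives $PQP=\cos^2\theta\,\Pi_P$ and $QPQ=\cos^2\theta\,\Pi_Q$, where $\Pi_P$ and $\Pi_Q$ are the distinct rank-one projectors onto the $P$- and $Q$-lines; from this one reads off that each of (2)--(6) fails (the two supports differ, their intersection is trivial while $\supp(PQP)$ is a line, $\Null(PQP)$ is a single line rather than the whole block, and two distinct rank-one projectors do not commute). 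On the corner blocks every statement holds trivially. Hence each of (2)--(6) holds iff there is no generic block, i.e.\ iff $P$ and $Q$ commute, closing all equivalences with (1).

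The main obstacle is this converse half, and specifically isolating the right reduction: I expect the cleanest route to be the canonical two-projector form, but one must verify carefully that the operations involved genuinely respect the orthogonal decomposition (sums and intersections of subspaces, as well as commutators, all reduce block-wise, which is what legitimizes the block-by-block argument). A secondary subtlety is statement (6): commuting support projectors does not in general force the underlying positive operators to commute, so (6)$\Rightarrow$(1) cannot be argued by a generic operator-theoretic principle and really needs the block picture, where on a generic block the two support projectors already fail to commute. If a self-contained argument is preferred, the principal angles can be produced directly by diagonalizing $PQP$ on $\supp(P)$: its eigenvalue-$1$ space is exactly $\supp(P)\cap\supp(Q)$, its eigenvalues in $(0,1)$ are the $\cos^2\theta$, and statement (2) is precisely the absence of such intermediate eigenvalues, which is equivalent to commutativity.
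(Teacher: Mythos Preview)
Your proposal is correct and follows essentially the same route as the paper: both invoke the canonical simultaneous form of two projectors (principal-angle / CS decomposition) to reduce the equivalences to the rank-one case, where they are checked directly. Your write-up is considerably more detailed than the paper's very terse proof (which simply states the canonical basis and asserts the rank-one case is ``direct''), and your extra observation that $(2)\Leftrightarrow(4)$ already follows from $\rank(PQP)=\rank(QPQ)$ together with the always-present inclusion $\supp(P)\cap\supp(Q)\subseteq\supp(PQP)$ is a nice touch, but the underlying strategy is the same.
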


\begin{lemma}\label{lem:AuxLemABsupp}
	Suppose $A,B\in \caP(\caH)$. Then $\supp(AB)\leq \supp(A)$ iff $B$ commutes with $\Pi_A$. In addition, $AB$ and $BA$  have the same support iff  $A$ commutes with $\Pi_B$ and $B$ commutes with $\Pi_A$. 
\end{lemma}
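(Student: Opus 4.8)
The plan is to translate every support condition into a statement about null spaces and ranges, exploiting the finite-dimensional identity $\supp(M)=\Null(M)^\perp=\range(M^\dagger)$. Because $A$ and $B$ are self-adjoint, this yields the pivotal identities $\supp(AB)=\range(BA)$ and $\supp(BA)=\range(AB)$, together with the two containments that hold \emph{unconditionally}: $\supp(AB)=\range(BA)\le\range(B)=\supp(B)$ and, symmetrically, $\supp(BA)\le\supp(A)$. These ``free'' containments will carry much of the weight in the second claim.

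For the first claim I would note that $\supp(AB)\le\supp(A)$ is equivalent to $\Null(A)\subseteq\Null(AB)$, since taking orthogonal complements reverses the inclusion. As $\Null(AB)=\{x:Bx\in\Null(A)\}$, this inclusion says precisely that $\Null(A)$ is invariant under $B$. Since $B$ is self-adjoint, a subspace is $B$-invariant iff its orthogonal complement is, so $\Null(A)$ being $B$-invariant is equivalent to both $\Null(A)$ and $\supp(A)=\Null(A)^\perp$ being $B$-invariant, which in turn is equivalent to $B$ commuting with the projector $\Pi_A$. This chain of equivalences settles the first claim.

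For the second claim I would argue the two directions separately. \textbf{Forward:} assume $\supp(AB)=\supp(BA)=:W$. Combining with the free containments gives $W=\supp(BA)=\range(AB)\le\supp(A)$, i.e.\ $\supp(AB)\le\supp(A)$, so the first claim forces $B$ to commute with $\Pi_A$; the symmetric estimate $W=\supp(AB)=\range(BA)\le\supp(B)$ gives $\supp(BA)\le\supp(B)$, whence $A$ commutes with $\Pi_B$. \textbf{Converse:} either commutation condition already forces $\Pi_A$ and $\Pi_B$ to commute, because $\Pi_B$ is a spectral projector of $B$ and hence commutes with any operator that commutes with $B$. Thus $A$ commutes with both $\Pi_A$ and $\Pi_B$, and likewise $B$, so both operators are block diagonal with respect to the four mutually orthogonal subspaces cut out by $\Pi_A$ and $\Pi_B$. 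On the block $V_{11}=\supp(A)\cap\supp(B)$ both $A$ and $B$ restrict to invertible positive operators, while on every other block at least one of them vanishes; hence $AB$ and $BA$ are invertible on $V_{11}$ and zero elsewhere, forcing $\supp(AB)=\supp(BA)=V_{11}$.

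The main obstacle is the converse of the second claim. The forward direction and the first claim only ever deliver \emph{upper} bounds of the type $\supp(\cdot)\le\supp(A)\cap\supp(B)$, and two subspaces both contained in $V_{11}$ need not coincide. Upgrading these bounds to the genuine equality $\supp(AB)=\supp(BA)$ is exactly where the joint block-diagonal structure, together with the invertibility of the restrictions of $A$ and $B$ to $V_{11}$, becomes indispensable; this is the step I expect to require the most care. As a consistency check, specializing to projectors $A=\Pi_A$ and $B=\Pi_B$ recovers the fact that $\Pi_A$ and $\Pi_B$ commute iff $\Pi_A\Pi_B$ and $\Pi_B\Pi_A$ share a support, complementing the $PQP$/$QPQ$ criterion of \lref{lem:AuxLemPQ}.
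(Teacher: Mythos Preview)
Your proof is correct and follows the same overall architecture as the paper's: prove the first equivalence, then feed the unconditional containments $\supp(BA)\le\supp(A)$ and $\supp(AB)\le\supp(B)$ into it for the forward direction of the second claim, and finish the converse by a block argument. The execution differs in two respects. For the first claim the paper works purely algebraically: from $\supp(AB)\le\supp(A)$ it writes $AB\Pi_A=AB$, left-multiplies by $A^+$ to obtain $\Pi_A B\Pi_A=\Pi_A B$, and then invokes Hermiticity of the left side to conclude $\Pi_A B=B\Pi_A$. Your invariant-subspace formulation ($\Null(A)$ is $B$-invariant iff, by self-adjointness of $B$, $[B,\Pi_A]=0$) avoids the Moore--Penrose inverse entirely and is arguably more transparent. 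For the converse of the second claim the paper simply asserts $\supp(AB)=\supp(BA)=\supp(A)\cap\supp(B)$ without further comment; your observation that either commutation hypothesis already forces $[\Pi_A,\Pi_B]=0$ (since $\Pi_B$ lies in the commutant of $B$), followed by the explicit four-block decomposition and the invertibility of $A|_{V_{11}}$ and $B|_{V_{11}}$, supplies the justification the paper omits. Both routes are short; yours trades a one-line pseudoinverse trick for a cleaner structural picture.
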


\begin{lemma}\label{lem:AuxLemPolarUcom}
    Suppose $A,B\in\caP(\caH)$. Then there exists a unitary operator  $U$ on $\caH$ that commutes with $\Pi_A$ and satisfies  $\sqrt{A}\sqrt{B}\lsp U=\sqrt{\sqrt{A}\lsp B\sqrt{A}}$  iff $B$ commutes with~$\Pi_A$.
\end{lemma}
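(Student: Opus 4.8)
The plan is to prove the two directions separately, in both cases treating the sought-after unitary $U$ as (an extension of) the partial isometry in the polar decomposition of $\sqrt{A}\sqrt{B}$. Throughout I set $R:=\sqrt{\sqrt{A}\lsp B\sqrt{A}}$, the right-hand side of the defining equation, and record three elementary facts: $\sqrt{A}$ is a function of $A$, hence commutes with $\Pi_A$, with $\Null(\sqrt{A})=\Null(A)$; the operator $R$ is positive semidefinite with $\range(R)\subseteq\range(\sqrt{A})=\supp(A)$, so that $R(\bbone-\Pi_A)=0$; and any unitary $U$ commuting with $\Pi_A$ restricts to a bijection of $\Null(A)$ and of $\supp(A)$.

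For the "only if" direction I would assume a unitary $U$ commuting with $\Pi_A$ satisfies $\sqrt{A}\sqrt{B}\lsp U=R$ and right-multiply by $\bbone-\Pi_A$. Since $U$ commutes with $\Pi_A$ and $R(\bbone-\Pi_A)=0$, this gives $\sqrt{A}\sqrt{B}(\bbone-\Pi_A)\lsp U=0$, and cancelling the invertible $U$ yields $\sqrt{A}\sqrt{B}(\bbone-\Pi_A)=0$. Because $\Null(\sqrt{A})=\Null(A)$, this forces $\range\bigl(\sqrt{B}(\bbone-\Pi_A)\bigr)\subseteq\Null(A)$, i.e.\ $\Pi_A\sqrt{B}(\bbone-\Pi_A)=0$; taking the adjoint and using the Hermiticity of $\sqrt{B}$ and $\Pi_A$ shows that both off-diagonal blocks of $\sqrt{B}$ with respect to $\caH=\supp(A)\oplus\Null(A)$ vanish, so $\sqrt{B}$ --- and therefore $B=(\sqrt{B})^2$ --- commutes with $\Pi_A$. (Equivalently, one can invoke \lref{lem:AuxLemABsupp}.)

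For the "if" direction I would use the hypothesis that $B$ commutes with $\Pi_A$ to block-diagonalize everything: writing $A=A_1\oplus 0$ and $B=B_1\oplus B_2$ with respect to $\caH=\supp(A)\oplus\Null(A)$, functional calculus gives $\sqrt{A}\sqrt{B}=(\sqrt{A_1}\sqrt{B_1})\oplus 0$ and $R=\sqrt{\sqrt{A_1}B_1\sqrt{A_1}}\oplus 0$. Hence the defining equation decouples: the $\Null(A)$ block reads $0=0$, so any unitary may be placed there, and it remains only to produce a unitary $U_1$ on $\supp(A)$ with $\sqrt{A_1}\sqrt{B_1}\lsp U_1=\sqrt{\sqrt{A_1}B_1\sqrt{A_1}}$. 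This is precisely the requirement that $U_1^\dagger$ be the partial isometry of the left polar decomposition of $\sqrt{A_1}\sqrt{B_1}$; since the $\rank$ of an operator equals that of its adjoint, this partial isometry extends to a genuine unitary on $\supp(A)$, and setting $U:=U_1\oplus\bbone_{\Null(A)}$ gives a unitary commuting with $\Pi_A$ that solves the equation.

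The main obstacle is conceptual rather than computational: it lies in the "if" direction, where one must turn the polar partial isometry into a true unitary and, more importantly, recognize that the commutativity hypothesis is exactly what makes both sides of the equation block-diagonal and thereby frees the action of $U$ on $\Null(A)$. Without it, $\sqrt{A}\sqrt{B}$ need not be block-diagonal, and every solving unitary is then forced to map part of $\supp(A)$ into $\Null(A)$, violating commutativity with $\Pi_A$; pinning down this failure mode is the delicate point that the "only if" argument above rules out in general.
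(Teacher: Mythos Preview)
Your proposal is correct and follows essentially the same route as the paper: both directions hinge on the polar decomposition of $\sqrt{B}\sqrt{A}$ and on whether its range lies in $\supp(A)$. The paper's version is terser---it phrases everything through the support inclusion $\supp(\sqrt{A}\sqrt{B})\le\supp(A)$ and invokes \lref{lem:AuxLemABsupp} for both directions---while you unpack the same content via explicit block-diagonalization, but the arguments are equivalent.
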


\begin{proof}[Proof of \lref{lem:AuxLemPQ}]
For any projectors $P$ and $Q$, there exists two orthogonal basis $\{|\psi_i\>\}_i$ and $\{|\varphi_i\>\}_i$ such that
\begin{equation}
\begin{gathered}
P=\sum_{i=1}^{\rank P} |\psi_i\>\<\psi_i|,\quad Q=\sum_{j=1}^{\rank Q}|\varphi_j\>\<\varphi_j|,\\
|\<\psi_i|\varphi_j\>|=q_i \delta_{i,j}\quad\forall i\le\rank P, j\le\rank Q.
\end{gathered}
\end{equation}
Therefore, it suffices to prove the case when $\rank P=\rank Q=1$. In this case, it is direct to show that the six statements in \lref{lem:AuxLemPQ} are equivalent.
\end{proof}

\begin{proof}[Proof of \lref{lem:AuxLemABsupp}]
If $\supp(AB)\le\supp(A)$, then we have $AB\Pi_A=AB$. After multiplying both sides of the equation by $A^+$ from the left side, we can deduce that
\begin{align}
\Pi_AB\Pi_A=\Pi_AB=B\Pi_A,
\end{align}
so $B$ commutes with $\Pi_A$. Here the last equality holds because $\Pi_AB\Pi_A$ is Hermitian. Conversely, if $B$ commutes with $\Pi_A$, then  $AB\Pi_A=AB$, which implies that  $\supp(AB)\le \supp(A)$. Therefore,  $\supp(AB)\leq \supp(A)$ iff $B$ commutes with $\Pi_A$. By symmetry we can deduce that   $\supp(BA)\leq \supp(B)$ iff $A$ commutes with $\Pi_B$

Next, we turn to the second statement in \lref{lem:AuxLemABsupp}. 
If $AB$ and $BA$  have the same support, then $\supp(AB)\leq \supp(A)$ and $\supp(BA)\leq \supp(B)$. So $A$ commutes with $\Pi_B$ and $B$ commutes with $\Pi_A$ according to the above discussion. Conversely, if  $A$ commutes with $\Pi_B$ and $B$ commutes with $\Pi_A$, then
\begin{align}
\supp(AB)=\supp(BA)=\supp(A)\cap\supp(B),
\end{align}
which completes the proof of \lref{lem:AuxLemABsupp}.
\end{proof}

\begin{proof}[Proof of \lref{lem:AuxLemPolarUcom}]
By assumption we can deduce that $\supp\left(\sqrt{A}\lsp B\sqrt{A}\lsp\right)\le\supp(\sqrt{A}\lsp)=\supp(A)$. 	
If  $B$ commutes with $\Pi_A$, then $\sqrt{B}$ also commutes with $\Pi_A$, and $\supp\left(\sqrt{A}\sqrt{B}\lsp\right)\le \supp(A)$ by \lref{lem:AuxLemABsupp}. So  there exists a unitary operator  $U$ on $\caH$ that commutes with $\Pi_A$ and satisfies  $\sqrt{A}\sqrt{B}\lsp U=\sqrt{\sqrt{A}\lsp B\sqrt{A}}$.

Conversely, if there exists a unitary operator  $U$ that commutes with $\Pi_A$ and satisfies  $\sqrt{A}\sqrt{B}\lsp U=\sqrt{\sqrt{A}\lsp B\sqrt{A}}$, then $U^\dag$ also  commutes with $\Pi_A$ and we can deduce that  $\supp\left(\sqrt{A}\sqrt{B}\lsp\right)\leq \supp(A)$. Therefore, $\sqrt{B}$ commutes with $\Pi_{\sqrt{A}}$, which means $B$ commutes with $\Pi_A$ and completes the proof of \lref{lem:AuxLemPolarUcom}.
\end{proof}

\section{Properties of  operator geometric means}\label{App:GeomMeanProperty}

Before proving the results on the optimal measurements with respect to the fidelity as presented in the main text, here we summarize the key properties of the geometric mean of two positive operators $A$ and $B$ on $\caH$. When $A, B$ are nonsingular (or positive-definite), their geometric mean, denoted by $\caM(A, B)$, is defined as
\begin{equation}\label{eq:DefGeoMean}
	\caM(A, B):= \sqrt{A} \sqrt{\sqrt{A^{-1}}\lsp B\sqrt{A^{-1}}} \sqrt{A}.
\end{equation}
This geometric mean has the following properties~\cite{Lawson2001GM,Bhatia2007GeoMean}:
\begin{enumerate}
	\item $\caM(A,B)$ is positive definite.
	\item $\caM(A,B)^{-1} = \caM\left(A^{-1}, B^{-1}\right)$.
	\item $\caM(A,B)=\caM(B,A)$.
	\item $\caM(A,B)$ is the unique positive solution of the matrix equation $XA^{-1}X=B$. Notably, $\caM(A,B)=\bbone$ iff $A=B^{-1}$.
\end{enumerate}
Note that \eref{eq:DefGeoMean} is still well-defined when $B$ is singular as long as $A$ is nonsingular. When  $A$ is singular, we can extend the above definition by virtue of the Moore-Penrose generalized inverse (MPGI)~\cite{Barata2012moore,Ben2002GInverse}.

Generally, for any linear operator $O$ on $\caH$, the Moore-Penrose generalized inverse of $O$, denoted by $O^+$, is uniquely determined by the following four conditions:
\begin{enumerate}
	\item $OO^+O=O$;
	\item $O^+OO^+ = O^+$;
	\item $\left(OO^+\right)^\dagger = OO^+$;
	\item $\left(O^+O\right)^\dagger = O^+O$.
\end{enumerate}
By definition we have $(O^+)^+=O$; in addition, 
$OO^+$ and $O^+O$ are projectors onto $\range(O)$ and $\range(O^+)$, respectively. If $O$ is invertible, then the MPGI of $O$ coincides with its inverse $O^{-1}$.  If $O$ is normal, that is, $O^\dag O=O O^\dag$, then $O^+$ has  the same support as $O$ and is the inverse of $O$ in $\supp(O)$, that is, $OO^+=O^+O=\Pi_O$, where $\Pi_O$ is the projector onto the support of $O$. If in addition $O$ is positive semidefinite, then $(O^x)^+=(O^+)^x$ and $(O^+)^x O^x =O^x (O^+)^x=\Pi_O$
for $x>0$.

Now, the (generalized) geometric mean of two general positive operators $A, B$ on $\caH$ can be defined as follows: 
\begin{equation}
	\caM(A,B):= \sqrt{A}\sqrt{\sqrt{A^+}\lsp B\sqrt{A^+}}\sqrt{A},
\end{equation} 
which is a positive operator by definition. 
Next, we summarize the basic properties of the geometric mean of two positive operators.  

\subsection{Basic properties}

\begin{lemma}\label{lem:GMsuppNull}
	Suppose $A,B\in\caP(\caH)$,  $\Pi_A$ and $\Pi_B$ are the projectors onto the supports of $A$ and $B$, respectively, 
	and $\caM = \caM(A,B)$. Then 
	\begin{gather}
		\Null(\caM)=\Null(\Pi_A\Pi_B \Pi_A), \label{eq:NullM}\\ \supp(\caM)=\supp(\Pi_A\Pi_B \Pi_A), \label{eq:suppM}\\
		\Null(A)\leq \Null(\caM) \leq  \Null(A)+\Null(B), \label{eq:NullMrho+}\\
		\supp(A)\cap \supp(B)\leq \supp(\caM)\leq \supp(A). \label{eq:suppMrho+}
	\end{gather}
\end{lemma}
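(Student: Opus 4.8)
The plan is to establish the four support/null-space identities in \lref{lem:GMsuppNull} by reducing everything to the behaviour of $\caM(A,B)$ on $\supp(A)$, since $\sqrt{A^+}$ annihilates $\Null(A)$ and $\sqrt{A}$ maps into $\supp(A)$. First I would prove \eqref{eq:NullM} and \eqref{eq:suppM}, which are the heart of the lemma; the remaining inclusions \eqref{eq:NullMrho+} and \eqref{eq:suppMrho+} then follow by combining them with an application of \lref{lem:AuxLemPQ}. Throughout I write $\Pi_A$ for the projector onto $\supp(A)$ and recall $\sqrt{A}\sqrt{A^+}=\sqrt{A^+}\sqrt{A}=\Pi_A$.

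For \eqref{eq:NullM} and \eqref{eq:suppM}, since $\caM=\sqrt{A}\,\sqrt{\sqrt{A^+}B\sqrt{A^+}}\,\sqrt{A}$ is positive, I would compute its null space directly. A vector $|v\rangle$ lies in $\Null(\caM)$ iff $\langle v|\caM|v\rangle=0$, which by positivity is equivalent to $\sqrt{\sqrt{A^+}B\sqrt{A^+}}\,\sqrt{A}|v\rangle=0$, hence to $\sqrt{A^+}B\sqrt{A^+}\sqrt{A}|v\rangle=0$, i.e. $\sqrt{A^+}B\,\Pi_A|v\rangle=0$. Because $\supp(B)=\supp(\sqrt{B})$ and $\sqrt{A^+}$ is injective on $\supp(A)=\range(\sqrt{A^+})$, this condition says $\sqrt{B}\,\Pi_A|v\rangle$ has image under $\sqrt{A^+}$ equal to zero; tracking the supports carefully, I expect this to collapse to $\Pi_A\Pi_B\Pi_A|v\rangle=0$, giving $\Null(\caM)=\Null(\Pi_A\Pi_B\Pi_A)$. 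The cleanest route is to observe that $\caM$ and $\Pi_A\Pi_B\Pi_A$ have the same null space by showing each kernel condition implies the other through the chain of equivalent vanishing statements above; \eqref{eq:suppM} is then immediate by taking orthogonal complements.

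With \eqref{eq:NullM} in hand, the inclusions \eqref{eq:NullMrho+} follow quickly: since $\Pi_A\Pi_B\Pi_A$ factors through $\Pi_A$, we have $\Null(A)=\Null(\Pi_A)\subseteq\Null(\Pi_A\Pi_B\Pi_A)=\Null(\caM)$ for the left inclusion, and for the right inclusion a vector killed by both $\Pi_A$ and $\Pi_B$ is certainly killed by $\Pi_A\Pi_B\Pi_A$, giving $\Null(\caM)\subseteq\Null(A)+\Null(B)$ after checking the decomposition on the complementary subspace. The support inclusions \eqref{eq:suppMrho+} are the orthogonal-complement duals of \eqref{eq:NullMrho+}, using $\supp(A)\cap\supp(B)=(\Null(A)+\Null(B))^\perp$.

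The main obstacle will be \eqref{eq:NullM}: establishing the precise equality $\Null(\caM)=\Null(\Pi_A\Pi_B\Pi_A)$ rather than a mere inclusion requires care with the Moore-Penrose inverse, specifically verifying that $\sqrt{A^+}B\,\Pi_A|v\rangle=0$ is genuinely equivalent to $\Pi_A\Pi_B\Pi_A|v\rangle=0$ and not just to a one-sided containment. The subtlety is that $\sqrt{A^+}$ kills $\Null(A)$, so one must confirm that the relevant vector $\sqrt{B}\,\Pi_A|v\rangle$ (or $B\Pi_A|v\rangle$) lands in $\supp(A)$ before concluding it vanishes; I would handle this by noting that $\sqrt{A^+}$ is injective precisely on $\supp(A)=\range(A)$ and projecting $B\Pi_A|v\rangle$ onto $\supp(A)$ and $\Null(A)$ separately, then using $\Pi_A B\Pi_A|v\rangle=0 \iff \Pi_A\Pi_B\Pi_A|v\rangle=0$ via the equivalence $\Pi_A B\Pi_A=\Pi_A\sqrt{B}\,\sqrt{B}\,\Pi_A$ together with positivity.
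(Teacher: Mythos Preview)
Your proposal is correct and follows essentially the same route as the paper: compute $\Null(\caM)$ by a chain of kernel equalities starting from the factorization $\caM=\bigl(R^{1/4}\sqrt{A}\bigr)^{\dagger}R^{1/4}\sqrt{A}$ with $R=\sqrt{A^+}B\sqrt{A^+}$, then pass through $\Null(\sqrt{A^+}B\Pi_A)=\Null(\Pi_A B\Pi_A)=\Null(\sqrt{B}\Pi_A)=\Null(\Pi_B\Pi_A)=\Null(\Pi_A\Pi_B\Pi_A)$; the inclusions \eqref{eq:NullMrho+} and \eqref{eq:suppMrho+} are then immediate corollaries. Two small remarks: your invocation of \lref{lem:AuxLemPQ} for the inclusions is unnecessary (that lemma characterizes when equality holds, not the inclusions themselves, which follow directly from $\Null(\Pi_A\Pi_B\Pi_A)=\Null(\Pi_B\Pi_A)$ by decomposing a kernel vector along $\supp(A)\oplus\Null(A)$), and your sentence ``a vector killed by both $\Pi_A$ and $\Pi_B$ is certainly killed by $\Pi_A\Pi_B\Pi_A$'' points in the wrong direction for $\Null(\caM)\subseteq\Null(A)+\Null(B)$, though your subsequent ``checking the decomposition on the complementary subspace'' is the right fix.
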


\begin{lemma}\label{lem:GMeig}
	Suppose $A, B, P\in\caP(\caH)$ and $P$ is supported in $\supp(A)\cap\supp(B)$ and an eigenspace of $\caM(A, B)$  with eigenvalue $\lambda$. Then $\lambda>0$ and $P$ is supported in the eigenspace of $\caM\left(B^+, A^+\right)$  with eigenvalue $1/\lambda$. The same result holds if  $\caM(A, B)$ and $\caM(B^+, A^+)$ are exchanged.
\end{lemma}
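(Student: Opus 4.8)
The plan is to reduce everything to the behaviour of the two geometric means on the common support $\mathcal{K}:=\supp(A)\cap\supp(B)$, where genuine invertibility is available, and to convert the claimed eigenvalue reciprocity into scalar identities coming from a generalized Riccati relation. First I would dispose of $\lambda>0$: by \lref{lem:GMsuppNull} we have $\mathcal{K}\le\supp(\caM(A,B))$, and $\caM(A,B)$ is positive definite on its own support, so a nonzero $P$ supported in $\mathcal{K}$ and lying in the $\lambda$-eigenspace forces $\lambda>0$. Since the desired conclusion is linear in $P$, it then suffices to fix a unit vector $v\in\mathcal{K}$ with $\caM(A,B)v=\lambda v$ and prove $\caM(B^+,A^+)v=\lambda^{-1}v$.

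The central algebraic tool I would establish is the generalized Riccati identity
\begin{equation}
\caM(A,B)\,A^+\,\caM(A,B)=\Pi_A B\Pi_A,
\end{equation}
which extends property~4 (the characterization of $\caM(A,B)$ as the positive solution of $XA^{-1}X=B$) to the singular case. It follows directly from the definition of $\caM(A,B)$ together with $\sqrt{A}\,A^+\sqrt{A}=\Pi_A$ and $\supp\left(\sqrt{A^+}B\sqrt{A^+}\right)\le\supp(A)$, the latter allowing the inner projector to be absorbed. The analogous identity for the second mean reads $\caM(B^+,A^+)\,B\,\caM(B^+,A^+)=\Pi_B A^+\Pi_B$, using $(B^+)^+=B$ and $\Pi_{B^+}=\Pi_B$. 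Sandwiching the first identity with $v$ and using $\caM(A,B)v=\lambda v$ together with $\Pi_A v=\Pi_B v=v$ gives $\lambda^2\<v|A^+|v\>=\<v|B|v\>$; the second identity analogously gives $\mu^2\<v|B|v\>=\<v|A^+|v\>$ as soon as $v$ is known to be an eigenvector of $\caM(B^+,A^+)$ with eigenvalue $\mu$, and the two relations then force $\mu=1/\lambda$.

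The hard part will be exactly that proviso: showing that an eigenvector of $\caM(A,B)$ lying in $\mathcal{K}$ is automatically an eigenvector of $\caM(B^+,A^+)$. I would establish this by proving that the two means act as mutual inverses on $\mathcal{K}$, i.e.\ $\caM(B^+,A^+)\caM(A,B)v=v$ for $v\in\mathcal{K}\cap\Null(\caM(A,B)-\lambda)$; combined with $\caM(A,B)v=\lambda v$ this simultaneously yields that $v$ is an eigenvector of $\caM(B^+,A^+)$ and pins its eigenvalue at $1/\lambda$. The route I favour is to combine the generalized inverse relation $\caM(A,B)^+=\caM(A^+,B^+)$ (an honest singular analogue of property~2, whose two supports genuinely match, both equal to $\supp(\Pi_A\Pi_B\Pi_A)$ by \lref{lem:GMsuppNull}) with the fact that $\caM(A^+,B^+)$ and $\caM(B^+,A^+)$ coincide on $\mathcal{K}$. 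I expect this coincidence to be the delicate point, because full symmetry of the geometric mean fails in the singular regime: by \lref{lem:AuxLemPQ} the supports $\supp(\Pi_A\Pi_B\Pi_A)$ and $\supp(\Pi_B\Pi_A\Pi_B)$ differ unless $\Pi_A$ and $\Pi_B$ commute, so $\caM(A^+,B^+)$ and $\caM(B^+,A^+)$ really do disagree away from $\mathcal{K}$; the whole content of the lemma is that they nonetheless align on the common support, which is where positive definiteness and the Riccati identities can be leveraged.

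Finally, the exchange statement—that an eigenvector of $\caM(B^+,A^+)$ in $\mathcal{K}$ with eigenvalue $1/\lambda$ is an eigenvector of $\caM(A,B)$ with eigenvalue $\lambda$—will follow by swapping the roles of $A$ and $B$ throughout the argument and invoking $(A^+)^+=A$ and $(B^+)^+=B$, so no separate computation is needed.
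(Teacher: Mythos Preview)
Your proposal has a genuine gap at precisely the place you flag as ``the hard part.'' The Riccati identity $\caM(A,B)\,A^+\,\caM(A,B)=\Pi_A B\Pi_A$ is correct, and the sandwich computation $\lambda^2\langle v|A^+|v\rangle=\langle v|B|v\rangle$ is valid, but a scalar relation of this type can never force $v$ to be an eigenvector of $\caM(B^+,A^+)$; it is only a consistency check once that is already known. Your route to the eigenvector statement rests on the asserted operator identity $\caM(A,B)^+=\caM(A^+,B^+)$, and this identity is \emph{false} in general. Take $A=\Pi_A$ and $B=\Pi_B$ to be noncommuting projectors; then $A^+=A$, $B^+=B$, so $\caM(A^+,B^+)=\caM(A,B)=\sqrt{\Pi_A\Pi_B\Pi_A}$, while $\caM(A,B)^+=(\Pi_A\Pi_B\Pi_A)^{+1/2}$ inverts the nonzero eigenvalues. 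These coincide only when $\Pi_A\Pi_B\Pi_A$ is itself a projector, i.e.\ precisely when $\Pi_A$ and $\Pi_B$ commute (cf.\ \lref{lem:AuxLemPQ}). So property~2 does not extend to the singular case in the way you propose. Even if you retreat to using the identity only on $\mathcal K$, note that what you would need is $\caM(A^+,B^+)v=\lambda^{-1}v$ for eigenvectors $v$ of $\caM(A,B)$ in $\mathcal K$, and that is a statement of the same type and difficulty as the lemma itself, just with $(A^+,B^+)$ in place of $(B^+,A^+)$. Your second step, that $\caM(A^+,B^+)$ and $\caM(B^+,A^+)$ agree on $\mathcal K$, is likewise left as a bare assertion.

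The paper's argument bypasses all of this by working with the unitary $U$ from the polar decomposition $\sqrt{A^+}\sqrt{B}\,U=\sqrt{\sqrt{A^+}B\sqrt{A^+}}$. The key observation is that the \emph{same} $U$ furnishes the polar decomposition in the other direction, namely $\sqrt{B}\sqrt{A^+}\,U^\dagger=\sqrt{\sqrt{B}A^+\sqrt{B}}$. One then computes directly: starting from $\caM(A,B)P=\lambda P$ and unwinding the definition yields $U^\dagger\sqrt{B}\,P=\lambda\sqrt{A^+}\,P$; left-multiplying by $\sqrt{B^+}\,U$ and using the second polar identity gives $P=\lambda\,\caM(B^+,A^+)P$ in one line. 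No global inverse relation between the two means is invoked; the transfer of eigenvectors is established by an explicit intertwining via $U$.
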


\begin{lemma}\label{lem:GIGMequi}
	Suppose $A,B\in\caP(\caH)$, and $\Pi_A$ and $\Pi_B$ are the projectors onto the supports of $A$ and $B$, respectively. Then the following nine statements are equivalent:
	\begin{enumerate}
		\item  $\Pi_A$ and $\Pi_B$ commute;
		
		\item $\supp(\caM(A,B))=\supp(A)\cap\supp(B)$;
		
		\item  $\supp(\caM(A,B))\leq \supp(B)$;
		
		\item $\Null(\caM(A,B))=\Null(A)+\Null(B)$;
		
		\item $\Null(B)\leq \Null(\caM(A,B))$;
				
		\item $\caM(B,A)$ and  $\caM(A,B)$ have the same support;
		
		\item  $\caM(B^+,A^+)$ and  $\caM(A,B)$ have the same support;
		
		\item $\caM(B^+,A^+)$ is the MPGI of  $\caM(A,B)$. 

        \item $\caM(B^+,A^+)$ and $\caM(A,B)$ commute.

	\end{enumerate}
\end{lemma}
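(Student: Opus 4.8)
The plan is to funnel all nine statements back to Statement~1 by combining the support/null-space formulas of \lref{lem:GMsuppNull} with the projector dictionary of \lref{lem:AuxLemPQ} (applied with $P=\Pi_A$ and $Q=\Pi_B$). The key preliminary observation is that, since $A$ and $B$ are positive, $\Pi_{A^+}=\Pi_A$ and $\Pi_{B^+}=\Pi_B$; hence \lref{lem:GMsuppNull} yields the unconditional identities $\supp(\caM(A,B))=\supp(\Pi_A\Pi_B\Pi_A)$ and $\supp(\caM(B,A))=\supp(\caM(B^+,A^+))=\supp(\Pi_B\Pi_A\Pi_B)$. In particular the last two supports always coincide, so Statements~6 and~7 express exactly the same condition, namely $\supp(\Pi_A\Pi_B\Pi_A)=\supp(\Pi_B\Pi_A\Pi_B)$.

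With these translations in hand, I would first dispatch the support/null-space statements: $\supp(\Pi_A\Pi_B\Pi_A)=\supp(A)\cap\supp(B)$ is Statement~2, $\Null(\Pi_A\Pi_B\Pi_A)=\Null(A)+\Null(B)$ is Statement~4 (using $\Null(\caM(A,B))=\Null(\Pi_A\Pi_B\Pi_A)$), and $\supp(\Pi_A\Pi_B\Pi_A)=\supp(\Pi_B\Pi_A\Pi_B)$ is Statement~6/7. The intersection, sum, and equal-support clauses of \lref{lem:AuxLemPQ} then give $1\Leftrightarrow 2$, $1\Leftrightarrow 4$, and $1\Leftrightarrow 6\Leftrightarrow 7$ at once. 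The one-sided Statements~3 and~5 are squeezed out by the sandwich bounds in \lref{lem:GMsuppNull}: from $\supp(A)\cap\supp(B)\le\supp(\caM(A,B))\le\supp(A)$, an inclusion $\supp(\caM(A,B))\le\supp(B)$ forces $\supp(\caM(A,B))=\supp(A)\cap\supp(B)$, so $3\Rightarrow 2$, while $2\Rightarrow 3$ is trivial; the dual null-space bound gives $4\Leftrightarrow 5$. At this stage Statements~1--7 are mutually equivalent.

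It remains to incorporate Statements~8 and~9. For $1\Rightarrow 8$ I would assume $\Pi_A$ and $\Pi_B$ commute, so that both means share the support $\supp(A)\cap\supp(B)$. Writing the spectral decomposition $\caM(A,B)=\sum_i\lambda_i P_i$ with distinct $\lambda_i>0$ and mutually orthogonal eigenprojectors $P_i$ summing to $\Pi_{\caM(A,B)}$, each $P_i$ is supported in $\supp(A)\cap\supp(B)$, so \lref{lem:GMeig} places $\range(P_i)$ inside the eigenspace of $\caM(B^+,A^+)$ with eigenvalue $1/\lambda_i$. Since the two means have equal support and $\lambda_i\mapsto1/\lambda_i$ is injective, a dimension count upgrades these inclusions to equalities, giving $\caM(B^+,A^+)=\sum_i\lambda_i^{-1}P_i=\caM(A,B)^+$, that is, Statement~8. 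Then $8\Rightarrow 9$ is immediate because a positive operator commutes with its MPGI, and $9\Rightarrow 1$ follows because commuting positive operators have commuting support projectors $\Pi_{\caM(A,B)}=\Pi_{\Pi_A\Pi_B\Pi_A}$ and $\Pi_{\caM(B^+,A^+)}=\Pi_{\Pi_B\Pi_A\Pi_B}$, whereupon the final clause of \lref{lem:AuxLemPQ} returns Statement~1, closing the loop.

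The main obstacle is the implication $1\Rightarrow 8$: the remaining equivalences are essentially bookkeeping on supports and null spaces, but turning the eigenspace containments supplied by \lref{lem:GMeig} into the exact operator identity $\caM(B^+,A^+)=\caM(A,B)^+$ hinges on the dimension-counting argument, which in turn relies on both geometric means having the common support $\supp(A)\cap\supp(B)$ once Statement~1 is assumed. Verifying that no eigenvalue of $\caM(B^+,A^+)$ escapes the collection $\{1/\lambda_i\}_i$ is the one point requiring genuine care rather than routine manipulation.
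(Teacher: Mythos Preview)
Your proposal is correct and follows essentially the same route as the paper: both reduce Statements~1--7 to \lref{lem:AuxLemPQ} via the support formulas of \lref{lem:GMsuppNull}, handle $1\Rightarrow 8$ through \lref{lem:GMeig}, and close the cycle $8\Rightarrow 9\Rightarrow 1$ using commutativity of support projectors. The only minor difference is that for $1\Rightarrow 8$ the paper invokes the \emph{two-sided} form of \lref{lem:GMeig} directly (eigenspaces of $\caM(A,B)$ and $\caM(B^+,A^+)$ within $\supp(A)\cap\supp(B)$ correspond bijectively with reciprocal eigenvalues), which makes your dimension-counting step unnecessary.
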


\begin{lemma}\label{lem:GMcommu}
	Suppose $A,B\in\caP(\caH)$. Then $\caM(A,B)=\caM(B,A)$ iff $\caM(B^+,A^+)=\caM(A^+,B^+)$. 
\end{lemma}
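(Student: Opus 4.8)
The plan is to reduce the statement to the equivalence, already packaged in \lref{lem:GIGMequi}, between the commutativity of the support projectors $\Pi_A,\Pi_B$ and the Moore-Penrose identity $\caM(B^+,A^+)=\caM(A,B)^+$. The key observation is that \emph{either} side of the claimed equivalence forces $\Pi_A$ and $\Pi_B$ to commute; once that commutativity is in hand, the generalized-inverse identity renders the two sides interchangeable, and the conclusion follows from the involution property $(O^+)^+=O$ of the Moore-Penrose inverse.

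First I would handle the forward direction. Assume $\caM(A,B)=\caM(B,A)$. Then these two operators trivially have the same support, which is exactly statement~6 of \lref{lem:GIGMequi}; hence statement~1 of that lemma holds, i.e.\ $\Pi_A$ and $\Pi_B$ commute. Commutativity in turn activates statement~8 of \lref{lem:GIGMequi}, giving $\caM(B^+,A^+)=\caM(A,B)^+$. Applying the same lemma with the roles of $A$ and $B$ exchanged---commutativity of $\Pi_A,\Pi_B$ is symmetric---yields $\caM(A^+,B^+)=\caM(B,A)^+$. Since $\caM(A,B)=\caM(B,A)$, taking Moore-Penrose inverses of both sides gives $\caM(A,B)^+=\caM(B,A)^+$, and therefore $\caM(B^+,A^+)=\caM(A^+,B^+)$, as desired.

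For the converse I would run the mirror argument, exploiting that $\Pi_{A^+}=\Pi_A$ and $\Pi_{B^+}=\Pi_B$ since $A,B$ are positive. Assume $\caM(B^+,A^+)=\caM(A^+,B^+)$. These share a support, so applying \lref{lem:GIGMequi} to the pair $(A^+,B^+)$ (statement~6 for this pair) shows that $\Pi_{A^+}$ and $\Pi_{B^+}$, hence $\Pi_A$ and $\Pi_B$, commute. Now statement~8 of \lref{lem:GIGMequi}, used for $(A,B)$ and for $(B,A)$ exactly as above, yields $\caM(A,B)^+=\caM(B^+,A^+)=\caM(A^+,B^+)=\caM(B,A)^+$; taking Moore-Penrose inverses once more gives $\caM(A,B)=\caM(B,A)$.

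The only genuine obstacle is the recognition that in general $\caM(B^+,A^+)$ is \emph{not} the Moore-Penrose inverse of $\caM(A,B)$: the identity $\caM(A,B)^+=\caM(B^+,A^+)$ is precisely statement~8 of \lref{lem:GIGMequi} and holds only under the commutativity of $\Pi_A$ and $\Pi_B$. The crux is therefore to extract this commutativity from the hypothesis before invoking the inverse identity, after which everything collapses to the involution property of the Moore-Penrose inverse. A more computational alternative would decompose $\caH$ into the joint eigenspaces furnished by commuting $\Pi_A,\Pi_B$ and compare the two means block by block, but the route above is cleaner and avoids casework.
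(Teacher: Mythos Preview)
Your proof is correct and follows essentially the same route as the paper's: both derive the commutativity of $\Pi_A,\Pi_B$ (implicitly or explicitly) from the equal-support observation via \lref{lem:GIGMequi}, then use statement~8 of that lemma, applied to $(A,B)$ and to $(B,A)$, together with the involution $(O^+)^+=O$. The paper's write-up is terser---it jumps straight from statement~6 to statement~8 without pausing at statement~1---but the content is the same.
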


\begin{lemma}\label{lem:GMcommu2}
	Suppose $A,B\in\caP(\caH)$,  $\Pi_A$ and $\Pi_B$ are the projectors onto the supports of $A$ and $B$, respectively, $A$ commutes with $\Pi_B$, and $B$ commutes with $\Pi_A$. Then $\caM(A,B)=\caM(B,A)$ and $\caM(B^+,A^+)=\caM(A^+,B^+)$. 
\end{lemma}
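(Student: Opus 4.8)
The plan is to reduce the whole statement to the well-understood positive-definite case by putting $A$ and $B$ into a common block-diagonal form. The key preliminary observation is that the hypotheses already force $\Pi_A$ and $\Pi_B$ to commute: since $\Pi_A$ is the spectral projector of $A$ onto its nonzero eigenvalues, it is a function of $A$ and hence commutes with everything $A$ commutes with; as $A$ commutes with $\Pi_B$ by assumption, $\Pi_A$ commutes with $\Pi_B$ (the same conclusion also follows from \lref{lem:AuxLemABsupp} via $\supp(AB)=\supp(BA)$). I would then decompose $\caH$ into the four mutually orthogonal common subspaces of the commuting projectors $\Pi_A,\Pi_B$, namely $\caH_{11}=\supp(A)\cap\supp(B)$, $\caH_{10}=\supp(A)\cap\Null(B)$, $\caH_{01}=\Null(A)\cap\supp(B)$, and $\caH_{00}=\Null(A)\cap\Null(B)$.

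Next I would record that both $A$ and $B$ are block diagonal in this decomposition: $A$ commutes with $\Pi_A$ trivially and with $\Pi_B$ by hypothesis, hence with each of the four projectors onto $\caH_{ij}$, and likewise for $B$. Thus $A=A_{11}\oplus A_{10}\oplus 0\oplus 0$ with $A_{11},A_{10}$ positive definite on $\caH_{11},\caH_{10}$, while $B=B_{11}\oplus 0\oplus B_{01}\oplus 0$ with $B_{11},B_{01}$ positive definite on $\caH_{11},\caH_{01}$. Because $A^+$ is normal with the same support as $A$, the factors $\sqrt{A}$ and $\sqrt{A^+}$ inherit the same block-diagonal form, with $\sqrt{A^+}$ restricting to $\sqrt{A_{11}^{-1}}$ on $\caH_{11}$. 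Substituting into $\caM(A,B)=\sqrt{A}\sqrt{\sqrt{A^+}B\sqrt{A^+}}\sqrt{A}$, the inner factor $\sqrt{A^+}B\sqrt{A^+}$ collapses to $\sqrt{A_{11}^{-1}}\,B_{11}\,\sqrt{A_{11}^{-1}}$ on $\caH_{11}$ and vanishes on the other three blocks (the $\caH_{01}$ part is killed by $\sqrt{A^+}$ and the $\caH_{10}$ part by $B$), so $\caM(A,B)$ is supported in $\caH_{11}$ and equals there the ordinary positive-definite geometric mean $\caM(A_{11},B_{11})$. By the identical computation $\caM(B,A)=\caM(B_{11},A_{11})$ on $\caH_{11}$ and vanishes elsewhere, and property~3 of the nonsingular geometric mean gives $\caM(A_{11},B_{11})=\caM(B_{11},A_{11})$, whence $\caM(A,B)=\caM(B,A)$. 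The second identity then follows by applying this first identity to the pair $(B^+,A^+)$: since $\Pi_{A^+}=\Pi_A$ and $\Pi_{B^+}=\Pi_B$, and since $B^+$ (a function of $B$) commutes with $\Pi_A$ while $A^+$ commutes with $\Pi_B$, the hypotheses hold for $(B^+,A^+)$ and yield $\caM(B^+,A^+)=\caM(A^+,B^+)$.

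The point that needs genuine care, as opposed to the bookkeeping above, is that commutativity of $\Pi_A$ and $\Pi_B$ alone only guarantees that $\caM(A,B)$ and $\caM(B,A)$ share the common support $\caH_{11}$ (this is already contained in \lref{lem:GIGMequi}); upgrading \emph{same support} to genuine \emph{equality} of operators is precisely what the stronger hypotheses ``$A$ commutes with $\Pi_B$ and $B$ commutes with $\Pi_A$'' purchase, by forcing $A$ and $B$ to be simultaneously block diagonal rather than merely support-compatible. Accordingly, I expect the main obstacle to be threading the Moore–Penrose generalized inverse correctly through the block reduction, in particular verifying that $\sqrt{A^+}$ really does restrict to $\sqrt{A_{11}^{-1}}$ on $\caH_{11}$ while annihilating the $\caH_{01}$ block, so that the reduction to the positive-definite geometric mean $\caM(A_{11},B_{11})$ is exact rather than approximate.
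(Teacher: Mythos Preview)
Your proposal is correct and follows essentially the same approach as the paper: both arguments use the commutation hypotheses to block-diagonalize $A$ and $B$ relative to $\supp(A)\cap\supp(B)$, reduce $\caM(A,B)$ and $\caM(B,A)$ to the positive-definite geometric means $\caM(A_{11},B_{11})$ and $\caM(B_{11},A_{11})$ on that common block, invoke the symmetry property of the nonsingular geometric mean, and then repeat for $(A^+,B^+)$. The only cosmetic difference is that the paper reaches the decomposition in two sequential steps (first splitting $B$ along $\Pi_A$, then $A$ along $\Pi_{B_1}$) rather than your single four-block splitting, and the paper additionally notes that the second identity can alternatively be obtained from the first via \lref{lem:GMcommu}.
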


Next, we turn to the special case in which one operator, say $A$, is nonsingular.
The following two lemmas are simple corollaries of \lsref{lem:GMsuppNull} and \ref{lem:GIGMequi}, respectively. 
\begin{lemma}\label{lem:GeoMeanNullSpace}
	Suppose $A,B\in\caP(\caH)$ and $A$ is nonsingular. Then the null spaces (supports) of $\caM(A,B)$, $\caM(B, A)$, $\caM(A,B^+)$, and $\caM(B^+,A)$ all coincide with the null space (support) of $B$,
	and the same results still hold if $A$ is replaced by $A^{-1}$.       
\end{lemma}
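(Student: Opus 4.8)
The plan is to reduce everything to a single application of \lref{lem:GMsuppNull}, combined with two elementary observations about projectors. First I would record that, since $A$ is nonsingular, its support is all of $\caH$, so the associated projector is $\Pi_A=\bbone$; the same holds for $A^{-1}$, which is nonsingular whenever $A$ is, giving $\Pi_{A^{-1}}=\bbone$. Second, because $B$ is positive semidefinite, its Moore-Penrose generalized inverse $B^+$ is again positive semidefinite and shares the same support, so $\Pi_{B^+}=\Pi_B$ (this is exactly the normal-operator property of the MPGI recorded above). These are the only two facts beyond \lref{lem:GMsuppNull} that the argument needs.

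With these in hand, I would treat all four geometric means uniformly. By \lref{lem:GMsuppNull}, for any $X,Y\in\caP(\caH)$ one has $\Null(\caM(X,Y))=\Null(\Pi_X\Pi_Y\Pi_X)$ and $\supp(\caM(X,Y))=\supp(\Pi_X\Pi_Y\Pi_X)$. In each of the four cases $\caM(A,B)$, $\caM(B,A)$, $\caM(A,B^+)$, and $\caM(B^+,A)$, one of the two arguments carries the projector $\bbone$ and the other carries the projector $\Pi_B$ (using $\Pi_{B^+}=\Pi_B$ for the two cases involving $B^+$). The product $\Pi_X\Pi_Y\Pi_X$ therefore collapses to $\Pi_B$ in every case: if $\Pi_X=\bbone$ then $\Pi_X\Pi_Y\Pi_X=\Pi_Y=\Pi_B$, while if $\Pi_X=\Pi_B$ then $\Pi_Y=\bbone$ and $\Pi_X\Pi_Y\Pi_X=\Pi_B^2=\Pi_B$. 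Consequently the null space of each of the four geometric means equals $\Null(\Pi_B)=\Null(B)$ and its support equals $\supp(\Pi_B)=\supp(B)$, which is exactly the claim.

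For the assertion with $A$ replaced by $A^{-1}$, I would simply note that $A^{-1}$ is itself nonsingular with $\Pi_{A^{-1}}=\bbone$, so the four means $\caM(A^{-1},B)$, $\caM(B,A^{-1})$, $\caM(A^{-1},B^+)$, and $\caM(B^+,A^{-1})$ are handled by precisely the same collapse $\Pi_X\Pi_Y\Pi_X=\Pi_B$, giving the identical conclusion.

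I do not anticipate any genuine obstacle here: the lemma is a direct bookkeeping corollary of \lref{lem:GMsuppNull}. The only point that requires care is the identity $\Pi_{B^+}=\Pi_B$, which would fail for a general (non-normal) operator; it is guaranteed by the positivity of $B$, and it is the single ingredient that lets the two $B^+$-cases be folded into the two $B$-cases.
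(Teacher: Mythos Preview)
Your proposal is correct and matches the paper's approach exactly: the paper does not give a separate proof but simply declares \lref{lem:GeoMeanNullSpace} a direct corollary of \lref{lem:GMsuppNull}, which is precisely the reduction you carry out. Your two auxiliary observations $\Pi_A=\bbone$ and $\Pi_{B^+}=\Pi_B$ are the obvious bookkeeping steps that make the corollary immediate.
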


\begin{lemma}\label{lem:GMEquiv}
	Suppose $A, B\in\caP(\caH)$ and $A$ is nonsingular. Then $\caM\left(B^+, A^{-1}\right)$ and $\caM\left(A^{-1},B^+\right)$ are the  MPGIs of $\caM(A, B)$ and $\caM(B, A)$, respectively.	

\end{lemma}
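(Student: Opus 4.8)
The plan is to obtain Lemma~\ref{lem:GMEquiv} directly as a corollary of Lemma~\ref{lem:GIGMequi}, exploiting the fact that the nonsingularity of $A$ forces the support projector $\Pi_A$ to equal the identity. First I would observe that since $A$ is invertible we have $\Pi_A=\bbone$, so $\Pi_A$ trivially commutes with $\Pi_B$; this is precisely Statement~1 of Lemma~\ref{lem:GIGMequi}. Consequently all nine equivalent statements of that lemma hold, and in particular Statement~8 asserts that $\caM(B^+,A^+)$ is the MPGI of $\caM(A,B)$.

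Next I would invoke the elementary fact that $A^+=A^{-1}$ whenever $A$ is nonsingular, so that $\caM(B^+,A^+)=\caM(B^+,A^{-1})$. Substituting this identity into Statement~8 yields the first assertion: $\caM(B^+,A^{-1})$ is the MPGI of $\caM(A,B)$.

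For the second assertion I would simply apply Lemma~\ref{lem:GIGMequi} once more, but now to the ordered pair $(B,A)$ in place of $(A,B)$. Since commutativity of support projectors is symmetric, Statement~1 still holds, so Statement~8 for this pair gives that $\caM(A^+,B^+)$ is the MPGI of $\caM(B,A)$; using $A^+=A^{-1}$ again converts this into the claim that $\caM(A^{-1},B^+)$ is the MPGI of $\caM(B,A)$. Here the only genuine point requiring attention is the order of the arguments: because the generalized geometric mean need not be symmetric in its two entries (cf.\ Lemma~\ref{lem:GMcommu}), one must invoke Lemma~\ref{lem:GIGMequi} separately for each ordering rather than assuming $\caM(A,B)=\caM(B,A)$.

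The main obstacle, such as it is, is purely bookkeeping: tracking which operator plays the role of ``$A$'' and which plays ``$B$'' in each application of Lemma~\ref{lem:GIGMequi}, and confirming that the substitution $A^+\to A^{-1}$ is legitimate under the nonsingularity hypothesis. No new estimate or construction is needed, since the substantive work has already been carried out in establishing the equivalences collected in Lemma~\ref{lem:GIGMequi}.
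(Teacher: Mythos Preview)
Your proposal is correct and matches the paper's own treatment: the paper states that Lemma~\ref{lem:GMEquiv} is a simple corollary of Lemma~\ref{lem:GIGMequi}, and your argument---observing that $\Pi_A=\bbone$ forces Statement~1 and hence Statement~8, then applying the same reasoning to the pair $(B,A)$---is exactly the intended derivation.
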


\subsection{Proofs of \lsref{lem:GMsuppNull}-\ref{lem:GMcommu2}}

\begin{proof}[Proof of \lref{lem:GMsuppNull}]Let $R=\sqrt{A^+}\lsp B\sqrt{A^+}$, then 
	\begin{align}
		\caM=\sqrt{A}\sqrt{R}\sqrt{A}=\left(R^{1/4}\sqrt{A}\lsp \right)^\dag R^{1/4}\sqrt{A}. 
	\end{align}	
	Therefore,
	\begin{align}
		&\Null(\caM)=\Null\left(R^{1/4}\sqrt{A}\lsp\right)=\Null\left(R\sqrt{A}\lsp\right)\nonumber\\
		&=\Null\bigl(\sqrt{A^+}\lsp B\Pi_A \bigr)=\Null(\Pi_A B\Pi_A)=\Null\bigl(\sqrt{B}\lsp\Pi_A\bigr)\nonumber\\
		&=\Null(\Pi_B \Pi_A)=\Null(\Pi_A\Pi_B \Pi_A),
	\end{align}	
	which confirms \eref{eq:NullM} and implies \eref{eq:suppM}. \Eqsref{eq:NullMrho+}{eq:suppMrho+} are simple corollaries of 	\eqsref{eq:NullM}{eq:suppM}.
\end{proof}

\begin{proof}[Proof of \lref{lem:GMeig}]
	Note that $\caM(A, B)$ is a positive operator and $\supp(A)\cap\supp(B)\leq \supp(\caM(A, B))$ according to \lref{lem:GMsuppNull}.  In addition, $P$ is supported in $\supp(A)\cap\supp(B)$ by assumption and is thus supported in  $\supp(\caM(A, B))$. Now, suppose $P$ is supported in an eigenspace of $\caM(A, B)$  with eigenvalue $\lambda$, then we have $\lambda>0$ because this eigenspace is necessarily contained in $\supp(\caM(A, B))$.

	Let $U$ be any unitary on $\caH$ that satisfies the condition 
	\begin{align}\label{eq:GMeigProof}
		\sqrt{A^+}\sqrt{B}\lsp U = U^\dag \sqrt{B}\sqrt{A^+}= \sqrt{\sqrt{A^+}B\sqrt{A^+}}.
	\end{align}
	Then
	\begin{align}
		 U^\dag \sqrt{B}\lsp  P &=  \sqrt{\sqrt{A^+}\lsp B\sqrt{A^+}}\sqrt{A}\lsp P
		\nonumber\\
		&=\sqrt{A^+}\sqrt{A}\sqrt{\sqrt{A^+}\lsp B\sqrt{A^+}}\sqrt{A}\lsp P
		\nonumber\\
		&= \sqrt{A^+}\lsp\caM(A,B) P=
		\lambda\sqrt{A^+}\lsp P, \label{eq:EigEqProof1}
	\end{align}
	given that $P$ is supported in $\supp(A)\cap\supp(B)$ and an eigenspace of $\caM(A, B)$  with eigenvalue $\lambda>0$. In addition, $\sqrt{A}\sqrt{A^+}$ and $\sqrt{A^+}\sqrt{A}$ coincide with the projector onto  $\supp(A)$; similarly, $\sqrt{B}\sqrt{B^+}$ and $\sqrt{B^+}\sqrt{B}$ coincide with the projector onto $\supp(B)$.

	Next, multiplying \eref{eq:EigEqProof1} with $  \sqrt{B^+}\lsp U$ (on the left) yields
	\begin{align}
		&P = 
		\lambda  \sqrt{B^+}\lsp U \sqrt{A^+}\lsp P=\lambda  \sqrt{B^+}\lsp U \sqrt{A^+}\sqrt{B}\sqrt{B^+}\lsp P
		\nonumber\\
		&= \lambda \caM\left(B^+, A^+\right) P,
	\end{align}
	where the last equality follows from the equation below,
	\begin{align}
		\sqrt{B} \sqrt{A^+}\lsp U^\dagger  =U\sqrt{A^+}\sqrt{B}=		
		\sqrt{\sqrt{B}A^+\sqrt{B}},
	\end{align}
	which, in turn, follows from \eref{eq:GMeigProof}. So $P $ is supported in the eigenspace of $\caM(B^+,A^+)$ with eigenvalue $1/\lambda $. 
	
	Finally, suppose instead $P $ is supported in the eigenspace of $\caM(B^+,A^+)$ with eigenvalue $\lambda $.  Note that $A^+$ and $B^+$ have the sample supports as $A$ and $B$, respectively; in addition, $(A^+)^+=A$ and   $(B^+)^+=B$. Therefore, $\lambda>0$ and $P $ is supported in the eigenspace of $\caM(A,B)$ with eigenvalue $1/\lambda $ according to a similar reasoning employed above, which completes the proof of  \lref{lem:GMeig}. 
\end{proof}

\begin{proof}[Proof of \lref{lem:GIGMequi}]By assumption $A$ and $B$ are positive operators, so  $A^+$ and $B^+$ have the same supports as $A$ and $B$, respectively. In conjunction with \lref{lem:GMsuppNull} we can deduce that
	\begin{equation}\label{eq:GIGMequiProof}
		\begin{aligned}
			\supp(\caM(A,B))&=\supp(\Pi_A\Pi_B \Pi_A),\\ \supp\left(\caM\left(B^+,A^+\right)\right)&=\supp(\caM(B,A))\\
			&=\supp(\Pi_B\Pi_A \Pi_B). 
		\end{aligned}	
	\end{equation}
	Therefore, the equivalence of Statements 1-7 in \lref{lem:GIGMequi} follows from \lref{lem:AuxLemPQ}.

	Next, suppose the first seven statements in \lref {lem:GIGMequi} hold. Then the supports of $\caM(B^+,A^+)$ and $\caM(A,B)$ coincide with $\supp(A)\cap\supp(B)$. 
	Moreover, according to \lref{lem:GMeig}, $\caM(B^+,A^+)$ is the inverse of $\caM(A,B)$ when restricted to $\supp(A)\cap\supp(B)$. So  $\caM(A^+,B^+)$ is the MPGI of  $\caM(A,B)$, which confirms Statement 8.  Conversely, if Statement 8 holds, then $\caM(B^+,A^+)$ and  $\caM(A,B)$ necessarily have the same support,  which confirms Statement 7,  given that $\caM(B^+,A^+)$ and  $\caM(A,B)$
	are positive operators. So the first eight statements in \lref{lem:GIGMequi} are equivalent.
    
    Finally, suppose the first eight statements in \lref{lem:GIGMequi} hold. Then $\caM(B^+,A^+)$ is the MPGI of $\caM(A,B)$ and thus commutes with $\caM(A,B)$, which confirms Statement~9. Conversely, if $\caM(B^+,A^+)$ and $\caM(A,B)$ commute, then the projectors onto their supports necessarily commute. In conjunction with \eref{eq:GIGMequiProof} and  \lref{lem:AuxLemPQ}  we can deduce that $\Pi_A$ and $\Pi_B$ commute, which confirms Statement 1. 
    In summary, all nine statements in \lref{lem:GIGMequi} are equivalent, which completes the proof.
\end{proof}

\begin{proof}[Proof of \lref{lem:GMcommu}]
	Suppose $\caM(A,B)=\caM(B,A)$; then $\caM(A,B)$ and $\caM(B,A)$ have the same supports. According to \lref{lem:GIGMequi}, $\caM(B^+,A^+)$ and $\caM(A^+,B^+)$ are the MPGIs of $\caM(A,B)$ and $\caM(B,A)$, respectively, which implies that $\caM(B^+,A^+)=\caM(A^+,B^+)$. Conversely, if $\caM(B^+,A^+)=\caM(A^+,B^+)$, then $\caM(A,B)=\caM(B,A)$ by a similar reasoning given that $(A^+)^+=A$ and $(B^+)^+=B$. 
\end{proof}

\begin{proof}[Proof of \lref{lem:GMcommu2}]
	By assumption, $B$ can be expressed in the form $B=B_0+B_1$, where $B_0$ and $B_1$ are positive operators supported in $\Null(A)$ and $\supp(A)$, respectively. In addition, $\Pi_B=\Pi_{B_0}+\Pi_{B_1}$ and $A$ commutes with both $\Pi_{B_0}$ and $\Pi_{B_1}$. It follows that  $A$ can be expressed as $A=A_0+A_1$, where $A_0$ and $A_1$ are positive operators supported in $\supp(A)\cap\Null(B_1)$ and $\supp(B_1)=\supp(A)\cap\supp(B)$, respectively; meanwhile, $A_1$ has full rank within  $\supp(B_1)$. Therefore, 
	\begin{align}
		\caM(A,B)=\caM(A_1,B_1)=\caM(B_1,A_1)=\caM(B,A),
	\end{align}
	where the second equality holds because the MPGI is symmetric in the two operators when both operators have the same support and are nonsingular within the support.

	In addition, $A^+$ and $B^+$ have the same supports and the same sets of eigenspaces as $A$ and $B$, respectively. Furthermore,  $\Pi_B$ commutes with $A$ iff it commutes with $A^+$; similarly, $\Pi_A$ commutes with $B$ iff it commutes with $B^+$. Therefore,  $\caM(B^+,A^+)=\caM(A^+,B^+)$ by a similar reasoning employed above. Alternatively, this conclusion follows from the equality $\caM(A,B)=\caM(B,A)$ proved above thanks to \lref{lem:GMcommu}.
\end{proof}

\section{\label{sec:Pencil}Properties of operator pencils}

In this section we clarify the basic properties of linear operator pencils that are helpful for understanding F-optimal measurements. Here we assume that $A$ and $B$ are nonzero positive operators in $\caP(\caH)$ and $A+B$ is nonsingular. In \lsref{lem:PencilEigSpace}-\ref{lem:PencilEigWC} and \coref{cor:PencilEigANS} below, we further assume that  $U$ is a unitary operator that satisfies the condition  $\sqrt{A}\sqrt{B}\lsp U = \sqrt{\sqrt{A}\lsp B\sqrt{A}}$.

\subsection{Basic properties}

Although eigenvectors of the pencil $(\sqrt{B},U\sqrt{A}\lsp)$ associated with distinct eigenvalues are not necessarily linearly independent, any two such eigenvectors are linearly independent according to the following lemma. 

\begin{lemma}\label{lem:PencilEigSpace}
	Any two distinct eigenspaces of the pencil $(\sqrt{B},U\sqrt{A}\lsp)$ have a trivial intersection. If ${P,Q\in\caP(\caH)}$ are nonzero positive operators  supported in distinct eigenspaces of $(\sqrt{B},U\sqrt{A}\lsp)$. Then $P+Q$ is not supported any eigenspace.
\end{lemma}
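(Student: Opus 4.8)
The plan is to establish the two assertions separately, with the nonsingularity of $A+B$ doing the essential work in the first and the second following as a short corollary. Throughout I recall that the eigenspace of $(\sqrt{B},U\sqrt{A}\lsp)$ at a finite eigenvalue $\lambda$ is $\Null(\sqrt{B}-\lambda\lsp U\sqrt{A}\lsp)$, while the eigenspace at $\infty$ is $\Null(U\sqrt{A}\lsp)=\Null(\sqrt{A}\lsp)=\Null(A)$, the last equality holding because $U$ is invertible.

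To prove that two distinct eigenspaces intersect trivially, I would argue by contradiction: suppose a nonzero vector $|v\rangle$ lies in eigenspaces with distinct eigenvalues $\lambda_1\neq\lambda_2$, and split into cases. If both eigenvalues are finite, subtracting the two eigenequations $\sqrt{B}\lsp|v\rangle=\lambda_i\lsp U\sqrt{A}\lsp|v\rangle$ gives $(\lambda_1-\lambda_2)U\sqrt{A}\lsp|v\rangle=0$; since $U$ is unitary and $\lambda_1\neq\lambda_2$, this forces $\sqrt{A}\lsp|v\rangle=0$, and back-substitution then gives $\sqrt{B}\lsp|v\rangle=0$. If one eigenvalue is $\infty$, then $\sqrt{A}\lsp|v\rangle=0$ holds directly from the description of the eigenspace at $\infty$, and the remaining finite eigenequation again yields $\sqrt{B}\lsp|v\rangle=0$. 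In every case $A|v\rangle=B|v\rangle=0$, so $(A+B)|v\rangle=0$; the nonsingularity of $A+B$ then forces $|v\rangle=0$, the desired contradiction.

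For the second statement I would use the elementary identity $\supp(P+Q)=\supp(P)+\supp(Q)$, valid for any $P,Q\in\caP(\caH)$ because $\langle v|(P+Q)|v\rangle=0$ forces both $\langle v|P|v\rangle=0$ and $\langle v|Q|v\rangle=0$, i.e.\ $\Null(P+Q)=\Null(P)\cap\Null(Q)$. Let $V_1\neq V_2$ be the distinct eigenspaces containing $\supp(P)$ and $\supp(Q)$, and suppose toward a contradiction that $P+Q$ is supported in some eigenspace $V$. Then both $\supp(P)$ and $\supp(Q)$ lie in $V$. Since $P$ is nonzero, $\supp(P)$ is a nonzero subspace of $V\cap V_1$, so the first part forces $V=V_1$; applying the same reasoning to $Q$ forces $V=V_2$, contradicting $V_1\neq V_2$.

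The proof is short once the right facts are assembled, so the difficulty is organizational rather than deep: the only points needing care are the correct treatment of the $\infty$ eigenvalue (remembering that its eigenspace is $\Null(\sqrt{A}\lsp)$, so no separate eigenequation is available there) and the clean invocation of the positive-operator support identity. After these, the second claim is essentially a formal consequence of the first together with $\supp(P+Q)=\supp(P)+\supp(Q)$.
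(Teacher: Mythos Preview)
Your proof is correct and follows essentially the same approach as the paper: for the first assertion you argue by contradiction with the same case split on whether an eigenvalue is $\infty$, reducing in each case to $|v\rangle\in\Null(A)\cap\Null(B)=\Null(A+B)$, and for the second you use that $\supp(P),\supp(Q)\subseteq\supp(P+Q)$ to force $V=V_1=V_2$ via the first part. The paper's write-up is slightly more compressed (it does not spell out the support identity), but the argument is the same.
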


\begin{lemma}\label{lem:PencilEigSpacePcom}
	Suppose $P\in \caP(\caH)$ is a  positive operator that commutes with $\Pi_A$  and is supported in an eigenspace of $(\sqrt{B},U\sqrt{A}\lsp)$. Then $P$ is supported in either $\supp(A)$ or $\Null(A)$.
\end{lemma}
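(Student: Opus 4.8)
The plan is to classify the eigenspace $V_\lambda$ of the pencil $(\sqrt{B},U\sqrt{A}\lsp)$ that contains $\supp(P)$ according to whether its eigenvalue $\lambda$ is finite or equal to $\infty$, and in each case to pin down how $V_\lambda$ sits relative to $\supp(A)$ and $\Null(A)$. Throughout I would use that $A+B$ is nonsingular, so that $\Null(A)\cap\Null(B)=\Null(A+B)=\{0\}$; here the intersection identity holds because, for positive operators, the quadratic form $\langle v|(A+B)|v\rangle$ vanishes iff $\langle v|A|v\rangle=\langle v|B|v\rangle=0$.

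First I would dispose of the easy case $\lambda=\infty$. By the convention for pencils with singular second entry $L=U\sqrt{A}$, the eigenspace with eigenvalue $\infty$ is $\Null(U\sqrt{A}\lsp)=\Null(\sqrt{A}\lsp)=\Null(A)$, where the middle equality holds because $U$ is unitary. Hence $\supp(P)\leq V_\infty=\Null(A)$ immediately, and $P$ is supported in $\Null(A)$ with nothing more to prove.

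The substantive case is when $\lambda$ is finite. Here the key step is to show $V_\lambda\cap\Null(A)=\{0\}$. Indeed, for $|v\rangle\in V_\lambda$ one has $\sqrt{B}\lsp|v\rangle=\lambda\lsp U\sqrt{A}\lsp|v\rangle$; if in addition $|v\rangle\in\Null(A)=\Null(\sqrt{A}\lsp)$, then the right-hand side vanishes, so $\sqrt{B}\lsp|v\rangle=0$, which gives $|v\rangle\in\Null(A)\cap\Null(B)=\{0\}$. (Note this covers $\lambda=0$ as a special case.) To convert this vector-level fact into the claim $\supp(P)\leq\supp(A)$, I would invoke the hypothesis that $P$ commutes with $\Pi_A$: simultaneous diagonalization of $P$ and $\Pi_A$ yields the orthogonal decomposition $\supp(P)=(\supp(P)\cap\supp(A))\oplus(\supp(P)\cap\Null(A))$, and the second summand is contained in $V_\lambda\cap\Null(A)=\{0\}$ because $\supp(P)\leq V_\lambda$. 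Thus $\supp(P)\leq\supp(A)$, completing the finite case.

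I expect the only delicate point to be this last passage from the intersection property of $V_\lambda$ to a statement about $\supp(P)$: the commutativity of $P$ with $\Pi_A$ is precisely what prevents $P$ from having support straddling $\supp(A)$ and $\Null(A)$ while still lying inside a single eigenspace. Everything else reduces to the nonsingularity of $A+B$ (which trivializes $\Null(A)\cap\Null(B)$) and the unitarity of $U$ (which identifies the eigenspace at $\infty$ with $\Null(A)$).
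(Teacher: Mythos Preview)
Your proof is correct and follows essentially the same approach as the paper's: both arguments use the commutativity of $P$ with $\Pi_A$ to split off a piece of $P$ (or $\supp(P)$) lying in $\Null(A)$, and then kill that piece via the eigenequation and the nonsingularity of $A+B$. The only cosmetic difference is that you organize the case split by the eigenvalue ($\lambda=\infty$ versus finite) and argue at the vector level, whereas the paper assumes $P\not\subseteq\Null(A)$ and works with the operator decomposition $P=P_0+P_1$ directly.
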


Next, we clarify the relations between the eigenspaces of the pencil $(\sqrt{B},U\sqrt{A}\lsp)$ and the eigenspaces of the geometric mean $\caM=\caM(A^+,B)$. 
Recall that  the eigenspaces of $(\sqrt{B},U\sqrt{A}\lsp)$ with eigenvalues $\infty$ and 0 coincide with  $\Null(A)$ and  $\Null(B)$, respectively,  and $\Null(\caM)=\Null(\Pi_A\Pi_B \Pi_A)$ by \lref{lem:GMsuppNull}. If $\Pi_A$ and $\Pi_B$ commute, then 
$\Null(A)\bot \Null(B)$ and $\Null(\caM)=\Null(A)\oplus \Null(B)$ by \lref{lem:GIGMequi}, so $\Null(\caM)$ is the direct sum of the two eigenspaces of $(\sqrt{B},U\sqrt{A}\lsp)$ with eigenvalues $\infty$ and 0.

\begin{lemma}\label{lem:PencilEigGMsuppA}
Any vector in   $\supp(A)$  is an  eigenstate of $(\sqrt{B},U\sqrt{A}\lsp)$  
	iff it is an  eigenstate of $\caM(A^+,B)$ with the same eigenvalue. 
\end{lemma}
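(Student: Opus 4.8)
The plan is to reduce both eigen-conditions to a single identity on $\supp(A)$ and then read off the equivalence. Write $C:=\sqrt{\sqrt{A}\,B\sqrt{A}}$, so that the defining relation for $U$ reads $C=\sqrt{A}\sqrt{B}\,U$; since $C$ is positive and hence Hermitian, taking the adjoint also gives $C=U^\dagger\sqrt{B}\sqrt{A}$. Substituting the latter into the definition $\caM(A^+,B)=\sqrt{A^+}\,C\,\sqrt{A^+}$ and using $\sqrt{A}\sqrt{A^+}=\Pi_A$ yields
\begin{equation}
\caM(A^+,B)=\sqrt{A^+}\,U^\dagger\sqrt{B}\,\Pi_A,
\end{equation}
so that for every $v\in\supp(A)$ (where $\Pi_A v=v$) one has the key identity $\caM(A^+,B)\,v=\sqrt{A^+}\,U^\dagger\sqrt{B}\,v$. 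First I would record this identity, together with the elementary facts $\sqrt{A}\sqrt{A^+}=\sqrt{A^+}\sqrt{A}=\Pi_A$ and the injectivity of $\sqrt{A^+}$ on $\supp(A)$.

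The second ingredient is the observation that $U^\dagger\sqrt{B}$ maps $\supp(A)$ into itself. Indeed, for $v\in\supp(A)$ one has $U^\dagger\sqrt{B}\,v=U^\dagger\sqrt{B}\sqrt{A}\sqrt{A^+}\,v=C\sqrt{A^+}\,v$, which lies in $\range(C)=\supp(C)=\supp(\sqrt{A}\,B\sqrt{A})\leq\supp(A)$ (the last inclusion because $\range(\sqrt{A}\,B\sqrt{A})\subseteq\range(\sqrt{A})=\supp(A)$). This containment is what allows one to move freely between $\sqrt{A}$ and its generalized inverse $\sqrt{A^+}$ without losing information, and it is the step I expect to be the crux of the argument.

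With these in hand the equivalence for a finite candidate eigenvalue $\lambda$ is immediate. The pencil equation $\sqrt{B}\,v=\lambda\,U\sqrt{A}\,v$ is equivalent, after applying $U^\dagger$, to $U^\dagger\sqrt{B}\,v=\lambda\sqrt{A}\,v$. For the forward direction I would apply $\sqrt{A^+}$ to this and use $\sqrt{A^+}\sqrt{A}\,v=\Pi_A v=v$ to obtain $\caM(A^+,B)\,v=\lambda v$. For the converse, starting from $\caM(A^+,B)\,v=\lambda v$, i.e.\ $\sqrt{A^+}\,U^\dagger\sqrt{B}\,v=\lambda v$, I would apply $\sqrt{A}$ to get $\Pi_A\,U^\dagger\sqrt{B}\,v=\lambda\sqrt{A}\,v$; here the containment $U^\dagger\sqrt{B}\,v\in\supp(A)$ lets me drop the projector $\Pi_A$ and recover $U^\dagger\sqrt{B}\,v=\lambda\sqrt{A}\,v$, hence the pencil equation. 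Note that $\lambda=0$ is covered by this argument with no division required.

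Finally I would dispose of the eigenvalue $\infty$. Its eigenspace for the pencil is $\Null(U\sqrt{A})=\Null(A)$, which meets $\supp(A)$ only in $0$; and $\caM(A^+,B)$, being a bounded positive operator, has no eigenvalue $\infty$. Hence no nonzero $v\in\supp(A)$ is an $\infty$-eigenvector on either side, so this case holds vacuously and the proof is complete.
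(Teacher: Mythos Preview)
Your proof is correct and follows essentially the same approach as the paper's: both derive the identity $\caM(A^+,B)\,v=\sqrt{A^+}\,U^\dagger\sqrt{B}\,v$ for $v\in\supp(A)$, obtain the forward direction by applying $\sqrt{A^+}$ to the pencil equation, and obtain the converse by using that $U^\dagger\sqrt{B}\,v=C\sqrt{A^+}\,v\in\supp(A)$ so the projector $\Pi_A$ can be dropped. Your presentation isolates the containment $U^\dagger\sqrt{B}\bigl(\supp(A)\bigr)\subseteq\supp(A)$ as an explicit lemma, whereas the paper absorbs it into the computation $U^\dagger\sqrt{B}\,|\psi\rangle=\sqrt{A}\sqrt{A^+}\sqrt{\sqrt{A}B\sqrt{A}}\sqrt{A^+}\,|\psi\rangle=\sqrt{A}\,\caM(A^+,B)\,|\psi\rangle$, but the content is the same.
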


If $A$ is nonsingular, then  $\supp(A)=\caH$  and all eigenvalues of $(\sqrt{B},U\sqrt{A}\lsp)$ are finite. In this case, \lref{lem:PencilEigGMsuppA} implies the following result as a simple corollary.
\begin{corollary}\label{cor:PencilEigANS}
	Suppose  $A$ is nonsingular. Then any vector in   $\caH$  is an  eigenstate of $(\sqrt{B},U\sqrt{A}\lsp)$  
	iff it is an  eigenstate of $\caM(A^{-1},B)$ with the same eigenvalue. 
\end{corollary}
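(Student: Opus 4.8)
The plan is to deduce the corollary directly from \lref{lem:PencilEigGMsuppA} by specializing to the case of nonsingular $A$, so the proof amounts to checking that the two hypotheses implicit in \lref{lem:PencilEigGMsuppA} trivialize. First I would observe that when $A$ is nonsingular we have $\Null(A)=\{0\}$ and hence $\supp(A)=\caH$; thus every vector in $\caH$ already lies in $\supp(A)$, so the restriction "any vector in $\supp(A)$" in \lref{lem:PencilEigGMsuppA} becomes "any vector in $\caH$" with no loss. Second, I would invoke the property of the Moore-Penrose generalized inverse recorded in the preliminaries, namely that the MPGI of an invertible operator coincides with its ordinary inverse; this gives $A^+=A^{-1}$ and therefore $\caM(A^+,B)=\caM(A^{-1},B)$.

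With these two identifications in hand, the biconditional in \lref{lem:PencilEigGMsuppA}—originally stated for vectors in $\supp(A)$ and phrased in terms of $\caM(A^+,B)$, with eigenvalues matched on both sides—becomes verbatim the biconditional claimed in the corollary, now ranging over all of $\caH$ and phrased in terms of $\caM(A^{-1},B)$. The eigenvalue correspondence is inherited unchanged, so the corollary follows at once.

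For completeness I would also confirm that the specialization neither loses nor introduces any eigenvalue, in particular regarding $\infty$: the eigenspace of $(\sqrt{B},U\sqrt{A}\lsp)$ with eigenvalue $\infty$ is $\Null(\sqrt{A})=\Null(A)$, which is the zero subspace precisely because $A$ is nonsingular, so all eigenvalues of the pencil are finite and match genuine finite eigenvalues of $\caM(A^{-1},B)$. There is no substantive obstacle here; the only point requiring a moment's care is the bookkeeping of the eigenvalue $\infty$, and this is dispatched by the observation that $\Null(A)$ collapses to $\{0\}$. Consequently the corollary is, as stated, a direct specialization of \lref{lem:PencilEigGMsuppA}.
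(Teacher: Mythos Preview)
Your proposal is correct and matches the paper's own approach: the paper states the corollary immediately after \lref{lem:PencilEigGMsuppA} with the remark that when $A$ is nonsingular one has $\supp(A)=\caH$ and all eigenvalues of the pencil are finite, so the corollary follows directly. Your additional observations that $A^+=A^{-1}$ and that the $\infty$-eigenspace collapses to $\{0\}$ are exactly the bookkeeping implicit in the paper's one-line justification.
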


\begin{lemma}\label{lem:PencilEigWC}
	Suppose  $B$ and $U$ commute with $\Pi_A$. Then every eigenstate of  $(\sqrt{B},U\sqrt{A}\lsp)$ belongs to either  $\Null(A)$ or $\supp(A)$.
\end{lemma}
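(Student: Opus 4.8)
The plan is to exploit the two commutation hypotheses to show that the pencil operator $M_\lambda := \sqrt{B} - \lambda U\sqrt{A}$ is block diagonal with respect to the orthogonal decomposition $\caH = \supp(A) \oplus \Null(A)$, and then to observe that its block on $\Null(A)$ is both independent of $\lambda$ and injective. This reduces the whole statement to a short structural argument plus a separate treatment of the eigenvalue $\infty$.

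First I would record the structural consequences of the hypotheses. Since $B$ commutes with $\Pi_A$, so does $\sqrt{B}$ (being a function of $B$), whence $\sqrt{B} = \Pi_A\sqrt{B}\Pi_A + (\bbone-\Pi_A)\sqrt{B}(\bbone-\Pi_A)$ is block diagonal. Because $\sqrt{A}$ annihilates $\Null(A)$ and has range $\supp(A)$, while $U$ commutes with $\Pi_A$ and therefore maps $\supp(A)$ into itself, the product $U\sqrt{A}$ lives entirely in the $\supp(A)$ sector: $U\sqrt{A} = \Pi_A U\sqrt{A}\Pi_A$, with $(\bbone-\Pi_A)U\sqrt{A} = 0 = U\sqrt{A}(\bbone-\Pi_A)$. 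Consequently $M_\lambda$ is block diagonal for every finite $\lambda$, with $\supp(A)$-block $\Pi_A(\sqrt{B}-\lambda U\sqrt{A})\Pi_A$ and $\Null(A)$-block $(\bbone-\Pi_A)\sqrt{B}(\bbone-\Pi_A)$.

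The crucial observation is that the $\Null(A)$-block equals $\sqrt{B}$ restricted to $\Null(A)$ and does not depend on $\lambda$; its kernel within $\Null(A)$ is $\Null(A)\cap\Null(\sqrt{B}) = \Null(A)\cap\Null(B)$, which is trivial because $A+B$ is nonsingular and $\Null(A+B)=\Null(A)\cap\Null(B)$ for positive operators. Hence, for every finite $\lambda$, the kernel $\Null(M_\lambda)$ meets $\Null(A)$ only at the zero vector; combined with block diagonality this gives $\Null(M_\lambda) = \bigl(\Null(M_\lambda)\cap\supp(A)\bigr) \oplus \bigl(\Null(M_\lambda)\cap\Null(A)\bigr) = \Null(M_\lambda)\cap\supp(A) \subseteq \supp(A)$. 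Thus every eigenstate with a finite eigenvalue lies in $\supp(A)$. For the eigenvalue $\infty$, the eigenspace is by definition $\Null(U\sqrt{A}) = \Null(\sqrt{A}) = \Null(A)$, so the corresponding eigenstates lie in $\Null(A)$; together the two cases establish the claim.

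The only points demanding care are the bookkeeping of the block decomposition—verifying that the off-diagonal blocks of $M_\lambda$ vanish and that $U\sqrt{A}$ is confined to the $\supp(A)$ sector—and the separate handling of the $\infty$ eigenvalue. I do not expect a genuine obstacle: the single idea driving everything is that on $\Null(A)$ the pencil degenerates to the lone operator $\sqrt{B}$, which is injective there precisely because $A+B$ is nonsingular, so the hard part is merely making this reduction rigorous rather than overcoming any real difficulty.
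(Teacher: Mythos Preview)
Your argument is correct and is essentially the paper's own proof recast in block-diagonal language: the paper applies $\bbone-\Pi_A$ to the eigenequation $\sqrt{B}\,|\psi\rangle=\lambda U\sqrt{A}\,|\psi\rangle$ and uses the same commutation facts to kill the right-hand side, concluding $\sqrt{B}\,(\bbone-\Pi_A)|\psi\rangle=0$ and hence $(\bbone-\Pi_A)|\psi\rangle\in\Null(A)\cap\Null(B)=\{0\}$. Your phrasing---that $M_\lambda$ is block diagonal with the $\Null(A)$-block equal to the injective operator $\sqrt{B}\big|_{\Null(A)}$---is the same computation viewed globally rather than vector by vector.
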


When $B$ commutes with~$\Pi_A$, thanks to \lref{lem:AuxLemPolarUcom},  there exists a unitary operator  $U$ on $\caH$ that commutes with $\Pi_A$ and satisfies  $\sqrt{A}\sqrt{B}\lsp U=\sqrt{\sqrt{A}\lsp B\sqrt{A}}$.
Furthermore, the commutativity condition between $B$ and  $\Pi_A$ in \lref{lem:PencilEigWC} can be  relaxed as shown in the following lemma.
\begin{lemma}\label{lem:PencilEigPiC}
	Suppose $\Pi_A$ and $\Pi_B$ commute. Then there exists a unitary operator $U$ on $\caH$ that satisfies  $\sqrt{A}\sqrt{B}\lsp U = \sqrt{\sqrt{A}\lsp B\sqrt{A}}$, and every eigenvector of $(\sqrt{B}, U\sqrt{A}\lsp )$ belongs to either $\Null(A)$ or $\supp(A)$.
\end{lemma}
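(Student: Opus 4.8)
The plan is to use the commuting projectors to decompose $\caH$ orthogonally and then to spend the freedom in the unitary $U$ to decouple $\Null(A)$ from the finite part of the pencil. First I would record the block structure. Since $\Pi_A$ and $\Pi_B$ commute, the four operators $\Pi_A\Pi_B$, $\Pi_A(\bbone-\Pi_B)$, $(\bbone-\Pi_A)\Pi_B$, and $(\bbone-\Pi_A)(\bbone-\Pi_B)$ are mutually orthogonal projectors summing to $\bbone$, so $\caH = S_{11}\oplus S_{10}\oplus S_{01}\oplus S_{00}$, where $S_{11}=\supp(A)\cap\supp(B)$, $S_{10}=\supp(A)\cap\Null(B)$, $S_{01}=\Null(A)\cap\supp(B)$, and $S_{00}=\Null(A)\cap\Null(B)$. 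Because $A+B$ is nonsingular, $S_{00}=\{0\}$; hence $\supp(A)=S_{11}\oplus S_{10}$, $\Null(A)=S_{01}$, $\supp(B)=S_{11}\oplus S_{01}$, $\Null(B)=S_{10}$, and $\dim S_{11}+\dim S_{10}+\dim S_{01}=\dim\caH$. Existence of some unitary $U$ obeying $\sqrt{A}\sqrt{B}\lsp U=\sqrt{\sqrt A B\sqrt A}$ is standard, since the two sides have matching singular values; the point is to pick a good one.

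Next I would pin down the freedom in $U$. Writing $P:=\sqrt{\sqrt A B\sqrt A}$, the adjoint of the polar relation gives $\sqrt B\sqrt A=UP$, so $U$ maps $\range(P)=\supp(P)\le\supp(A)$ onto $\range(\sqrt B\sqrt A)=\sqrt B(\supp A)=\sqrt B(S_{11})$ in a determined way, while its action on $\Null(P)=\supp(P)^\perp$ is an arbitrary unitary onto $\sqrt B(S_{11})^\perp$. I would exploit this to arrange the transversality condition $U(\supp A)\cap\sqrt B(S_{01})=\{0\}$. Since $\sqrt B$ is injective on $\supp(B)\supseteq S_{01}$, one has $\dim U(\supp A)+\dim\sqrt B(S_{01})=(\dim S_{11}+\dim S_{10})+\dim S_{01}=\dim\caH$, so trivial intersection is equivalent to $U(\supp A)\oplus\sqrt B(S_{01})=\caH$, and a dimension count inside $\sqrt B(S_{11})^\perp$ shows this is achievable: it suffices to send the part of $\supp(A)$ lying in $\Null(P)$ onto a complement of the (injective) orthogonal projection of $\sqrt B(S_{01})$ into $\sqrt B(S_{11})^\perp$, which has exactly the matching dimension $\dim S_{10}$.

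With such a $U$ fixed, I would run the eigenvector argument. The cases $\lambda=\infty$ and $\lambda=0$ are immediate: the former forces $v\in\Null(\sqrt A)=\Null(A)$, the latter $v\in\Null(B)\le\supp(A)$. For a finite nonzero eigenvalue $\lambda$ with eigenvector $v$, I would split $v=v_A+v_N$ with $v_A\in\supp(A)$ and $v_N\in\Null(A)$, so that $\sqrt A v=\sqrt A v_A$ and the eigenequation reads $\sqrt B v=\lambda U\sqrt A v_A$. Multiplying by $U^\dagger$ and using $\sqrt B v_A=UP\sqrt{A^+}v_A$ (which follows from $\sqrt B\sqrt A=UP$ together with $\sqrt A\sqrt{A^+}v_A=v_A$) gives $U^\dagger\sqrt B v_N=\lambda\sqrt A v_A-P\sqrt{A^+}v_A\in\supp(A)$, i.e. $\sqrt B v_N\in U(\supp A)$. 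Since also $\sqrt B v_N\in\sqrt B(S_{01})$, the transversality forces $\sqrt B v_N=0$, and injectivity of $\sqrt B$ on $S_{01}\le\supp(B)$ gives $v_N=0$, so $v\in\supp(A)$. By \lref{lem:PencilEigGMsuppA} the eigenvectors already inside $\supp(A)$ coincide with those of $\caM(A^+,B)$, so only this decoupling is genuinely new.

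The main obstacle is precisely this decoupling. Because the hypothesis constrains only $\Pi_A$ and $\Pi_B$, the operator $B$ need not commute with $\Pi_A$, so $\sqrt B$ truly couples $\supp(A)$ and $\Null(A)$ and \lref{lem:PencilEigWC} does not apply directly. The work is therefore in choosing $U$ on $\Null(\sqrt A\sqrt B)$ so that this coupling can never produce a mixed eigenvector, together with the dimension bookkeeping that makes the condition $U(\supp A)\cap\sqrt B(S_{01})=\{0\}$ simultaneously consistent with the polar relation and achievable.
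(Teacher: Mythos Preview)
Your proposal is correct and follows the same overall strategy as the paper: decompose $\caH$ via the commuting projectors, spend the freedom in $U$ on $\Null(\sqrt{A}\sqrt{B})$ to decouple $\Null(A)$ from the finite part of the pencil, and then show that the $\Null(A)$-component of any finite-eigenvalue eigenvector must vanish. The paper executes the middle step constructively: it writes an explicit SVD $\sqrt{A}\sqrt{B}=\sum_i s_i|\alpha_i\rangle\langle\beta_i|$, extends the singular-vector sets to bases adapted to the decomposition $\caH=\Null(A)\oplus\Null(B)\oplus[\supp(A)\cap\supp(B)]$, and sets $U=\sum_i|\beta_i\rangle\langle\alpha_i|$; this achieves the stronger inclusion $U^\dagger\sqrt{B}(\Null A)\subseteq\Null(A)$, so the eigenvector argument becomes a pure orthogonality argument (the two summands of the key identity lie in $\supp(A)$ and $\Null(A)$ respectively). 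You instead isolate the minimal abstract requirement---the transversality $U(\supp A)\cap\sqrt{B}(\Null A)=\{0\}$---and verify it by a dimension count inside $\sqrt{B}(S_{11})^\perp$. Your condition is genuinely weaker (it is implied by the paper's inclusion) and your presentation makes more transparent exactly what property of $U$ is being used, while the paper's route is more explicit and yields a concrete formula for $U$. The eigenvector computation is essentially identical in the two proofs; your displayed identity $U^\dagger\sqrt{B}\,v_N=\lambda\sqrt{A}\,v_A-P\sqrt{A^+}v_A$ is the paper's key equation $0=\sqrt{A}(\caM-\lambda\bbone)|\psi\rangle+U^\dagger\sqrt{B}\,\bPi_A|\psi\rangle$ rearranged.
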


The following lemma is a simple corollary of \lsref{lem:PencilEigGMsuppA} and \ref{lem:PencilEigPiC}.
\begin{lemma}\label{lem:PencilEigGM}
	Suppose $\Pi_A$ and $\Pi_B$ commute. Then there exists a unitary operator $U$ on $\caH$ that satisfies  $\sqrt{A}\sqrt{B}\lsp U = \sqrt{\sqrt{A}\lsp B\sqrt{A}}$, and any nonzero vector in $\caH$ is an eigenstate of $(\sqrt{B},U\sqrt{A}\lsp)$ iff it belongs to $\Null(A)$ or an eigenspace of $\caM(A^+,B)$ within $\supp(A)$. 
\end{lemma}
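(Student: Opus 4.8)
The plan is to treat Lemma~\ref{lem:PencilEigGM} as a direct splicing of the two preceding lemmas, decomposing $\caH$ into $\Null(A)$ and $\supp(A)$ and handling each piece with the lemma tailored to it. First I would invoke Lemma~\ref{lem:PencilEigPiC}: since $\Pi_A$ and $\Pi_B$ commute, it furnishes a unitary $U$ satisfying $\sqrt{A}\sqrt{B}\lsp U = \sqrt{\sqrt{A}\lsp B\sqrt{A}}$ and, crucially, guarantees that every eigenvector of the pencil $(\sqrt{B},U\sqrt{A}\lsp)$ lies in $\Null(A)$ or in $\supp(A)$. This dichotomy is precisely what makes the clean characterization possible, so I would fix this particular $U$ for the remainder of the argument.

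Next I would verify both directions of the stated equivalence. For the ``if'' direction, I would note that any nonzero $v\in\Null(A)$ obeys $U\sqrt{A}\lsp v=0$ (as $U$ is unitary and $\Null(U\sqrt{A}\lsp)=\Null(A)$), so $v$ is an eigenvector of the pencil with eigenvalue $\infty$; and any $v$ lying in an eigenspace of $\caM(A^+,B)$ within $\supp(A)$ is an eigenvector of the pencil by the forward half of Lemma~\ref{lem:PencilEigGMsuppA}. For the ``only if'' direction, I would begin with an arbitrary nonzero eigenvector $v$ of $(\sqrt{B},U\sqrt{A}\lsp)$, apply Lemma~\ref{lem:PencilEigPiC} to place $v$ in $\Null(A)$ or $\supp(A)$, and in the latter case invoke the reverse half of Lemma~\ref{lem:PencilEigGMsuppA} to conclude that $v$ is an eigenvector of $\caM(A^+,B)$ within $\supp(A)$.

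Because $\caM(A^+,B)$ is positive with support contained in $\supp(A)$ by Lemma~\ref{lem:GMsuppNull}, its eigenspaces within $\supp(A)$ are well-defined, so no additional care is needed there. The only mild subtlety---and the closest thing to an obstacle---is keeping the eigenvalue bookkeeping consistent across the two regimes: the $\Null(A)$ part contributes exactly the eigenvalue $\infty$, while Lemma~\ref{lem:PencilEigGMsuppA} matches eigenvalues exactly between the pencil and $\caM(A^+,B)$ on $\supp(A)$. Since the two descriptions are mutually exclusive ($\Null(A)\bot\supp(A)$) and, by Lemma~\ref{lem:PencilEigPiC}, jointly exhaust all eigenvectors, the two lemmas glue together with neither overlap nor gap, yielding the claimed characterization.
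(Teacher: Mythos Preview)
Your proposal is correct and matches the paper's approach exactly: the paper states that Lemma~\ref{lem:PencilEigGM} is a simple corollary of Lemmas~\ref{lem:PencilEigGMsuppA} and~\ref{lem:PencilEigPiC}, and your argument is precisely the splicing of those two lemmas that the paper has in mind.
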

If in addition  $B$ commutes with $\Pi_A$, then there exists  a unitary operator $U$ that commutes with $\Pi_A$ and  satisfies the condition $\sqrt{A}\sqrt{B}\lsp U = \sqrt{\sqrt{A}\lsp B\sqrt{A}}$. Moreover,  such a unitary operator automatically satisfies the requirement in \lref{lem:PencilEigGM} 
 according to \lsref{lem:PencilEigGMsuppA} and \ref{lem:PencilEigWC}.

\subsection{Proofs of \lsref{lem:PencilEigSpace}-\ref{lem:PencilEigPiC}}

\begin{proof}[Proof of \lref{lem:PencilEigSpace}]
	Suppose, by way of contradiction, that $\caV_1$ and $\caV_2$ are two eigenspaces of $(\sqrt{B},U\sqrt{A}\lsp)$ with different eigenvalues $\lambda_1$ and $\lambda_2$ that have a nontrivial intersection. Let $|\psi\>$ be a normalized vector in  $\caV_1\cap\caV_2$. If $\lambda_1=\infty$, then $\lambda_2\neq \infty$ and  
	\begin{equation}
		\begin{aligned}
			|\psi\>&\in \Null\bigl(U\sqrt{A}\lsp\bigr)=\Null\bigl(\sqrt{A}\lsp\bigr)=\Null(A),\\ |\psi\>&\in \Null\bigl(\sqrt{B}-\lambda_2 U\sqrt{A}\lsp\bigr),
		\end{aligned}
	\end{equation}
	which means $|\psi\>\in \Null(A)\cap\Null(B)=\Null(A+B)$, in contradiction with the assumption that $A+B$ is nonsingular. This contradiction shows that $\lambda_1 \neq \infty$ and $\lambda_2\neq\infty$ by symmetry. Consequently,
	\begin{equation}
		|\psi\>\in \Null\bigl(\sqrt{B}-\lambda_1 U\sqrt{A}\lsp\bigr)\cap \Null\bigl(\sqrt{B}-\lambda_2 U\sqrt{A}\lsp\bigr),
	\end{equation}
	which means $|\psi\>\in \Null(\sqrt{A}\lsp)\cap\Null(\sqrt{B}\lsp)=\Null(A+B)$. This contradiction shows that any two distinct eigenspaces of $(\sqrt{B},U\sqrt{A}\lsp)$ have a trivial intersection.

	Next, suppose $P,Q\in\caP(\caH)$ are nonzero and supported in distinct eigenspaces of $(\sqrt{B},U\sqrt{A}\lsp)$. Suppose, by way of contradiction, that $P+Q$ is supported in an eigenspace of $(\sqrt{B},U\sqrt{A}\lsp)$. Then $P$ and $Q$ are supported in the same eigenspace, which contradicts the conclusion proved above. This contradiction completes the proof of \lref{lem:PencilEigSpace}.
\end{proof}

\begin{proof}[Proof of \lref{lem:PencilEigSpacePcom}] If $P$ is supported in $\Null(A)$, then there is nothing to prove, so we assume that $P$  is not supported in $\Null(A)$ in this proof. By assumption   $P$ commutes with $\Pi_A$ and thus can be expressed as $P=P_0+P_1$, where $P_0$ and $P_1\neq 0$ are supported in $\Null(A)$ and $\supp(A)$, respectively, so that $\sqrt{A}\lsp P=\sqrt{A}\lsp P_1\neq 0$. In addition, by assumption $\sqrt{B}\lsp P$ and $U\sqrt{A}\lsp P$ are parallel, which means
	\begin{align}\label{eq:ParallelConProof}
		\sqrt{B}\lsp P=\sqrt{B}\lsp P_0+\sqrt{B}\lsp P_1=\lambda U\sqrt{A}\lsp P =\lambda U\sqrt{A}\lsp P_1,
	\end{align}
	where $\lambda$ is some nonnegative constant. This equation can hold only if $\sqrt{B}\lsp P_0=0$, so $P_0$ is necessarily supported in $\Null(A)\cap\Null(B)=\Null(A+B)$, which means $P_0=0$ given that  $A+B$ is nonsingular by assumption. Therefore,  $P=P_1$ is supported in $\supp(A)$, which completes the proof of \lref{lem:PencilEigSpacePcom}.  
\end{proof}

\begin{proof}[Proof of \lref{lem:PencilEigGMsuppA}]
Without loss of generality, let  $|\psi\>$ be a normalized vector in $\supp(A)$.  Then  $\Pi_A|\psi\>=|\psi\>$, where $\Pi_A$  is the projector onto $\supp(A)$ and can be expressed as $\Pi_A=
	\sqrt{A}\sqrt{A^+}=\sqrt{A^+}\sqrt{A}$. In conjunction with the assumption $\sqrt{A}\sqrt{B}\lsp U = \sqrt{\sqrt{A}\lsp B\sqrt{A}}$ we can deduce that
	\begin{equation}\label{eq:PencilEigSuppProof1}
		\begin{aligned}
			\sqrt{\sqrt{A}\lsp B\sqrt{A}}&=U^\dag\sqrt{B}\sqrt{A},\\
			\sqrt{\sqrt{A}\lsp B\sqrt{A}}\sqrt{A^+}\lsp |\psi\>
			&=U^\dag\sqrt{B}\lsp \Pi_A |\psi\>=U^\dag\sqrt{B}\lsp |\psi\>,\\
			\caM\left(A^+,B\right)|\psi\>&=\sqrt{A^+}\lsp U^\dag\sqrt{B}\lsp |\psi\>, 
		\end{aligned}	
	\end{equation}

	Suppose $|\psi\>$  is an  eigenstate of $(\sqrt{B},U\sqrt{A}\lsp)$ with eigenvalue $\lambda$. Then
	$\lambda\neq \infty$ given that $|\psi\>\in \supp(A)$ by assumption, and we have 
	\begin{equation}\label{eq:ParallelProof1}
		U^\dag\sqrt{B}\lsp |\psi\>=\lambda \sqrt{A}\lsp  |\psi\>.
	\end{equation}
	The above equations together imply that
	\begin{align}
		\caM\left(A^+,B\right)|\psi\>=\lambda \sqrt{A^+}\sqrt{A}\lsp |\psi\>=\lambda |\psi\>.
	\end{align}
	So 	$|\psi\>$  is an  eigenstate of $\caM(A^+,B)$ with eigenvalue $\lambda$.

	Conversely, suppose  $|\psi\>$  is an  eigenstate of $\caM(A^+,B)$ with eigenvalue $\lambda$, then  $0\leq \lambda<\infty$, given that $\caM(A^+,B)$ is a positive operator. In conjunction with \eref{eq:PencilEigSuppProof1} we can deduce that
	\begin{align}
		U^\dag\sqrt{B}\lsp |\psi\>&= \sqrt{A}\sqrt{A^+} \sqrt{\sqrt{A}\lsp B\sqrt{A}}\sqrt{A^+}\lsp |\psi\>\nonumber\\
		&= \sqrt{A}\lsp\caM\left(A^+,B\right) |\psi\>=\lambda  \sqrt{A}\lsp |\psi\>.
	\end{align}
	So  $|\psi\>$  is an  eigenstate of $(\sqrt{B},U\sqrt{A}\lsp)$ with eigenvalue $\lambda$, which  completes the proof of \lref{lem:PencilEigGMsuppA}. 
\end{proof}

\begin{proof}[Proof of \lref{lem:PencilEigWC}]
	Let $\bPi_A=\bbone-\Pi_A$ be the projector onto $\Null(A)$.	By assumption $B$ and $U$ commute with $\Pi_A$ and  $\bPi_A$, so  $\sqrt{B}$ also  commutes with $\Pi_A$ and  $\bPi_A$.
	
	Suppose $|\psi\>$ is a normalized eigenstate of $(\sqrt{B},U\sqrt{A}\lsp)$ with eigenvalue $\lambda$, and let $|\psi_0\>=\bPi_A|\psi\>$, which is not necessarily normalized. If $|\psi\>\in \Null(A)$, then there is nothing to prove, so we assume that $|\psi\>\notin\Null(A)$, which means $\lambda\neq \infty$ and 
	\begin{align}
		\sqrt{B}\lsp |\psi\>=\lambda U\sqrt{A}\lsp |\psi\>.
	\end{align}
	Multiplying both sides of the above equation by the projector $\bPi_A$  yields $\sqrt{B}\lsp |\psi_0\>=0$, which implies that $|\psi_0\>\in \Null(A)\cap\Null(B)=\Null(A+B)$. 
	Since  $A+B$ is nonsingular by assumption, we can conclude that  $|\psi_0\>=0$ and $|\psi\>\in \supp(A)$, which completes the proof of \lref{lem:PencilEigWC}.
\end{proof}

\begin{proof}[Proof of \lref{lem:PencilEigPiC}]
By assumption $A$ and $B$ are positive operators on $\caH$, $A+B$ is nonsingular, and  $\Pi_A$ commutes with $\Pi_B$. If $A$ is nonsingular, then the conclusion in \lref{lem:PencilEigPiC} is obvious. So we assume that $A$ is nonsingular in the following proof. Then $\caH$ can be expressed as follows:
\begin{align}\label{eq:HDecom}
	\caH &=\Null(A)\oplus\supp(A)=\Null(B)\oplus\supp(B)\nonumber\\
	&= \Null(A)\oplus\Null(B)\oplus[\supp(A)\cap\supp(B)],
\end{align}
where  $\Null(A)$, $\Null(B)$, and $\supp(A)\cap\supp(B)$ are mutually orthogonal, and $\Null(B)$ can be omitted if $B$ is nonsingular.  Let $\bPi_A=\bbone-\Pi_A$ be the projector onto $\Null(A)$,  $r=\tr(\Pi_A\Pi_B)=\dim(\supp(A)\cap\supp(B))$ and $\caM=\caM(A^+,B)$.
By virtue of \lref{lem:GIGMequi} we can deduce that  $\caM^+=\caM(B^+,A)$ and
\begin{align}\label{eq:rankComSupp}
	\rank\left(\sqrt{A}\sqrt{B}\lsp\right)&=\rank\left(\caM\right)=\rank(\Pi_A\Pi_B)=r.
\end{align}

Next, we turn to the singular-value decomposition of $\sqrt{A}\sqrt{B}$ in order to find a desired unitary operator $U$ featured in the polar decomposition. In view of \eref{eq:rankComSupp},  $\sqrt{A}\sqrt{B}$ has a singular-value decomposition of the form
\begin{equation}\label{eq:sssrSVD}
\sqrt{A}\sqrt{B} = \sum_{i=1}^{r} s_i|\alpha_i\>\<\beta_i|,
\end{equation}
where $\{s_i\}_{i=1}^r$ is the set of  nonzero singular values, $\{|\alpha_i\>\}_{i=1}^r$ is a set of orthonormal vectors in $\supp(A)$,
and $\{|\beta_i\>\}_{i=1}^r$ is a set of orthonormal vectors in $\supp(B)$. Thanks to \eref{eq:HDecom}, we can extend $\{|\alpha_i\>\}_{i=1}^r$ and $\{|\beta_i\>\}_{i=1}^r$ into orthonormal bases $\{|\alpha_i\>\}_{i=1}^d$ and $\{|\beta_i\>\}_{i=1}^d$ on $\caH$ with the following properties:  
\begin{equation}\label{eq:alphabetaBasis}
\begin{aligned}
	|\alpha_i\>\in\Null(A),\;\; |\beta_i\>\in\supp(B)\;  &\mbox{  if  }\; r<i\le r_B,\\
	|\alpha_i\>\in\supp(A),\;\; |\beta_i\>\in\Null(B) \; &\mbox{  if  } \; r_B< i\le d,
\end{aligned}
\end{equation}
where $r_B=\rank(B)$ and the second case can occur only if $B$ is singular. By construction we can deduce the following results:
\begin{equation}\label{eq:spanssf}
\begin{gathered}
	\supp(B) =\mathrm{span}\left\{|\beta_i\>\right\}_{i=1}^{r_B}= \mathrm{span}\left\{\sqrt{B^+}\lsp|\beta_i\>\right\}_{i=1}^{r_B},\\
	\supp\left(\sqrt{\sqrt{B}A\sqrt{B}}\lsp\right)=\supp\left(\sqrt{B}A\sqrt{B}\lsp \right) =\mathrm{span}\left\{|\beta_i\>\right\}_{i=1}^{r},\\
	\supp(\Pi_A\Pi_B)=\supp(\caM^+)=\mathrm{span}\left\{\sqrt{B^+}\lsp|\beta_i\>\right\}_{i=1}^r,\\
	\Null(A)=\mathrm{span}\left\{\sqrt{B^+}\lsp|\beta_i\>\right\}_{i=r+1}^{r_B}.
\end{gathered}
\end{equation}

Now, a desired unitary operator $U$ that satisfies the condition $\sqrt{A}\sqrt{B}\lsp U = \sqrt{\sqrt{A}\lsp B\sqrt{A}}$ can be constructed as follows:
\begin{equation}
	U=\sum_{i=1}^d|\beta_i\>\<\alpha_i|.
\end{equation}
 Suppose $|\psi\>$ is an eigenvector of $(\sqrt{B},U\sqrt{A}\lsp)$ with eigenvalue $\lambda$. If $\lambda=\infty$, then $|\psi\>\in\Null(A)$ by definition. Otherwise, $|\psi\>\notin\Null(A)$ and  $\left(\sqrt{B}-\lambda U\sqrt{A}\lsp \right)|\psi\> =0$. Consequently,  
\begin{align}
0& = \left(U^\dag\sqrt{B}-\lambda\sqrt{A}\lsp \right)|\psi\>\nonumber\\
&=\left(U^\dag\sqrt{B}\sqrt{A}\sqrt{A^+}-\lambda\sqrt{A}\lsp \right)|\psi\> + U^\dag\sqrt{B}\lsp\bPi_A|\psi\>\nonumber\\
&=\sqrt{A}\lsp \left(\caM-\lambda \bbone\right)|\psi\>+U^\dag\sqrt{B}\lsp\bPi_A|\psi\>,  \label{eq:PencilEigPiCproof}
\end{align}
where the second equality holds because $\bbone=\Pi_A+\bPi_A$ and $\Pi_A=\sqrt{A}\sqrt{A^+}$, and the third equality holds because 
\begin{align}
	&U^\dag\sqrt{B}\sqrt{A}\sqrt{A^+}=\sqrt{\sqrt{A}B\sqrt{A}}\sqrt{A^+}\nonumber\\
	&=\sqrt{A}\sqrt{A^+}\sqrt{\sqrt{B}A\sqrt{B}}\sqrt{A^+}=\sqrt{A}\lsp\caM. 
\end{align}

According to \eref{eq:spanssf}, $\bPi_A|\psi\>$ in \eref{eq:PencilEigPiCproof} can be expressed as follows:
\begin{align}
\bPi_A|\psi\>= \sum_{i=r+1}^{r_B} a_i\sqrt{B^+}\lsp|\beta_i\>,
\end{align}
where $a_i$ are complex coefficients. In conjunction with \eref{eq:alphabetaBasis} we can deduce  that
\begin{align}
U^\dag\sqrt{B}\lsp\bPi_A|\psi\>&=U^\dag\sum_{i=r+1}^{r_B}a_i|\beta_i\>=\sum_{i=r+1}^{r_B}a_i|\alpha_i\>\in\Null(A),
\end{align}
where the first equality holds because $\sqrt{B}\sqrt{B^+}=\Pi_B$ and $|\beta_i\>\in \supp(B)$ for all $i=1,2,\ldots, r_B$. Meanwhile, $\sqrt{A}\lsp \left(\caM-\lambda \bbone\right)|\psi\>\in \supp(A)$. So \eref{eq:PencilEigPiCproof} implies that $U^\dag\sqrt{B}\lsp\bPi_A|\psi\>$, which means $a_i=0$ for $r+1\le i\le r_B$ and $\bPi_A|\psi\>=0$. Therefore, $|\psi\>\in\supp(A)$, which  completes the proof of \lref{lem:PencilEigPiC}.
\end{proof}

\section{\label{app:FoptimalProofs}Proofs of general results on F-optimal measurements}

\subsection{Proof of \pref{pro:CoarseIneq}}
\begin{proof}
Suppose $\scrA=\{A_j\}_j$ and $\scrB=\{B_k\}_k$. 
Denote the probability vectors $\Lambda_\scrA(\rho)$, $\Lambda_\scrA(\sigma)$, $\Lambda_\scrB(\rho)$, $\Lambda_\scrB(\sigma)$ by $\bmp, \bmq, \bmp', \bmq'$, respectively.
By assumption, $\scrA$ is a coarse graining of $\scrB$. So we have
\begin{equation}
	p_j = \sum_k S_{jk}p_k',\quad
	q_j = \sum_k S_{jk} q_k',
\end{equation}
where $S$ is a stochastic matrix. Therefore,
\begin{align}
	&\sqrt{F_\scrA(\rho,\sigma)}=\sum_{j}\sqrt{p_j q_j}= \sum_{j}\sqrt{\sum_kS_{jk}p'_k\sum_k S_{jk}q'_k}\nonumber\\
	&\ge \sum_{j}\sum_k S_{jk}\sqrt{p_k'q_k'}= \sum_k \sqrt{p_k'q_k'}=\sqrt{F_\scrB(\rho,\sigma)},
\end{align}
which confirms \eref{eq:CoarseIneq}. Here the inequality follows from the Cauchy-Schwarz inequality and the third equality holds because $S$ is a stochastic matrix, which means $\sum_{j}S_{jk}=1$ for all $k$. 
\end{proof}

\subsection{Proof of \pref{prop:FoptConParallel}}
\begin{proof}[Proof of \pref{prop:FoptConParallel}]
	By definition, we have
	\begin{align}
		&\sqrt{F_\scrE(\rho,\sigma)}=\sum_m\sqrt{\tr(E_m\rho)\tr(E_m\sigma)}\nonumber\\
		&\ge \sum_m \Re\tr\left[\bigl(U\sqrt{\rho}\sqrt{E_m}\lsp\bigr)\bigl(\sqrt{\sigma}\sqrt{E_m}\lsp\bigr)^\dagger\right]\nonumber\\
		&=\sum_m \Re \tr\left(\sqrt{\rho}\lsp E_m\sqrt{\sigma}\lsp U\right)=\tr\left(\sqrt{\rho}\sqrt{\sigma}\lsp U\right)\nonumber\\
		&=\left\|\sqrt{\rho}\sqrt{\sigma}\lsp\right\|_1=\sqrt{F(\rho,\sigma)}\lsp.
 	\end{align}
	Here the  inequality follows from the Cauchy-Schwarz inequality, and it is saturated iff each POVM element $E_m$ satisfies the following condition:
	\begin{equation}
\sqrt{\rho}\sqrt{E_m} = 0\quad\text{or}\quad  \sqrt{\sigma}\sqrt{E_m}=\kappa_mU\sqrt{\rho}\sqrt{E_m},
	\end{equation}
	where $\kappa_m$ is a nonnegative
	constant, that is, $\sqrt{\sigma}\sqrt{E_m}$ and $U\sqrt{\rho}\sqrt{E_m}$ are parallel. This observation completes the proof of \pref{prop:FoptConParallel}.
\end{proof}

\subsection{Proof of \pref{prop:FoptConPencilEig}}
\begin{proof}[Proof of \pref{prop:FoptConPencilEig}]
Note that the two operators $\sqrt{\sigma}\sqrt{E_m}$ and $U\sqrt{\rho}\sqrt{E_m}$ are parallel iff $E_m$ is supported in an eigenspace of $(\sqrt{\sigma},U\sqrt{\rho}\lsp)$ with a nonnegative eigenvalue. Therefore,  $\scrE$ is F-optimal for distinguishing $\rho$ and $\sigma$ iff each  POVM element $E_m$ is supported in an eigenspace of $(\sqrt{\sigma},U\sqrt{\rho}\lsp)$ with a nonnegative eigenvalue according to \pref{prop:FoptConParallel}. 

Next, suppose $\scrE$ is F-optimal and there exist two POVM elements supported in a same eigenspace of $(\sqrt{\sigma},U\sqrt{\rho})$. Then we can construct a nontrivial coarse graining of $\scrE$ by merging the two POVM elements, and the resulting POVM is still F-optimal according to the conclusion in the previous paragraph, so $\scrE$ is not minimal. 
In other words, if $\scrE$ is F-optimal and minimal, then no two POVM elements of $\scrE$ are supported in a same eigenspace of $(\sqrt{\sigma},U\sqrt{\rho})$.

Conversely, suppose $\scrE$ is F-optimal and no two POVM elements are supported in a same eigenspace of $(\sqrt{\sigma},U\sqrt{\rho})$. Then any nontrivial 
 coarse graining of $\scrE$ will contain a POVM element not supported in any eigenspace of $(\sqrt{\sigma},U\sqrt{\rho})$ by \lref{lem:PencilEigSpace}. Therefore, $\scrE$ is F-optimal and minimal. This observation completes the proof of \pref{prop:FoptConPencilEig}.
\end{proof}

\subsection{Proof of \pref{prop:FoptimalJM}}
\begin{proof}[Proof of \pref{prop:FoptimalJM}]

To prove the first statement in \pref{prop:FoptimalJM}, suppose, by way of contradiction, that $\scrE_1$ and $\scrE_2$ are compatible. Then they admit a common refinement, denoted by $\scrC$ henceforth. Now, the minimal  F-optimal POVMs $\scrE_1$ and $\scrE_2$ are both coarse graining of $\scrC$  and are thus equivalent according to  \pref{prop:FoptConPencilEig}. This contradiction means $\scrE_1$ and $\scrE_2$ cannot be  compatible.

Next, we turn to the second statement in \pref{prop:FoptimalJM}. Let $U$ be any unitary operator on $\caH$ that satisfies $\sqrt{\rho}\sqrt{\sigma}\lsp U = \sqrt{\sqrt{\rho}\lsp\sigma\sqrt{\rho}}$. Then each POVM element in $\scrE(p)$ or $\scrE(q)$ is supported in an eigenspace
of $(\sqrt{\sigma},U\sqrt{\rho}\lsp)$ according to  \pref{prop:FoptConPencilEig}.
Suppose, by way of contradiction, that $\scrE(p)$ and $\scrE(q)$ with $0\leq p<q\leq 1$ are equivalent.
Then $\scrE(p)$ and  $\scrE(q)$ are identical up to relabeling given that they are F-optimal and minimal and are thus automatically simple. 
Therefore, the POVM elements of $\scrE(p)$ and $\scrE(q)$ supported in the same eigenspace must coincide. Accordingly, the POVM elements of $\scrE_1$ and $\scrE_2$ supported in the same eigenspace must coincide, which means $\scrE_1$ and $\scrE_2$ are equivalent. This contradiction shows that $\scrE(p)$ and $\scrE(q)$  are not equivalent and are thus not compatible according to the first statement in \pref{prop:FoptimalJM} proved above. This observation completes the proof of  \pref{prop:FoptimalJM}. 
\end{proof}

\subsection{Proof of  \lref{lem:Parallel}}

\begin{proof}[Proof of \lref{lem:Parallel}] 
	
If  $P$ is supported in an eigenspace of $(\sqrt{\sigma},U\sqrt{\rho}\lsp)$, then $P$ is supported in either  $\Null(\rho)$ or $\supp(\rho)$ by \lref{lem:PencilEigSpacePcom} and, in the later case, 
$P$ is supported in an eigenspace of $\caM(\rho^+,\sigma)$ by \lref {lem:PencilEigGMsuppA}.

Conversely, if $P$ is supported in $\Null(\rho)$, then it is supported in the eigenspace of $(\sqrt{\sigma},U\sqrt{\rho}\lsp)$ with eigenvalue $\infty$. If  $P$ is supported in an eigenspace of $\caM(\rho^+,\sigma)$ within $\supp(\rho)$, then  $P$ is supported in an eigenspace of $(\sqrt{\sigma},U\sqrt{\rho}\lsp)$ thanks to  \lref {lem:PencilEigGMsuppA} again. This observation  completes the proof of \lref{lem:Parallel}. 
\end{proof}

\subsection{Proof of \thref{thm:FoptimalEqui}}

\begin{proof}[Proof of \thref{thm:FoptimalEqui}]
	First, we prove that Statements 1-4 in  \thref{thm:FoptimalEqui} are equivalent. If $\Pi_\rho$ and $\Pi_\sigma$ commute, then, by virtue of \lref{lem:GMsuppNull}, we can deduce that
	\begin{equation}
	\begin{gathered}
	\Null(\rho)\bot\Null(\sigma),\\
	\Null\left(\caM\left(\rho^+,\sigma\right)\right)=\Null(\rho)\oplus\Null(\sigma),\\
	\caM\left(\rho^+,\sigma\right)=\caM\left(\sigma^+,\rho\right)^+.
\end{gathered}
	\end{equation}
so  both PVMs $\scrM(\rho,\sigma)$ and $\scrM(\sigma,\rho)$ are composed of the projectors onto $\Null(\rho)$, $\Null(\sigma)$, and the eigenprojectors of $\caM\left(\rho^+,\sigma\right)$ with positive eigenvalues (any null projector is deleted by default). It follows that $\scrM(\rho,\sigma)$ and $\scrM(\sigma,\rho)$ are commuting, compatible, and equivalent, which confirms the implications $1\imply 2,3,4$. 

By construction $\Pi_\rho$ ($\Pi_\sigma$) is a sum of POVM elements in $\scrM(\rho,\sigma)$ [$\scrM(\sigma,\rho)$]. If $\scrM(\rho,\sigma)$ and $\scrM(\sigma,\rho)$ commute, then $\Pi_\rho$ and $\Pi_\sigma$ necessarily commute, which confirms the implication $2\imply 1$. If instead $\scrM(\rho,\sigma)$ and  $\scrM(\sigma,\rho)$ are equivalent, then both $\Pi_\rho$ and $\Pi_\sigma$
can be expressed as sums of POVM elements in $\scrM(\rho,\sigma)$, so 
$\Pi_\rho$ and $\Pi_\sigma$ commute as before,  which confirms the implication ${4\imply 1}$. In addition, two PVMs are compatible iff they are commuting \cite{Lahti2003coexistence,Guhne2023Compatible}, which confirms the equivalence of Statements~2 and 3. In conjunction with the above discussions we conclude that  Statements 1-4 in \thref{thm:FoptimalEqui} are equivalent.

Next, we will prove that Statement 5 is equivalent to Statements 1-4. Suppose $\scrM(\rho,\sigma)$ is the unique minimal F-optimal POVM. Then all F-optimal POVMs for distinguishing $\rho$ and $\sigma$ are refinements of $\scrM(\rho,\sigma)$, so $\scrM(\rho,\sigma)$ is a coarse graining of and equivalent to $\scrM(\sigma,\rho)$  according to \thref{thm:FoptimalReg}.   This result confirms the implication $5\imply4$, which further means  $5\imply1,2,3,4$ given that Statements 1-4  are equivalent. 
 
Conversely, suppose $\Pi_\rho$ and $\Pi_\sigma$ commute. According to \lref{lem:PencilEigGM}, there exists a unitary operator $U$ that satisfies $\sqrt{\rho}\sqrt{\sigma}\lsp U = \sqrt{\sqrt{\rho}\lsp\sigma\sqrt{\rho}}$, and 
each eigenspace of $(\sqrt{\sigma},U\sqrt{\rho}\lsp)$ coincides with either $\Null(\rho)$ or an eigenspace of $\caM\left(\rho^+,\sigma\right)$ within  $\supp(\rho)$. If $\scrE$ is an  F-optimal POVM for distinguishing $\rho$ and $\sigma$, then, according to \pref{prop:FoptConPencilEig}, each POVM element of $\scrE$  is supported in an eigenspace of $(\sqrt{\sigma},U\sqrt{\rho}\lsp)$, that is, either $\Null(\rho)$ or an eigenspace of $\caM\left(\rho^+,\sigma\right)$ within  $\supp(\rho)$. This is the case iff $\scrE$ is a refinement of $\scrM(\rho,\sigma)$. Therefore, all F-optimal POVMs are refinements of $\scrM(\rho,\sigma)$, and $\scrM(\rho,\sigma)$ is the unique minimal F-optimal POVM, which confirms the implication $1\imply5$ and completes the proof of \thref{thm:FoptimalEqui}.
\end{proof}




\section{\label{app:PureStateProof}Proofs of \psref{pro:FoptimalPure1} and \ref{pro:FoptimalPure2}}

In this section we prove \psref{pro:FoptimalPure1} and \ref{pro:FoptimalPure2}, which focus on F-optimal measurements for distinguishing a pure state and a mixed state.

\begin{proof}[Proof of \pref{pro:FoptimalPure1}]
	By assumption $\sigma$ is a pure state, $\sqrt{\sigma}=\sigma$, and $\rho\sigma\neq 0$. Let $F=\tr(\rho\sigma)$ be the fidelity between $\rho$ and $\sigma$; then $F>0$ and $\sqrt{\sqrt{\rho}\lsp\sigma\sqrt{\rho}} = \sqrt{\rho}\lsp\sigma\sqrt{\rho}/\sqrt{F}$. 
	Therefore, we can find a unitary $U$ on $\caH$ that satisfies 
	\begin{equation}
		\sqrt{\rho}\lsp\sigma  U = \sqrt{\sqrt{\rho}\lsp\sigma\sqrt{\rho}}=\frac{1}{\sqrt{F}}\sqrt{\rho}\lsp\sigma\sqrt{\rho},
	\end{equation}
	which means $U^\dag \sigma = \sqrt{\rho}\lsp\sigma/\sqrt{F}$.  In turn this result implies that
	\begin{align}
		&\Null\left(\sigma-\lambda U\sqrt{\rho}\lsp\right) = \Null\left(U^\dag \sigma-\lambda \sqrt{\rho}\lsp\right)\nonumber\\
		&=\Null\left(\sqrt{\rho}\lsp \sigma-\lambda \sqrt{F}\sqrt{\rho}\lsp\right)=\Null\left(\Pi_\rho\sigma-\lambda \sqrt{F}\lsp\Pi_\rho\right), \label{eq:FoptimalPureProof}
	\end{align}
	where the last equality holds because  $\sqrt{\rho}$ and $\Pi_\rho$ have the same null space (and support).

	Next, according to \pref{prop:FoptConPencilEig}, $\scrE$ is F-optimal iff each POVM element is supported in an eigenspace of $(\sigma,U\sqrt{\rho}\lsp)$ with a nonnegative eigenvalue, which is the case iff each POVM element is supported in an eigenspace of $(\Pi_\rho\sigma, \Pi_\rho)$ with a nonnegative eigenvalue thanks to \eref{eq:FoptimalPureProof}. This result completes the proof of \pref{pro:FoptimalPure1}.
\end{proof}

\begin{proof}[Proof of \pref{pro:FoptimalPure2}]
 \Pref{pro:FoptimalPure2}  is a simple corollary of \pref{pro:FoptimalPure1} proved above  and \lref{lem:PureEigenS} below.
\end{proof}

\begin{lemma}\label{lem:PureEigenS}
	Suppose $\dim \caH = 2$, and $\rho, \sigma$ are two pure states on $\caH$  with non-commuting density operators. 
	Let $A$ and $B$ be the representative points of $\rho$ and $\sigma$ on the Bloch sphere and let  $M$ and $N$ be their antipodal points. Then a pure state is an eigenstate of $(\rho\sigma,\rho)$ with a nonnegative eigenvalue iff its representative point lies on the major arc $\wideparen{NABM}$ that connects $M$ and~$N$.
\end{lemma}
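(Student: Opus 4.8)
The plan is to work directly with state vectors rather than with a characteristic equation. Write $\rho=|\psi\>\<\psi|$ and $\sigma=|\phi\>\<\phi|$, where $A$ and $B$ are the Bloch points of $|\psi\>$ and $|\phi\>$; the non-commutativity assumption guarantees $0<|\<\psi|\phi\>|<1$, so in particular $\rho\sigma\neq 0$. Since $\rho$ has rank one, $\Null(\rho)=\mathrm{span}(|\psi^\perp\>)$, whose Bloch point is the antipode $M$ of $A$; this is the eigenspace of $(\rho\sigma,\rho)$ with eigenvalue $\infty$, which is nonnegative by the stated convention, so $M$ belongs to the claimed arc. For finite eigenvalues I would factor
\begin{equation}
\rho\sigma-\lambda\rho=|\psi\>\bigl(\<\psi|\phi\>\<\phi|-\lambda\<\psi|\bigr),
\end{equation}
so that a pure state $|v\>$ with $\<\psi|v\>\neq0$ is an eigenvector with eigenvalue $\lambda=\<\psi|\phi\>\,\<\phi|v\>/\<\psi|v\>$. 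The task thus reduces to deciding, for each $|v\>$, whether this scalar is real and nonnegative, the single exceptional vector $|v\>=|\psi^\perp\>=M$ having already been handled.

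Next I would fix convenient coordinates: put $A$ at the north pole, $|\psi\>=|0\>$, and rotate about the $z$-axis so that $B$ lies on the $xz$-meridian, $|\phi\>=\cos(\theta/2)|0\>+\sin(\theta/2)|1\>$ with $\theta\in(0,\pi)$. Writing a general pure state as $|v\>=\cos(\alpha/2)|0\>+e^{i\beta}\sin(\alpha/2)|1\>$, one finds that $\<\phi|v\>$ carries an $e^{i\beta}$ term, and since $\sin(\theta/2)\neq0$ the eigenvalue is real iff either $|v\>=|0\>$ (the point $A$) or $\sin\beta=0$. Hence every eigenvector with a real eigenvalue lies on the great circle through $A$ and $B$, equivalently through $M$ and $N$ — this is the geometric content that only the $A$--$B$ meridian is relevant.

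Finally I would evaluate $\lambda$ along this great circle. On the $\beta=0$ half, using $\<\phi|v\>=\cos((\alpha-\theta)/2)$ gives $\lambda=\cos(\theta/2)\cos((\alpha-\theta)/2)/\cos(\alpha/2)$, which stays nonnegative for all $\alpha\in[0,\pi)$; together with $M$ at $\alpha=\pi$ this yields the whole sub-arc $\wideparen{ABM}$. On the $\beta=\pi$ half, using $\<\phi|v\>=\cos((\alpha+\theta)/2)$ gives $\lambda=\cos(\theta/2)\cos((\alpha+\theta)/2)/\cos(\alpha/2)$, which is nonnegative precisely for $\alpha\le\pi-\theta$, i.e.\ from $A$ up to the antipode $N$ of $B$ (where $\<\phi|v\>=0$ and $\lambda=0$) and negative beyond $N$. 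Splicing the two halves produces exactly the major arc $\wideparen{NABM}$, with $N$ (eigenvalue $0$) and $M$ (eigenvalue $\infty$) as its endpoints.

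The main obstacle is conceptual rather than computational: because $\rho\sigma-\lambda\rho$ has rank at most one, it is singular for every complex $\lambda$, so formally every number is an eigenvalue of the pencil and the usual characteristic-polynomial viewpoint is vacuous. The real work therefore lies in tracking each eigenvector together with the eigenvalue it \emph{induces} through the quotient above, and in imposing reality (which forces the great circle) and then nonnegativity (which severs the circle at $N$ and $M$). Some care is needed at the boundary cases $|v\>=|\psi^\perp\>=M$, treated via the $\infty$ eigenvalue, and at $N$, where the hypothesis $\theta\in(0,\pi)$ ensures $N\neq M$ and $\<\psi|N\>\neq0$, so that the eigenvalue there is genuinely the finite value $0$.
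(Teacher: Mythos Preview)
Your proof is correct and uses the same coordinate setup as the paper, but the argument runs in the opposite direction. The paper fixes $\lambda\in[0,\infty)$ and observes that the eigenvector with that eigenvalue is orthogonal to $|\varphi\>\<\varphi|\psi\>-\lambda|\psi\>$; as $\lambda$ sweeps from $0$ to $\infty$, the Bloch point of this auxiliary vector traces the major arc $\wideparen{BMNA}$, so its orthogonal complement traces $\wideparen{NABM}$. You instead parametrize the candidate eigenvector $|v\>$ by its Bloch angles $(\alpha,\beta)$, compute the induced eigenvalue $\lambda=\<\psi|\phi\>\<\phi|v\>/\<\psi|v\>$, and then impose reality (forcing $\sin\beta=0$, hence the $A$--$B$ great circle) and nonnegativity (severing at $N$ and $M$). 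Your route is more explicit and makes the reality constraint visible, whereas the paper's antipodal-trace argument is shorter but leaves the exclusion of complex eigenvalues implicit; both are equally valid ways of realizing the same eigenvector--eigenvalue correspondence.
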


\begin{proof}[Proof of \lref{lem:PureEigenS}]
	Without loss of generality, $\rho$ and $\sigma$ can be expressed as  $\rho=|\psi\>\<\psi|$ and $\sigma=|\varphi\>\<\varphi|$, where
	\begin{equation}\label{eq:rhosigmaPara}
		|\psi\> = |0\>,\quad|\varphi\> = \cos\frac{\theta}{2} |0\> + \sin\frac{\theta}{2}|1\>,\quad \theta\in(0,\pi);
	\end{equation}
	then $\<\psi|\varphi\>=\cos(\theta/2)>0$. Now, it is clear that any complex number $\lambda$ (including $\infty$) is a nondegenerate eigenvalue of $(\rho\sigma,\rho)$, and any pure state is an eigenstate of $(\rho\sigma,\rho)$. When $\lambda\neq\infty$, the  eigenstate associated with eigenvalue $\lambda$  spans $\Null(\rho\sigma-\lambda\rho)$ and is orthogonal to the unnormalized state $|\varphi\>\<\varphi| \psi\>-\lambda|\psi\>$. The eigenstate of $(\rho\sigma,\rho)$ corresponding to the eigenvalue $\infty$ spans  $\Null(\rho)$ and is represented by $M$ on the Bloch sphere (the antipodal point of the representative point of $\rho$).

	Next, suppose $\lambda$ is a nonnegative. As $\lambda$ increases from 0 to $\infty$, the representative point of  $|\varphi\>\<\varphi| \psi\>-\lambda|\psi\>$ (after normalization) traces the major arc $\wideparen{BMNA}$. Accordingly, the representative point of the normalized eigenstate of $(\rho\sigma,\rho)$ with eigenvalue $\lambda$ traces the major arc $\wideparen{NABM}$. This observation completes the proof of \lref{lem:PureEigenS}. 
\end{proof}

\end{document}